\newtheorem{theorem}{Theorem}[chapter]
\newtheorem{lemma}{Lemma}[chapter]
\newtheorem{definition}{Definition}[chapter]
\newtheorem{criterion}{Criterion}[chapter]
\newtheorem{fact}{Fact}[chapter]
\newcommand\scalemath[2]{\scalebox{#1}{\mbox{\ensuremath{\displaystyle #2}}}}
\title{Quantum correlations in composite systems \\
 under global unitary operations}
\author{Joanna Luc} 
 \date{\today}
\begin{document}
\bibliographystyle{apalike} 

\thispagestyle{empty}
\begin{titlepage}
    \begin{center}

           \Large
	\textbf{Jagiellonian University in Kraków}\vspace{0.2cm}\\
        Faculty of Physics, Astronomy and Applied Computer Science
               \vspace*{1cm}
               
         \vspace{3cm}
         \Large
          \textbf{Joanna Luc}\\\vspace{0.5cm}
         \normalsize Student number: 1053613\\
             \vspace{2cm}
        \Huge
        \textbf{Quantum correlations \\
in composite systems \\ 
under global unitary operations}
      
        \vspace{1.5cm}
        \normalsize
        Master's thesis\\
        in Theoretical Physics\\ \vspace{0.15cm}
        
        \vfill
        \vspace{2cm}
       \begin{minipage}{1\textwidth}
\begin{flushright}
Thesis written under the supervision of\\
prof. dr hab. Karol Życzkowski\\
Institute of Physics, Atomic Optics Department
\end{flushright}
\end{minipage}
        
        \vspace{2cm}
        \begin{center}
      Kraków 2018
        \end{center}
    \end{center}
\end{titlepage}

\tableofcontents

\chapter{Introduction}

Quantum information science investigates quantum effects which can be used for some computational task that is classically impossible or has greater computational complexity. In order to realise such practical goals some more theoretical knowledge is needed, concerning the elements of quantum algorithms, that is, quantum states and operators acting on them. In many cases investigations of this type have theoretical or even foundational importance in themselves. One of such theoretical issues concerns specifically quantum properties of physical systems, especially correlations in bipartite and multipartite states. The topic is very broad: one can investigate relations between these properties, their mathematical measures, practical methods of checking whether a given state has a particular property, and the behaviour of these properties under quantum operations (like unitary evolution or local filtering). This thesis investigates fragments of this wide topic. Its aim is first to give a general picture of properties of quantum states that are analysed in the literature and relations between them. Its more specific goal is to investigate the behaviour of these properties under unitary operations, both global and local. The presented results concern only bipartite states and often their validity is even more restricted, to two-qubit states.

The thesis is organised as follows. In chapter 2 some basic concepts of quantum mechanics and quantum information will be reviewed; its aim is also to fix the notation for the rest of the work.
In chapter 3 properties of quantum states analysed in the literature will be identified. I chapter 4 relations between them will be analysed: whether possessing of one property by a given state implies possessing another property by the same state. In chapter 5 the central notion of the thesis will be introduced, namely possessing a given property in absolute vs. non-absolute way. A property is possessed by a given state in an absolute way if it is preserved under arbitrary unitary operation on this state. I will apply this notion to all properties listed in chapter 3. Two main questions asked here are as follows: Can a given property be possessed both in an absolute way and in a non-absolute way (by different classes of states)? If yes, what are necessary and sufficient conditions for a state possessing a given property in an absolute way?

Most of the thesis is a review of the results already existing in the literature. The original results of this work are contained in Sections \ref{absolute-discord}, \ref{absolute-super-discord}, \ref{absolute-noncontextuality} and Section \ref{sec-relations-absolute}. More specifically, the results contained in Theorems: \ref{thm-absolute-zero-discord}, \ref{thm-absolute-cc-cq-qc}, \ref{thm-absolute-zero-super-discord}, \ref{thm-absolute-product}, \ref{thm:absolute-contextuality} and \ref{relationAB} were obtained by the author and are believed to be new.

The results presented here are of both theoretical and practical interest. From a theoretical point of view, they give us a better understanding of the nature of crucial properties of quantum states, by revealing their relativity (in some cases) to the choice of basis in the Hilbert space. From a practical point of view, unitary operations can be used to transform a given state that lacks a property which is needed for a certain quantum computational task to another state that possesses this property. Therefore, it is important to know the class of states that allow for such a transformation in order to improve our capability of obtaining states useful for a certain quantum computational task from useless ones.

\chapter{Framework}

\section{Basics of Quantum Mechanics}

\subsection{Quantum states}

The basic mathematical object used in quantum mechanics is the $n$-dimensional \textbf{Hilbert space} $\mathcal{H}_n$. In this thesis I will restrict to $n < \infty$ (and usually $n = 4$). Elements of such spaces will be denoted by $\ket{\psi}$ or $\ket{\phi}$, possibly with some indices. For a given Hilbert space $\mathcal{H}$, we denote by $\tilde{\mathcal{H}}$ the space of all operators on $\mathcal{H}$.
Physical objects are described by \textbf{density operators} $\rho \in \tilde{\mathcal{H}}$, $\rho : \mathcal{H} \to \mathcal{H}$ that are hermitian ($\rho^\dagger = \rho$), positive-definite and have unit trace. In the finite cases such operators can be represented by density matrices (denoted by the same symbol). A state is called \textbf{pure} if there exist $\ket{\psi} \in \mathcal{H}$ such that $\rho = \ket{\psi} \bra{\psi}$. Otherwise it is called \textbf{mixed}. All mixed states can be represented as a sum of pure states $\rho = \sum_i p_i \ket{\psi_i} \bra{\psi_i}$, where $0 \leq p_i \leq1$, $\sum_i p_i = 1$. 

A physical system can be \textbf{composed} of two or more \textbf{subsystems}. If this is the case, all the subsystems are described by density operators defined on Hilbert spaces of an appropriate dimension and the whole system is described by a density operator defined on the Hilbert space that is tensor product of all those Hilbert spaces. For example, let us consider a system $\rho^{AB}$ composed of two subsystems $\rho^A$ and $\rho^B$, one of dimension $n$ ($\rho^A : \mathcal{H}_n \to \mathcal{H}_n$) and one of dimension $m$ ($\rho^B : \mathcal{H}_m \to \mathcal{H}_m$). Then a composite system is $\rho^{AB} : \mathcal{H}_n \otimes \mathcal{H}_m \to \mathcal{H}_n \otimes \mathcal{H}_m$. The subsystems can be obtained from a given composite system by taking \textbf{partial trace} of the respective density matrix: $\rho^A = \text{Tr}_B \rho^{AB}$, $\rho^B = \text{Tr}_A \rho^{AB}$. Operation of partial trace is defined by two requirements: for basis states it is $\text{Tr}_B (\ket{\psi_1} \bra{\psi_2} \otimes \ket{\phi_1} \bra{\phi_2}) = \ket{\psi_1} \bra{\psi_2} \text{Tr} (\ket{\phi_1} \bra{\phi_2})$, where $\ket{\psi_i}$ and $\ket{\phi_i}$ are bases of the subsystems; for other cases it is obtained by using the previous formula and linearity.

If the dimension of the Hilbert space is 2, we say of \textbf{one-qubit systems}: $\rho : \mathcal{H}_2 \to \mathcal{H}_2$. The name comes from their use in quantum information, where they are regarded as quantum analogs of classical bits (the smallest units of information). The simplest composite systems are \textbf{two-qubit systems}: $\rho: \mathcal{H}_2 \otimes \mathcal{H}_2 \to \mathcal{H}_2 \otimes \mathcal{H}_2$. The basis for the Hilbert space of one-qubit states (so called computational basis) is denoted by $\{ \ket{0}, \ket{1} \}$. For the Hilbert space of two-qubit states the analogous basis is $\{ \ket{00} = \ket{0} \otimes \ket{0}$, $\ket{01} = \ket{0} \otimes \ket{1}$, $\ket{10} = \ket{1} \otimes \ket{0}$, $\ket{11} = \ket{1} \otimes \ket{1} \}$. Most of the results available in the quantum information literature concern two-qubit systems, because of their simplicity on the one hand (for higher dimension calculations usually become much more complicated), and non-triviality on the other (computational gains require some quantum correlations, which are present only in composite systems).

\subsection{Properties of quantum states}

Quantum states differ with each other in many ways and there are many formal tools to describe these differences. In this work the following properties will be considered: nonlocality/locality, steerability/unsteerability, entanglement/separability, PPT (Positive Partial Transpose) property, negative/non-negative conditional entropy, non-zero/zero quantum discord (connected with the properties of being classical-classical, quantum-classical and classical-quantum state), non-zero/zero quantum super discord (connected with the property of being a product state) and contextuality/noncontextuality. In each of these pairs one property is the opposite of the other, so a given state possesses exactly one property from each pair.

\subsection{Operations on quantum states}

Quantum systems can evolve in time and this evolution is described by various \textbf{quantum operations}. Let us denote the initial system by $\rho$ and the final system after the operation by $\mathcal{E} (\rho)$, so the operation itself is denoted by $\mathcal{E}$. There are three almost equivalent approaches to define quantum operations \cite[ch. 8.2]{nielsen-chuang}:

\begin{enumerate}
\item \textbf{System coupled to environment}: 

We assume that the entire physical system can be divided into principal system $\rho$, in which we are interested, and the environment $\rho_{\text{env}}$. We also assume that initially the entire system is in a product state: $\rho \otimes \rho_{\text{env}}$. The system as a whole is isolated, so its evolution is described by a unitary operator $U$. However, the principal system can interact with its environment, so its evolution is not necessarily unitary. We are interested in the final state of the principal system only, so the environment should be traced out and we obtain $\mathcal{E} (\rho)= \text{Tr}_{\text{env}} \left[ U (\rho \otimes \rho_{\text{env}}) U^\dagger \right]$.

\item \textbf{Operator sum representation}: 

Every quantum operation $\mathcal{E}$ can be represented as a sum $\mathcal{E} = \sum_{k} E_k \rho E_k^\dagger$, where Kraus operators $E_k$ satisfy the relation $\sum_k E_k^\dagger E_k \leq \mathds{1}$, which follows from the requirement that the trace cannot increase, Tr $\mathcal{E} (\rho) \leq 1$. 

\item \textbf{Physically motivated axioms}: 

Quantum operation $\mathcal{E}$ is a map from the space of density operators of the system $Q_1$ to the space of density operators of the system $Q_2$ that satisfies the following three conditions:
\begin{itemize}
\item $\mathcal{E}$ is a trace-non-increasing map: for every density operator $\rho$, $0 \leq \text{Tr} \mathcal{E} (\rho) \leq 1$,
\item $\mathcal{E}$ is a convex-linear map: for probabilities $\{ p_i \}$ and density operators $\rho_i$, $\mathcal{E} \left( \sum_i p_i \rho_i \right) = \sum_i p_i \mathcal{E} (\rho_i)$,
\item $\mathcal{E}$ is a completely positive map: for any density operator $\rho$, $\mathcal{E}(\rho)$ is positive and also for any auxiliary system $R$, $(\mathds{1}_R \otimes \mathcal{E}) (\rho')$ is positive, where $\mathds{1}_R$ is an identity operator on the system $R$ and $\rho'$ is a state of a joint system $Q_1 R$. 
\end{itemize}
\end{enumerate}

The three approaches are not fully equivalent, because the second and the third one allow for non-trace-preserving operations, for which $0\leq$ Tr $\mathcal{E} (\rho)< 1$, whereas the first approach allows only for trace-preserving operations, for which Tr $\mathcal{E} (\rho)= 1$. From the physical point of view, non-trace-preserving operations are needed for the description of measurement.

%\begin{theorem} [\citealp{nielsen-chuang}]
%The above three approaches to define quantum operations (system coupled to environment, operator sum representation and physically motivated axioms) are equivalent.
%\end{theorem}

%\begin{proof}
%\end{proof}

\subsection{Unitary operations}

One of the most important classes of quantum operators are unitary operators, satisfying the condition of unitarity $U U^\dagger = U^\dagger U = \mathds{1}$, where $\mathds{1}$ is an identity operator. They describe time evolution of isolated quantum systems. On the other hand, they can be viewed as a change of a basis. Call these two views 'dynamic' and 'static' interpretation of unitary operations, respectively. They are not competitive: this distinction means only that two physically different operations (change in time, which is physically real and change of a basis, which is only a formal manipulation) are represented by the same mathematical operation. The choice of an interpretation depends on a situation that is analysed.

For composite systems one can distinguish between local and global unitary operations. \textbf{Local unitary operations} have a form $U = U_A \otimes U_B$, where $U_A$ and $U_B$ are unitary operators that act independently in each subsystem. Such operations cannot change correlations between subsystems. \textbf{Global unitary operations} do not have this form and therefore intertwine both subsystems, possibly changing correlations between them.

\subsection{Measurement}

Another important type of quantum operations is quantum measurement. There are different types of measurements that may be performed on quantum states, including the von Neumann measurement, the generalised measurement and the weak measurement, defined below. For the details of the first two see e.g. \citealp{nielsen-chuang}.

\begin{definition}  \label{von-Neumann-measurement}
\textbf{Projective measurement} (von Neumann measurement) is described by an observable, $M$, which is a hermitian operator on the state space of the observed system. The observable has a spectral decomposition
\begin{equation}
M = \sum_{m} m P_m,
\end{equation}
where $P_m$ is the projector onto the eigenspace of $M$ with eigenvalue $m$. Projectors satisfy $\sum_m P_m = \mathds{1}$ and $P_m P_{m'} = \delta_{m m'}$. 
\end{definition}

The system that before the projective measurement was in a state $\ket{\psi}$, after the measurement changes its state to
\begin{equation}
\ket{\psi'} = \frac{P_m \ket{\psi}}{\sqrt{p(m)}}
\end{equation}

\begin{definition}  [\citealp{nielsen-chuang}]
\textbf{Generalised measurement} (POVM -- Positive Operator Valued Measure) is described by a collection of measurement operators $M_m$ that satisfy $\sum_m M^{\dagger}_m M_m = \mathds{1}$ (but  are not necessarily projectors). 
\end{definition}

A special case of the generalised measurement is the weak measurement, first introduced in the paper \citealp{aharonov-1988}. With a view to its application in context of quantum super discord, instead of the original definition I will use the following one \cite {oreshkov-2005}:
\begin{definition} \label{weak-measurement}
Weak measurement is given by a pair of operators: 
\end{definition}
\begin{equation}
P (\xi) = \sqrt{\frac{1 - \text{tanh} \xi}{2}} \Pi_1 + \sqrt{\frac{1 + \text{tanh} \xi}{2}} \Pi_2 ,
\end{equation}
\begin{equation}
P (-\xi) = \sqrt{\frac{1 + \text{tanh} \xi}{2}} \Pi_1 + \sqrt{\frac{1 - \text{tanh} \xi}{2}} \Pi_2 ,
\end{equation}
\textit{where $\Pi_1$ and $\Pi_2$ are two orthogonal projectors satisfying $\Pi_1 + \Pi_1 = \mathds{1}$ and $|\xi| \ll 1$ is the strength of the measurement.}

\medskip

The presented definition of weak measurement is not the most general one but is in some sense universal: it can be proven that any projective measurement and any generalised measurement can be decomposed into a sequence of measurements of this type \cite {oreshkov-2005}. In general one can consider the broader range of $x$, namely $x \in [0, \infty )$. Such operators have the following properties \cite{super-discord-1}:
\begin{itemize}
\item $P^\dagger (\xi) P (\xi) + P^\dagger (-\xi) P(-\xi) = \mathds{1}$,
\item $P (0) = \frac{1}{\sqrt{2}} \mathds{1}$, so that for $\xi=0$ weak measurement does not change a quantum state at all,
\item $\text{lim}_{\xi \to \infty} P(-\xi) = \Pi_1$, $\text{lim}_{\xi \to \infty} P(\xi) = \Pi_2$, hence in the limit $\xi \to \infty$ this operation becomes the projective measurement.
\end{itemize}

The physical sense of the von Neumann measurement is relatively clear: it 'detects' the value of a given physical quantity (observable) of a quantum system in a given state and projects this state into eigenstate of this observable associated with the measured value. This 'detecting' is usually not deterministic in the sense that more than one value can be obtained in a given measurement with non-zero probability; but these values and probabilities are uniquely determined by the measured quantity and the state of the system. 
Generalised measurements (excluding von Neumann measurements, which are a special case) do not reveal these values and therefore do not give us precise information about a physical system. They are a mathematical representation of detectors with non-ideal efficiency, measurement outcomes that include additional randomness, measurements that give incomplete information, etc. The same is true for weak measurements. The paper which introduces weak measurements is in this context symptomatic: it proves that using this special type of measurements one can obtain arbitrarily large outcome when measuring a component of the spin of a spin$-\frac{1}{2}$ particle. The motivation for weak measurement will be explained in Section \ref{sec:super-quantum-discord}, devoted to super quantum discord, definition of which uses this notion.

One can also define a measurement performed on a single part of a composite system:
\begin{definition} \label{local-von-Neumann-measurement}
For a bipartite state $\rho^{AB}$, a \textbf{local von Neumann measurement} on a subsystem $A$ is a family of
one-dimensional orthogonal projections on the space of subsystem $A$, $\{ \Pi^A_i \}$, such that $\sum_i \Pi^A_i = \mathds{1}^A$. 
\end{definition}

\begin{definition} \label{local-von-Neumann-measurement}
For a bipartite state $\rho^{AB}$, a \textbf{local weak measurement} on a subsystem $A$ is a pair of weak measurement operators acting on a subsystem $A$, $\{ P^A (x), P^A (-x) \}$.
\end{definition}

Needless to say, analogous definitions can be formulated for a subsystem $B$.

\section{Entropy}

\subsection{Classical entropy of a probability vector} \label{sec:classical-entropy}

Classical probability of some random variable $A$ can be described by a vector of probabilities $p(a)$ of obtaining the particular values $a$ of this variable. For such vectors we can define entropy and some derivative notions, which measure its information content:
\begin{itemize}
\item Shannon entropy: $H(A) = - \sum_{a} p(a) \log p(a)$,
\item Joint entropy: $H (A, B) = - \sum_{a, b} p(a, b) \log p(a, b)$,
\item Conditional entropy: $H (A | B) = - \sum_{a, b} p(a, b) \log p(a | b)$,
\item Mutual information: $I(A:B) = H(A) + H(B) - H(A, B)$.
\end{itemize}

Everywhere in this thesis by '$\log$' we understand logarithm to base 2. One can show the following relations between conditional entropy on the one hand, and the joint entropy and entropies of the random variables considered separately on the other: 
\begin{equation} \label{eq:classical-conditional-entropy-relation1} H(A |B) = H (A, B) - H(B), \end{equation}
\begin{equation} \label{eq:classical-conditional-entropy-relation2} H (B | A) = H(A, B) - H(A). \end{equation}

Another important fact about classical conditional entropies are its bounds, especially its lower bound:
\begin{equation} 0 \leq H(A|B) \leq H(A). \end{equation}

\subsection{Entropy of quantum states} \label{sec:quantum-entropy}

Quantum probability is encoded in density operators describing quantum systems. It is possible to define quantum analogons of classical entropies presented in the section \ref{sec:classical-entropy} (see e.g. \citealp{nielsen-chuang}, ch. 11). The analogy is rather straightforward, with the exception of conditional entropy, for which the definition is based on the relations \eqref{eq:classical-conditional-entropy-relation1} and \eqref{eq:classical-conditional-entropy-relation2}. The analogon of Shannon entropy concerning a quantum state $\rho$ is called von Neumann entropy $S(\rho)$. The derivative notions have the same names as in the classical case. Therefore, we obtain the following list of quantum entropies:

\begin{itemize}
\item von Neumann entropy: $S (\rho) = - \text{Tr} (\rho \log \rho)$,
\item Joint entropy: $S(\rho^{AB}) = - \text{Tr} (\rho^{AB} \log \rho^{AB})$,
\item Conditional entropy: $S(\rho^A | \rho^B) = S(\rho^{AB}) - S(\rho^B)$,
\item Mutual information: $I(\rho^A : \rho^B) = S(\rho^A) + S(\rho^B) - S(\rho^{AB})$.
\end{itemize}

Let us elaborate more on what information is provided by these quantities.
\begin{itemize}
\item Von Neumann entropy of a state describes the minimal amount of
information necessary to fully specify this state.
\item The joint entropy is the entropy of the entire system.
\item The conditional entropy between two systems is the entropy of the system
  minus any information gained from the other system from the
  correlations they share. 
\item Quantum mutual information describes amount of
information contained in the joint state that exceeds the information
locally available to $A$ and $B$. It is interpreted as a measure of total correlations in a state
$\rho^{AB}$, both quantum and classical. Two systems are correlated
if together they contain more information than taken separately. 
\end{itemize}

Quantum entropies can be better understood when we look at the picture illustrating relations between them, so called Venn diagram for the entropies:

\begin{figure}[H]
\centering
\includegraphics[scale=0.3]{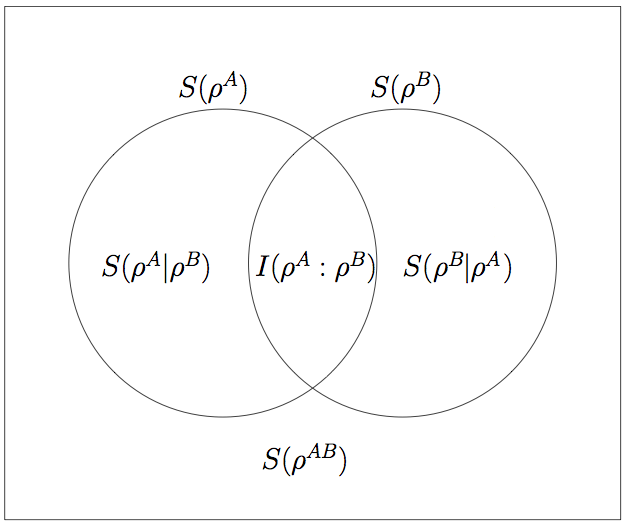}
\caption{Venn diagram for quantum entropies. Source: \citealp{vedral2006}.}
\end{figure}

There is a well-known theorem that describes the relation between von Neumann entropy and its classical counterpart, Shannon entropy:

\begin{theorem} [\citealp{von-neumann-1927}] \label{vonneumann-shannon}
For any quantum state $\rho$ with spectrum $\vec{d} = d_1, \ldots, d_n$, its von Neumann entropy is equal to the Shannon entropy of the spectrum, $S (\rho) = H(\vec{d})$.
\end{theorem}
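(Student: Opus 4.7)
The plan is to use the spectral theorem to reduce the trace expression for $S(\rho)$ to a sum over eigenvalues that literally matches the Shannon entropy of the spectrum. Since $\rho$ is a density operator it is Hermitian and positive semi-definite, so by the spectral theorem there is an orthonormal eigenbasis $\{|e_i\rangle\}$ such that
\begin{equation}
\rho = \sum_{i=1}^{n} d_i \, |e_i\rangle\langle e_i|,
\end{equation}
with $d_i \geq 0$ and $\sum_i d_i = 1$. The matrix function $\log \rho$ is then defined through the same diagonalisation, namely $\log \rho = \sum_i (\log d_i) \, |e_i\rangle\langle e_i|$, with the standard convention $0 \log 0 = 0$ covering any vanishing eigenvalues.

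Next I would compute $\rho \log \rho$ directly: using the orthonormality $\langle e_i | e_j \rangle = \delta_{ij}$, the product collapses to $\rho \log \rho = \sum_i d_i (\log d_i) \, |e_i\rangle\langle e_i|$. Taking the trace in the eigenbasis $\{|e_i\rangle\}$ yields $\mathrm{Tr}(\rho \log \rho) = \sum_i d_i \log d_i$, and therefore
\begin{equation}
S(\rho) \;=\; -\mathrm{Tr}(\rho \log \rho) \;=\; -\sum_{i=1}^{n} d_i \log d_i \;=\; H(\vec{d}),
\end{equation}
which is exactly the Shannon entropy of the spectrum.

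There is really no hard step here; the only thing to be careful about is the handling of zero eigenvalues, where $\log d_i$ is not defined but the $0 \log 0 = 0$ convention (justified by $\lim_{x \to 0^+} x \log x = 0$) makes the argument go through without modification. One could alternatively invoke basis-independence of the trace and the fact that $-x \log x$ is a continuous function on $[0,1]$, so that $\mathrm{Tr}\, f(\rho) = \sum_i f(d_i)$ for any such $f$; applied to $f(x) = -x \log x$ this gives the claim immediately.
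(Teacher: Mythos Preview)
Your proof is correct and is exactly the standard argument via the spectral decomposition. The paper itself does not supply a proof of this theorem at all --- it merely states it as a well-known result and cites von Neumann (1927) --- so there is nothing to compare against; your write-up would serve perfectly well as the missing proof.
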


%\begin{proof}
%
%\end{proof}

In contrast to its classical counterpart, quantum conditional entropy can be less than zero. More specifically, its bounds are: $- S(B) \leq S(A|B) \leq S(A)$. However one needs to remember that in the quantum case the basic entropy, that is, von Neumann entropy, is always positive and conditional entropy is only some algebraic combination of such entropies, not an entropy in a proper sense. 

There exist in the literature other types of entropy, which are generalisations of von Neumann entropy (see e.g. \citealp[ch. 13.4]{bengtsson-zyczkowski}) One of them is the family of Rényi $\alpha$-entropies $S_{\alpha} (\rho) = \frac{1}{1 - \alpha} \log \text{Tr} (\rho^{\alpha})$ parametrized by $\alpha \geq 1$, so that for $\alpha \rightarrow 1$ this quantity converges to the von Neumann entropy $S(\rho)$. Generalised entropies allow to define analogous derivative notions, including joint entropy, conditional entropy and mutual information. In \cite{friis} properties of conditional Rényi entropy and conditional von Neumann entropy are compared. In this thesis generalised entropies are omitted, therefore 'entropy' always means the 'von Neumann entropy' here.

\section{Fano-Bloch decomposition}

Any bipartite state $\rho^{AB}$ of dimension $d_A \times d_B$ can always be represented in the so called Fano-Bloch form \cite{fano-1983}:

\begin{equation} \label{eq:fano-bloch}
\rho_{AB}  = \frac{1}{d_A d_B}  \left( \mathds{1}_{AB} + 
\sum_{m=1}^{d_A^2 -1} a_m \sigma^A_m \otimes \mathds{1}_B 
+ \sum_{n=1}^{d_B^2 - 1} \mathds{1}_A \otimes b_n \sigma^B_n + \sum_{m=1}^{d_A^2 - 1}
    \sum_{n=1}^{d_B^2 - 1} t_{mn} \sigma^A_m \otimes \sigma^B_n
\right),
\end{equation}
where
$a_m = \text{Tr} \rho^{AB} (\sigma^A_m \otimes \mathds{1}_B)$, $b_n = \text{Tr} \rho^{AB} (\mathds{1}_A \otimes \sigma^B_n)$ are Bloch vectors of reduced states $\rho^A, \rho^B$,
$t_{mn} = \text{Tr} \rho_{AB} (\sigma^A_m \otimes \sigma^B_n)$ is a  correlation tensor and
$\sigma^A_m$, $\sigma^B_n$ are generalised Pauli matrices satisfying $\text{Tr} (\sigma^i_m \sigma^i_n) = 2 \delta_{mn}$, $\text{Tr} (\sigma^i_n) = 0$, where $i = A, B$. These conditions mean that matrices associated with a given subsystem are orthogonal and that all of them are traceless.

\section{Special classes of quantum states} \label{sec:special-states}

Some classes of states have a special status because they have particularly simple form (allowing for substantial simplifications in calculations), while still being non-trivial. Two examples relevant for our purposes are the Weyl states (that encompass the the Bell states and the Werner states) and the Gisin states.

\subsection{Weyl states}

The first class of states which will be used here are the locally maximally mixed states, also known as the Weyl states. They satisfy
$a_m = b_n = 0$ and 
$\rho_A = \frac{\mathds{1}_A}{d_A}, \rho_B = \frac{\mathds{1}_B}{d_B}$.
For $d_A = d_B = 2$ the Weyl states can be represented, up to local unitaries, as
\begin{equation} \label{eq:weyl}
\rho_{Weyl} = \frac{1}{4} \left( \mathds{1}_{AB}
+ \sum_{n = 1}^{3} \tilde{t}_n \sigma^A_n \otimes \sigma^B_n
\right) = \left(
\begin{array}{cccc}
 \frac{\tilde{t}_3+1}{4} & 0 & 0 & \frac{\tilde{t}_1-\tilde{t}_2}{4} \\
 0 & \frac{1-\tilde{t}_3}{4} & \frac{\tilde{t}_1+\tilde{t}_2}{4} & 0 \\
 0 & \frac{\tilde{t}_1+\tilde{t}_2}{4} & \frac{1-\tilde{t}_3}{4} & 0 \\
 \frac{\tilde{t}_1-\tilde{t}_2}{4} & 0 & 0 & \frac{\tilde{t}_3+1}{4} \\
\end{array}
\right).
\end{equation}

One well-known type of the Weyl states are the Bell states, which taken together form the maximally entangled state form a basis in the Hilbert space of two-qubit states,:

\begin{equation}
\ket{\Phi^+} = \frac{1}{\sqrt{2}} \left( \ket{00} + \ket{11} \right),
\end{equation}
\begin{equation}
\ket{\Phi^-} = \frac{1}{\sqrt{2}} \left( \ket{00} - \ket{11} \right),
\end{equation}
\begin{equation}
\ket{\Psi^+} = \frac{1}{\sqrt{2}} \left( \ket{01} + \ket{10} \right),
\end{equation}
\begin{equation}
\ket{\Psi^-} = \frac{1}{\sqrt{2}} \left( \ket{01} - \ket{10} \right).
\end{equation}

The Bell states can be expressed in the following form, from which one can easily see that they are a special case of the Weyl states:

\begin{equation} \label{eq:bell1}
\rho_{\Phi^+} = \ket{\Phi^+} \bra{\Phi^+} = \frac{1}{4} \left(
  \mathds{1}_4 + \sigma_x \otimes \sigma_x - \sigma_y \otimes \sigma_y
  + \sigma_z \otimes \sigma_z
\right),
\end{equation}
\begin{equation} \label{eq:bell2}
\rho_{\Phi^-} = \ket{\Phi^-} \bra{\Phi^-} = \frac{1}{4} \left(
  \mathds{1}_4 - \sigma_x \otimes \sigma_x + \sigma_y \otimes \sigma_y
  + \sigma_z \otimes \sigma_z
\right),
\end{equation}
\begin{equation} \label{eq:bell3}
\rho_{\Psi^+} = \ket{\Psi^+} \bra{\Psi^+} = \frac{1}{4} \left(
  \mathds{1}_4 + \sigma_x \otimes \sigma_x + \sigma_y \otimes \sigma_y
  - \sigma_z \otimes \sigma_z
\right),
\end{equation}
\begin{equation}\label{eq:bell4}
\rho_{\Psi^-} = \ket{\Psi^-} \bra{\Psi^-} = \frac{1}{4} \left(
  \mathds{1}_4 - \sigma_x \otimes \sigma_x - \sigma_y \otimes \sigma_y
  - \sigma_z \otimes \sigma_z
\right).
\end{equation}

From equations \eqref{eq:weyl} and \eqref{eq:bell1}-\eqref{eq:bell4} follows a well-known fact describing the relation between the Weyl states and the states diagonal in the Bell basis.

\begin{fact} \label{weyl-bell-diagonal} 
A state is diagonal in the Bell basis, that is, it is of the form:
\begin{equation}
\rho = p_1 \rho_{\Phi^+} + p_2 \rho_{\Phi^-} + p_3 \rho_{\Psi^+} + p_4 \rho_{\Psi^-} 
\end{equation}
with $p_i \geq 0$, $\sum_i p_i = 1$ iff it is a Weyl state of the form \eqref{eq:weyl}.
\end{fact}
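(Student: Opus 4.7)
The plan is to view Fact \ref{weyl-bell-diagonal} as a change-of-basis identity between two parametrisations of the same four-dimensional real affine slice of two-qubit density matrices. Equations \eqref{eq:bell1}--\eqref{eq:bell4} already exhibit each Bell projector as a specific real combination of $\mathds{1}_{AB}/4$ and the three diagonal tensor products $\sigma_n\otimes\sigma_n/4$, so the whole argument reduces to inverting a fixed $4\times 4$ linear map between the coefficients $(p_1,p_2,p_3,p_4)$ and $(1,\tilde t_1,\tilde t_2,\tilde t_3)$, and checking that positivity is preserved under the inverse.

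For the direction Bell-diagonal $\Rightarrow$ Weyl, I would substitute \eqref{eq:bell1}--\eqref{eq:bell4} into $\rho = p_1\rho_{\Phi^+}+p_2\rho_{\Phi^-}+p_3\rho_{\Psi^+}+p_4\rho_{\Psi^-}$, expand, and collect the coefficients in front of $\mathds{1}_{AB}$ and each $\sigma_n\otimes\sigma_n$. Using $\sum_i p_i = 1$ to reduce the identity coefficient to $\tfrac14$ immediately places $\rho$ in the Weyl form \eqref{eq:weyl} with $\tilde t_1 = p_1-p_2+p_3-p_4$, $\tilde t_2 = -p_1+p_2+p_3-p_4$, $\tilde t_3 = p_1+p_2-p_3-p_4$. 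No subtlety arises in this direction beyond tracking signs.

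For the converse, starting from a state in the form \eqref{eq:weyl} I would solve the resulting $4\times 4$ linear system for $p_1,\ldots,p_4$ in terms of $\tilde t_1,\tilde t_2,\tilde t_3$ together with normalisation; the solution is explicit and unique, for instance $p_1 = (1+\tilde t_1-\tilde t_2+\tilde t_3)/4$, and analogously for the other three. The only step requiring any genuine thought is the constraint $p_i \ge 0$: here I would invoke the orthogonality of the Bell basis, which implies that $\rho = \sum_i p_i\,\rho_i$ is already the spectral decomposition of $\rho$, so the $p_i$ are precisely its eigenvalues; since $\rho$ is assumed to be a valid Weyl state (hence a density matrix), these eigenvalues are automatically non-negative. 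The main obstacle, to the extent there is one, is simply the bookkeeping of signs in the $4\times 4$ matrix; conceptually the fact is nothing more than the observation that $\{\mathds{1}_{AB},\sigma_n\otimes\sigma_n\}_{n=1,2,3}$ and the four Bell projectors span the same real subspace of $\tilde{\mathcal{H}}_4$.
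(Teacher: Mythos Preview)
Your proposal is correct and is exactly the argument the paper has in mind: the paper does not give a detailed proof but simply states that the fact ``follows from equations \eqref{eq:weyl} and \eqref{eq:bell1}--\eqref{eq:bell4}'', and you have carried out precisely that computation. Your extra observation that the $p_i$ are automatically the eigenvalues of $\rho$ (by orthogonality of the Bell basis), hence non-negative whenever $\rho$ is a genuine density matrix, is the one point the paper leaves implicit and which you make explicit.
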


Note that not all Weyl states are of the form \eqref{eq:weyl}, so without this specification the theorem does not hold.
Another special case of the Weyl states are the Werner states, introduced in \citealp{werner1989}, where they were used to prove that entanglement does not imply nonlocality (some entangled Werner states are local). The Werner states are convex combinations of one of the maximally entangled states and the maximally mixed state:

\begin{equation} \label{eq:werner}
\rho_{\text{Werner}}  = w \rho_{\Psi^-} + \frac{1}{4} ( 1 - w )\mathds{1}_4
= \frac{1}{4} \left( \mathds{1}_2 \otimes \mathds{1}_2 - w
  \vec{\sigma} \otimes \vec{\sigma}
\right).
\end{equation}

The particularly important fact about the Werner states is that they are parametrized by a single parameter $w$, which takes values $w \in [ -\frac{1}{3}, 1 ]$. On the other hand, this family of states is theoretically nontrivial, as it encompasses some entangled states, the maximally mixed state and the spectrum of intermediate states. 

\begin{figure}[H]
\centering
\includegraphics[scale=0.3]{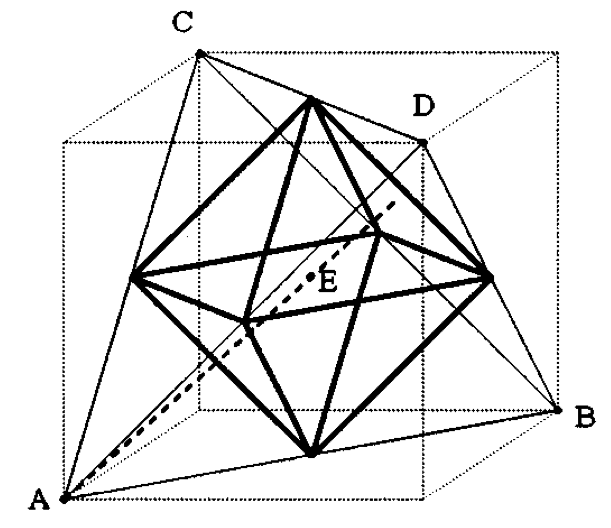}
\caption{Geometrical representation of Weyl states. All the physical states are contained in the tetrahedron. The axes are parameterised by variables $\tilde{t}_1, \tilde{t}_2, \tilde{t}_3$ present in equation \eqref{eq:weyl}. The
    bold-line-contoured octahedron represents separable states, the
    dashed line denotes the set of the Werner states. Here $A = (-1, -1, -1)$, $B
  = (1, 1, -1)$, $C = (1, -1, 1)$, $D = (-1, 1, 1)$ (correspond to
  Bell states: $\ket{\Psi^-}, \ket{\Psi^+}, \ket{\Phi^+}, \ket{\Phi^-}$), and $E$ denoted the normalised
  identity. Source: \citealp{horodecki-1996}.}
\end{figure}

\subsection{Gisin states} \label{sec:gisin}

Another interesting class of states, the Gisin states, was introduced in \citealp{gisin1996}. They have been used to prove existence of the phenomenon of hidden nonlocality: some of the Gisin states which are local (do not violate CHSH inequality) lose this feature after applying a purely local operations, so called local filtering. The Gisin states are expressed by the following formula:
\begin{equation} \label{eq:gisin}
\rho_G (\lambda, \theta) = \lambda \rho_{\theta} + (1 - \lambda ) \rho_{\text{top}},
\end{equation}
where
\begin{equation} 
\rho_{\theta} = \ket{\psi_{\theta}} \bra{\psi_{\theta}},
\ket{\psi_{\theta}} = \sin \theta \ket{01} + \cos \theta \ket{10},
\end{equation}
\begin{equation} 
\rho_{\text{top}} = \frac{1}{2} \left(
\ket{00} \bra{00} + \ket{11} \bra{11}
\right)
\end{equation}
and the ranges of the parameters are $0 \leq \theta \leq \frac{\pi}{2}$\footnote{Originally the Gisin states were defined only for $0 < \theta < \frac{\pi}{2}$, but the broader range of the parameter $\theta$ does not spoil their physicality and will be useful in our analysis of some properties and some absolute properties of the Gisin states in Sections \ref{sec-relations-special} and \ref{sec-relations-absolute-special}.}, $ 0 \leq \lambda \leq 1 $.

In matrix form the Gisin states look as follows:
\begin{equation}
\rho_G = \left(
\begin{array}{cccc}
 \frac{1-\lambda }{2} & 0 & 0 & 0 \\
 0 & \lambda  \sin ^2 \theta  & \lambda  \sin \theta  \cos \theta  & 0 \\
 0 & \lambda  \sin \theta  \cos \theta  & \lambda  \cos ^2 \theta  & 0 \\
 0 & 0 & 0 & \frac{1-\lambda }{2} \\
\end{array}
\right).
\end{equation}

Their eigenvalues are: $0, \frac{1-\lambda}{2}, \frac{1-\lambda}{2}, \lambda$. Observe, that they are dependent only on the parameter $\lambda$ and are independent of the second parameter $\theta$. This fact will be important later, in Section \ref{sec-relations-absolute-special}, where we will use these eigenvalues to check which the Gisin states possess some absolute properties. The Gisin states are not the subset of the Weyl states, but these two classes of states have a non-empty intersection, as the Gisin states are locally maximally mixed for $\theta = \frac{\pi}{4}$.

\chapter{Properties of quantum states}

\section{Separability vs. entanglement}

Let $\rho^{AB} \in \tilde{\mathcal{H}}_A \otimes \tilde{\mathcal{H}}_B$ be a bipartite state of an arbitrary dimension. It always belongs to at least one of the following three types of states:

\begin{definition}
A bipartite state $\rho^{AB}$ is said to be a \textbf{product} state iff there exist
$\rho^A$ and $\rho^B$ such that $\rho^{AB} = \rho^A \otimes
\rho^B$. 
\end{definition}

\begin{definition} \label{def-separability}
A bipartite state $\rho^{AB}$ is called \textbf{separable} iff there exist states $\rho^A_k$ and
$\rho^B_k$ and numbers $p_1, \ldots, p_r$, $p_k > 0$, $\sum_{k=1}^r
p_k = 1$ such that $\rho^{AB} =
\sum_{k=1}^r p_k \rho^A_k \otimes \rho^B_k$ (separable state is a
convex combination of product states).
\end{definition}

\begin{definition} \label{def-entanglement}
If a state is not separable, then it is called \textbf{entangled}.
\end{definition}

Note that the above three notions gain their meaning only after the composed system is splitted into two subsystems, called $A$ and $B$, so that the tensor product $\tilde{\mathcal{H}}_A \otimes \tilde{\mathcal{H}}_B$ is defined. Between the above properties the following relation holds: any product state is separable but (in general) not vice versa. For pure states, separability and being a product state coincide. 
Entanglement is the most important and most widely analysed feature of quantum states. It is recognised as a resource for most of quantum computation tasks, in the sense that it is responsible for specifically quantum effects in these tasks. For a review of current research concerning quantum entanglement see \citealp{horodeccy-entanglement}. 

It is possible to check whether a given state is a product state. Let us consider two qubit states:
\begin{equation} \label{eq:sub-product}
\rho_1 = \left(
\begin{array}{cc}
a & b \\
c & d \\
\end{array}
\right), \quad
\rho_2 = \left(
\begin{array}{cc}
e & f \\
g & h \\
\end{array}
\right).
\end{equation}
Their tensor product, which is a two-qubit state, looks as follows:
\begin{equation} \label{eq:product}
\rho = \rho_1 \otimes \rho_2 = \left(
\begin{array}{cccc}
ae & af & be & bf \\
 ag & ah  & bg  & bh \\
ce & cf  & de  & df \\
cg & ch & dg & dh \\
\end{array}
\right).
\end{equation}
Now, we can reverse this reasoning and formulate the following necessary condition for being a product state:
\begin{criterion} \label{criterion-product}
If a two-qubit state 
\begin{equation}
\rho =  \left(
\begin{array}{cccc}
 m_{11} & m_{12} & m_{13} & m_{14} \\
m_{21} & m_{22} & m_{23} & m_{24} \\
m_{31} & m_{32} & m_{33} & m_{34} \\
 m_{41} & m_{42} & m_{43} & m_{44} \\
\end{array}
\right)
\end{equation}
is a product state, then its matrix elements satisfy the following conditions:
\begin{multicols}{3}
\noindent
\centering
\begin{enumerate}
\item $m_{11}m_{14} = m_{12} m_{13}$, 
\item $m_{11} m_{23} = m_{21} m_{13}$, 
\item $m_{11} m_{24} = m_{22} m_{13}$, 
\item $m_{12} m_{23} = m_{21} m_{14}$, 
\item $m_{12} m_{24} = m_{22} m_{14}$,
\item $m_{21} m_{24} = m_{22} m_{23}$,
\item $m_{11} m_{32} = m_{12} m_{31}$,
\item $m_{11} m_{41} = m_{21} m_{31}$,
\item $m_{11} m_{42} = m_{22} m_{31}$,
\item $m_{12} m_{41} = m_{21} m_{32}$,
\item $m_{12} m_{42} = m_{22} m_{32}$,
\item $m_{21} m_{42} = m_{22} m_{42}$,
\item $m_{11} m_{34} = m_{12} m_{33}$,
\item $m_{11} m_{43} = m_{21} m_{33}$,
\item $m_{11} m_{44} = m_{22} m_{33}$,
\item $m_{12} m_{43} = m_{21} m_{34}$,
\item $m_{12} m_{44} = m_{22} m_{34}$,
\item $m_{21} m_{44} = m_{22} m_{43}$,
\item $m_{13} m_{32} = m_{14} m_{31}$,
\item $m_{13} m_{41} = m_{23} m_{31}$,
\item $m_{13} m_{42} = m_{24} m_{31}$,
\item $m_{14} m_{41}= m_{23} m_{32}$,
\item $m_{14} m_{42} = m_{24} m_{32}$,
\item $m_{23} m_{42} = m_{24} m_{41}$,
\item $m_{13} m_{34} = m_{14} m_{33}$,
\item $m_{13} m_{43} = m_{23} m_{33}$,
\item $m_{13} m_{44} = m_{24} m_{33}$,
\item $m_{14} m_{43} = m_{23} m_{34}$,
\item $m_{14} m_{44} = m_{24} m_{34}$,
\item $m_{23} m_{44} = m_{24} m_{43}$,
\item $m_{31} m_{34} = m_{32} m_{33}$,
\item $m_{31} m_{43} = m_{41} m_{33}$,
\item $m_{31} m_{44} = m_{42} m_{33}$,
\item $m_{32} m_{43} = m_{41} m_{34}$,
\item $m_{32} m_{44} = m_{42} m_{34}$,
\item $m_{41} m_{44} = m_{42} m_{43}$.
\end{enumerate}
\end{multicols}
\end{criterion}

\begin{proof}
Assume that $\rho$ is a product state. Then there exist $\rho_1$ and $\rho_2$ as in equation \eqref{eq:sub-product} and the relation between them and the state $\rho$ is given by \eqref{eq:product}. Now, the above 36 conditions are equivalent to the following equalities in terms of $a, b, c, d, e, f, g, h$:
\begin{multicols}{3}
\noindent
\centering
\begin{enumerate}
\item $ae \cdot bf = af \cdot be$, 
\item $ae \cdot bg = ag \cdot be$, 
\item $ae \cdot bh = ah \cdot be$, 
\item $af \cdot bg = ag \cdot bf$, 
\item $af \cdot bh = ah \cdot bf$,
\item $ag \cdot bh = ah \cdot bg$,
\item $ae \cdot cf = af \cdot ce$,
\item $ae \cdot cg = ag \cdot ce$,
\item $ae \cdot ch = ah \cdot ce$,
\item $af \cdot cg = ag \cdot cf$,
\item $af \cdot ch = ah \cdot cf$,
\item $ag \cdot ch = ah \cdot cg$,
\item $ae \cdot df = af \cdot de$,
\item $ae \cdot dg = ag \cdot de$,
\item $ae \cdot dh = ah \cdot de$,
\item $af \cdot dg = ag \cdot df$,
\item $af \cdot dh = ah \cdot df$,
\item $ag \cdot dh = ah \cdot dg$, 
\item $be \cdot cf = bf \cdot ce$,
\item $be \cdot cg = bg \cdot ce$,
\item $be \cdot ch = bh \cdot ce$,
\item $bf \cdot cg = bg \cdot cf$,
\item $bf \cdot ch = bh \cdot cf$,
\item $bg \cdot ch = bh \cdot cg$,
\item $be \cdot df = bf \cdot de$,
\item $be \cdot dg = bg \cdot de$,
\item $be \cdot dh = bh \cdot de$,
\item $bf \cdot dg = bg \cdot df$,
\item $bf \cdot dh = bh \cdot df$,
\item $bg \cdot dh = bh \cdot dg$,
\item $ce \cdot df = cf \cdot de$,
\item $ce \cdot dg = cg \cdot de$,
\item $ce \cdot dh = ch \cdot de$,
\item $cf \cdot dg = cg \cdot df$,
\item $cf \cdot dh = ch \cdot df$,
\item $cg \cdot dh = ch \cdot dg$.
\end{enumerate}
\end{multicols}
\end{proof}

\section{PPT (Positive Partial Transpose) states}

Definitions of separability and entanglement involve universal quantification over the set of density matrices and weights, so on the basis of this definition alone it is difficult to check whether a given state is separable or entangled. Therefore, it would be helpful to find some simpler criterions of separability and entanglement. Such criterions were found only for some special cases and a general criterion, working for any state of arbitrary dimension, is still not known. For \textbf{pure} bipartite states the criterion is given by the following theorem:

\begin{criterion} 
A state $\rho^{AB}$ is separable iff the entropy of the reduced state is positive, $S(\rho^A) > 0$ (equivalently: $S(\rho^B) > 0$).
\end{criterion}

%\begin{proof}
%
%\end{proof}

For \textbf{mixed} bipartite states there is no universal criterion that gives necessary and sufficient conditions for separability. However, for Hilbert spaces of dimension $2 \times 2$ or $2 \times 3$ such a conditions are given by the \textbf{PPT} (Positive Partial Transpose) criterion. The following definition of partial transpose allows one to formulate the PPT criterion:

\begin{definition}
For a given state $\rho$, its partial transpose with respect to a subsystem $B$, $\rho^{\top_B} $, is given by $\bra{m} \bra{\mu} \rho^{\top_B} \ket{n} \ket{\nu} := \bra{m} \bra{\nu} \rho \ket{n} \ket{\mu}$. Analogously for a subsystem $A$.
\end{definition}

\begin{criterion} [\citealp{ppt}, \citealp{horodecki-1996b}] \label{criterion-ppt}
A quantum state $\rho$ of dimension $2 \times 2$ or $2 \times 3$  is separable iff $\rho^{\top_B} \geq 0$ (equivalently: $\rho^{\top_A} \geq 0$).
\end{criterion}

%\begin{proof}
%
%\end{proof}

Positive Partial Transpose criterion gives a sufficient condition for Hilbert spaces of all finite dimensions, but only for dimensions $2 \times 2$ and $2 \times 3$ it gives necessary condition as well. For higher dimensions there exists a class of states that have positive partial transpose despite being entangled; they belong to the class of bound entangled states (for more details see \citealp{horodeccy-entanglement} and references therein).

\section{Locality vs. nonlocality}

Some of the famous works in foundations of quantum mechanics concern quantum nonlocality. Nonlocality was discovered by Bell (see his collected papers in \citealp{bell}) and later analysed by many others, including \citealp{chsh}. The phenomenon can be described as follows (see also Fig. \ref{fig:bell}): the source ($S$) produces physical system, which is divided into two subsystems. They are send to two distant observers, called Alice and Bob. Upon receiving their subsystems, each observer performs a measurement on it. The measurement chosen by Alice is labeled $x$ and its outcome is $a$. Similarly, Bob chooses measurement $y$ and gets outcome $b$. The experiment is characterised by the joint probability distribution $p (ab | xy)$ of obtaining outcomes $a$ and $b$ when Alice and Bob choose measurements $x$ and $y$, respectively. It turns out that the joint probability distribution predicted by quantum mechanics in general is not a product of probability distributions obtained by Alice and Bob considered separately: $p(a,b | x,y) \neq p(a | x) p(b | y)$, so these distributions are not independent, irrespectively of how large is the distance between the observers. One may wonder whether this independence is real or the quantum-mechanical description is incomplete and it is possible to introduce an additional factor, so called hidden variable, which enables one to describe the two subsystems as uncorrelated. The second option has been explored under the name of hidden variable models for quantum systems. In fact, possessing such a model is the defining condition for the state to be local.

\begin{definition} \label{def:locality}
A bipartite state $\rho^{AB}$ is called \textbf{local} iff it can be described by local hidden variable model, that is, there exist a hidden variable $\lambda \in \Lambda$ and a probability measure $\mu$ on the space $\Lambda$ such that for every measurement choices $x, y$ one can reconstruct joint probability distribution $p (ab | xy)$ predicted by quantum mechanics from another probability distribution conditionalised on $\lambda$:
\begin{equation}
p (a, b | x, y) = \int_{\Lambda} d \lambda \ q (\lambda) p(a,b | x,y, \lambda),
\end{equation}
where
\begin{equation} 
p(a,b | x,y, \lambda) = p(a | x, \lambda) p(b | y, \lambda) .
\end{equation}
\end{definition}

\begin{definition} \label{def:nonlocality}
A bipartite state $\rho^{AB}$ is called \textbf{nonlocal} iff it is not local.
\end{definition}

\begin{figure}[H]
\centering
\includegraphics[scale=0.4]{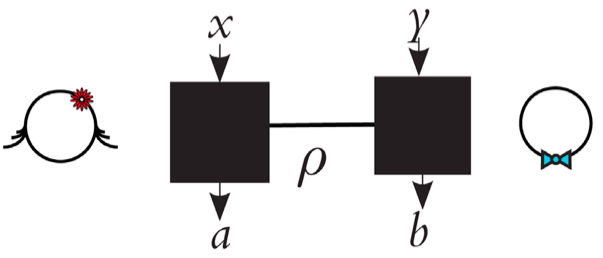}
\caption{Sketch of the Bell experiment. Source: \citealp{calvacanti-2017}.}
\label{fig:bell}
\end{figure}

Bell observed that existence of hidden variable model leads to some constraints on probabilities of the outcomes, which are known under the name of the Bell inequality. There are several versions of this inequality and the most popular is CHSH (Clauser-Horne-Shimony-Holt) inequality \cite{chsh}, which is more general than Bell's original one. Let us assume that $a, b \in \{ -1, +1 \}$ and define expectation value of joint measurement of values $a$ and $b$ with measurement choices $x$ and $y$:
\begin{equation}
\langle a_x b_y \rangle = \sum_{a, b} a b \ p (a, b | x, y).
\end{equation}
It can be proven that states which are nonlocal are precisely these states which violate the \textbf{CHSH inequality}: 
\begin{theorem} [\citealp{chsh}]
A bipartite state $\rho^{AB}$ is nonlocal iff for some settings $\vec{a}, \vec{a}', \vec{b}, \vec{b}'$ it violates the CHSH inequality
\begin{equation}
\langle a b \rangle + \langle a b' \rangle + \langle a' b \rangle - \langle a' b' \rangle \leq 2.
\end{equation}
\end{theorem}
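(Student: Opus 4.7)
My plan is to treat this as an equivalence and prove the two directions separately, with the forward direction (\emph{local} $\Rightarrow$ CHSH holds, or equivalently, CHSH violation $\Rightarrow$ nonlocal) being the standard CHSH argument, and the converse (\emph{nonlocal} $\Rightarrow$ CHSH is violated for some settings) relying on a result in the spirit of Fine's theorem. The forward direction is the substantive algebraic content; the converse is essentially a characterization result that I would cite.

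For the forward direction, I start from the local hidden variable decomposition in Definition \ref{def:locality}. A standard reduction (absorbing any local randomness into an enlarged hidden variable) lets me assume without loss of generality that for each $\lambda$ the outcomes are deterministic functions $a(x,\lambda), b(y,\lambda) \in \{-1,+1\}$. Writing $a=a(x,\lambda)$, $a'=a(x',\lambda)$, $b=b(y,\lambda)$, $b'=b(y',\lambda)$, I use the algebraic identity
\begin{equation}
ab + ab' + a'b - a'b' = a(b+b') + a'(b-b').
\end{equation}
Since $b, b' \in \{-1,+1\}$, exactly one of $b+b'$ and $b-b'$ vanishes while the other equals $\pm 2$; combined with $|a|=|a'|=1$ this forces $|ab + ab' + a'b - a'b'| \leq 2$ pointwise in $\lambda$. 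Integrating against $q(\lambda)$ and using the definition of the expectation value gives the CHSH inequality
\begin{equation}
|\langle ab \rangle + \langle ab' \rangle + \langle a'b \rangle - \langle a'b' \rangle| \leq 2.
\end{equation}
Contrapositively, any state that violates CHSH for some settings cannot admit a local hidden variable model, hence is nonlocal.

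For the converse (nonlocal $\Rightarrow$ some setting violates CHSH), the natural route is Fine's theorem: in the CHSH scenario (two dichotomous measurements per side), the existence of a joint distribution $p(a,a',b,b')$ with the appropriate marginals is equivalent to satisfaction of all CHSH-type inequalities obtained by permuting inputs and flipping signs; this joint distribution in turn is equivalent to the existence of a local hidden variable model reproducing the observed statistics. I would therefore argue: if no choice of settings violates CHSH, then by Fine's theorem the observed correlations admit a LHV model, which by Definition \ref{def:locality} means the state is local.

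The main obstacle is precisely this converse, because nonlocality as defined allows arbitrary measurement scenarios while CHSH is formulated for only two dichotomous settings per side; in general there are nonlocal correlations that satisfy CHSH but violate other Bell inequalities. The cleanest way to handle this in the present thesis, where the focus is on two-qubit states, is to invoke the Horodecki criterion, which translates maximal CHSH violation into a spectral condition on the correlation matrix $T$ of the Fano--Bloch decomposition \eqref{eq:fano-bloch} and shows that for two-qubit states CHSH exhausts the nonlocality detectable by correlation-type Bell inequalities, so the theorem as stated holds in the intended scope.
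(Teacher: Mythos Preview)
The paper does not actually supply a proof of this theorem; it is stated as a cited result from \citealp{chsh}, and immediately afterwards the paper declares that ``from now on we will use this term to denote only one particular kind of nonlocality, namely CHSH-nonlocality'', which effectively makes the equivalence a matter of convention rather than something to be derived. So there is no in-paper proof to compare your attempt against.

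That said, your forward direction (local $\Rightarrow$ CHSH holds, hence CHSH violation $\Rightarrow$ nonlocal) is the standard CHSH derivation and is correct. You are also right to flag the converse as the delicate direction. With the paper's general Definition~\ref{def:locality} (an LHV model reproducing $p(a,b|x,y)$ for \emph{all} measurement choices), the implication ``nonlocal $\Rightarrow$ CHSH violated for some settings'' is in fact \emph{not} true in full generality: there exist states that violate Bell inequalities with more settings or outcomes while satisfying every CHSH inequality. Your appeal to Fine's theorem establishes only that the CHSH inequalities are complete for a \emph{fixed} $(2,2,2)$ scenario, not that CHSH detects every state lacking a global LHV model; and your fallback to the Horodecki criterion (Criterion~\ref{criterion-chsh}) is downstream machinery that characterises maximal CHSH violation for two-qubit states, not a proof of the biconditional as literally stated. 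The honest resolution is the one the paper itself adopts: restrict the meaning of ``nonlocal'' to CHSH-nonlocality, at which point the converse is tautological.
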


In general Bell-type scenarios are characterised by three numbers: the number of subsystems, the number of possible measurements, and the number of outcomes of each measurement. Here we are interested in the CHSH inequality, which concerns the scenario with two subsystems, two measurements and two outcomes; therefore, it is (2,2,2)-type Bell inequality. With different types of scenarios there are connected different types of nonlocality, but from now on we will use this term to denote only one particular kind of nonlocality, namely CHSH-nonlocality. 

%something about optimality / Facet Bell inequalities; maximal violation is sometimes achieved not by qubits...

The above results can be also formulated in terms of the formalism of density operators. We need to use the \textbf{CHSH operator}, defined as follows:
\begin{equation} \label{eq:CHSH-operator}
\mathfrak{B}_{\text{CHSH}} := \vec{a} \cdot \vec{\sigma}^A \otimes (\vec{b} +
\vec{b}') \cdot \vec{\sigma}^B + \vec{a}' \cdot \vec{\sigma}^A \otimes
(\vec{b} - \vec{b}') \cdot \vec{\sigma}^B
\end{equation}
and Hilbert-Schmidt inner product: $(A|B)_{HS} := \text{Tr} (A^{\dagger} B)$ .

\begin{theorem} 
A bipartite state $\rho^{AB}$ is nonlocal iff for some settings $\vec{a}, \vec{a}', \vec{b}, \vec{b}'$ it violates the inequality
\begin{equation}
(2 \mathds{1} - \mathfrak{B}_{\text{CHSH}} | \rho )_{HS} \geq 0,
\end{equation}
where $\mathfrak{B}_{\text{CHSH}}$ is given by the equation \eqref{eq:CHSH-operator}.
\end{theorem}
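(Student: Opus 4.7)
The plan is to reduce this statement directly to the previous CHSH theorem by recognising that the Hilbert-Schmidt inner product $(2\mathds{1} - \mathfrak{B}_{\text{CHSH}} | \rho)_{HS}$ is just a compact quantum-mechanical repackaging of the scalar quantity $2 - \langle ab\rangle - \langle ab'\rangle - \langle a'b\rangle + \langle a'b'\rangle$ appearing in the classical CHSH bound. So the essential work is translating expectation values of dichotomic measurements into traces, after which the equivalence is immediate from linearity.

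First I would observe that a measurement of the spin component along a unit vector $\vec{a}$ is described by the observable $\vec{a}\cdot\vec{\sigma}^A$, which has eigenvalues $\pm 1$; hence for a bipartite state $\rho^{AB}$ and settings $\vec{a},\vec{b}$ giving outcomes $a,b\in\{-1,+1\}$, the joint expectation value is
\begin{equation}
\langle a_{\vec a}\, b_{\vec b}\rangle \;=\; \sum_{a,b}ab\, p(a,b\mid \vec{a},\vec{b}) \;=\; \text{Tr}\bigl[(\vec{a}\cdot\vec{\sigma}^A \otimes \vec{b}\cdot\vec{\sigma}^B)\,\rho^{AB}\bigr].
\end{equation}
This is the standard identification and follows from the spectral decomposition $\vec{a}\cdot\vec{\sigma} = \Pi_{+1} - \Pi_{-1}$ together with the Born rule $p(a\mid \vec a)=\text{Tr}(\Pi_a \rho^A)$ on each side, applied to the tensor product.

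Next, using linearity of the trace and the definition \eqref{eq:CHSH-operator} of $\mathfrak{B}_{\text{CHSH}}$, I would sum the four terms to obtain
\begin{equation}
\langle ab\rangle + \langle ab'\rangle + \langle a'b\rangle - \langle a'b'\rangle \;=\; \text{Tr}(\mathfrak{B}_{\text{CHSH}}\, \rho^{AB}).
\end{equation}
Since $\mathfrak{B}_{\text{CHSH}}$ is Hermitian, $\text{Tr}(\mathfrak{B}_{\text{CHSH}}\rho) = \text{Tr}(\mathfrak{B}_{\text{CHSH}}^\dagger\rho) = (\mathfrak{B}_{\text{CHSH}}\mid\rho)_{HS}$, and by linearity $(2\mathds{1} - \mathfrak{B}_{\text{CHSH}}\mid\rho)_{HS} = 2\,\text{Tr}(\rho) - \text{Tr}(\mathfrak{B}_{\text{CHSH}}\rho) = 2 - \text{Tr}(\mathfrak{B}_{\text{CHSH}}\rho)$. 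Thus the inequality $(2\mathds{1}-\mathfrak{B}_{\text{CHSH}}\mid\rho)_{HS}\geq 0$ is equivalent, term for term, to the classical CHSH inequality $\langle ab\rangle+\langle ab'\rangle+\langle a'b\rangle-\langle a'b'\rangle\leq 2$ for the chosen settings.

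Finally, applying the previous theorem, $\rho^{AB}$ is nonlocal iff there exist settings for which the CHSH inequality is violated, which by the equivalence just established is iff there exist settings for which $(2\mathds{1}-\mathfrak{B}_{\text{CHSH}}\mid\rho)_{HS}\geq 0$ fails. There is no substantial obstacle here; the only subtlety worth flagging is the sign convention and the fact that the statement carries an implicit universal/existential quantifier over settings inherited from the previous theorem, so the reformulated inequality must be understood in the same sense (violation for some choice of $\vec a,\vec a',\vec b,\vec b'$ rather than for all).
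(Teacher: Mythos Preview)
Your proposal is correct. The paper does not actually supply a proof of this theorem; it presents the statement as an immediate reformulation of the preceding CHSH theorem in density-operator language, and your argument fills in exactly the routine verification (unwinding the Hilbert--Schmidt inner product and identifying $\text{Tr}(\mathfrak{B}_{\text{CHSH}}\rho)$ with the CHSH combination) that this reformulation tacitly assumes.
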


As in the case of separability and entanglement, the definition of nonlocality involves a quantification over a large set. Thus, relying only on the definition it is difficult to check whether a given state is local or not. Again, no universal criterion providing relatively simple necessary and sufficient conditions for nonlocality is known. However, for two-qubit states the following criterion was found:
\begin{criterion}[CHSH operator criterion, \citealp{horodeccy-1995}] \label{criterion-chsh}
Let $\rho$ be the density operator of a two-qubit state with correlation tensor $t = ( t_{mn} )$, defined in \eqref{eq:fano-bloch}, and let $\mu_1$ and $\mu_2$ be the
two largest eigenvalues of $M_{\rho} = t^\top t$. The state is \textbf{nonlocal} iff 
\begin{equation}
\max_{\vec{a}, \vec{a}', \vec{b}, \vec{b}'} \langle
\mathfrak{B}_{CHSH} \rangle = 2 \sqrt{\mu_1 + \mu_2} > 2.
\end{equation}
\end{criterion}

\section{Quantum steering}

The concept of quantum steering was introduced in \cite{schrodinger-1935}. It is the name for the fact that one of the parties ($A$ or $B$) can change the state of the other ($B$ or $A$, respectively) by choosing a basis for local measurement (the state of the second party must collapse according to this choice). If a bipartite state allows steering, it is called steerable. In contrast to nonlocality and entanglement, this property of quantum states is not symmetric between $A$ and $B$ ($A$'s being steerable by $B$ is something different from $B$'s being steerable by $A$). The notion was defined mathematically in \citealp{wiseman-2007}. For a review of recent research on this topic see \citealp{calvacanti-2017}. Similarly to nonlocality scenarios, steering scenarios can be characterised by the number of subsystems (here we restrict to two), the number of possible measurements and the number of their outcomes. The definition is also similar --- instead of local hidden variables models, it uses the notion of local hidden state models, which can be described roughly as local hidden variable models for one subsystem only.

\begin{definition}
A bipartite state $\rho^{AB}$ is said to be \textbf{steerable} from A to B (B to A) if there exists measurements in Alice’s (Bob’s) part that produces an assemblage that does not admit local hidden state model, that is, there exist no hidden variable $\lambda \in \Lambda$ and no probability measure $\mu$ on the space $\Lambda$ such that
\begin{equation}
\sigma_{a | x} = \int_{\Lambda} d \lambda \ \mu (\lambda) p (a, x | \lambda) \rho_{\lambda}^B,
\end{equation}
where $\sigma_{a | x} = p(a | x) \rho_{a | x}$, $\rho_{a | x} = \mathrm{Tr}_A (M_{a | x} \otimes \mathds{1}) \rho^{AB} / p(a | x)$ are Bob's states after Alice's measurement (with the setting $x$ and the outcome $a$) and $p (a | x)$ are probabilities of these states.
\end{definition}

The following theorem describes the relationship between steerability and nonlocality of the type (2,2,2):
\begin{theorem} [\citealp{girdhar-calvacanti-2016}]
If a two-qubit state $\rho$ is steerable with CHSH-type measurements, i.e., with a set-up (2,2,2), then it violates CHSH inequality.
\end{theorem}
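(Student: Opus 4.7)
The natural approach is to prove the contrapositive: assume $\rho$ satisfies the CHSH inequality for every choice of directions $\vec a,\vec a',\vec b,\vec b'$, and show that no $(2,2,2)$ steering protocol by Alice can produce a steerable assemblage on Bob's side. I would fix an arbitrary pair of Alice's dichotomic projective measurements $A_x=\vec a_x\cdot\vec\sigma$, $x\in\{0,1\}$, with outcomes $a\in\{+1,-1\}$, and consider the induced assemblage $\sigma_{a|x}=\operatorname{Tr}_A\!\bigl[(\Pi^A_{a|x}\otimes\mathds{1})\rho\bigr]$, where $\Pi^A_{a|x}=\tfrac12(\mathds{1}+aA_x)$. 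The task reduces to constructing an LHS decomposition of the four operators $\{\sigma_{a|x}\}$ in the sense of the steering definition.

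The first step is to link the CHSH operator to a natural $(2,2,2)$ steering functional. Let $B_y=\vec b_y\cdot\vec\sigma$ be arbitrary dichotomic observables on Bob's side and set $F(\vec b_0,\vec b_1)=\sum_{a,x,y}(-1)^{xy}\,a\,\operatorname{Tr}[\sigma_{a|x}B_y]$. Using $\sum_a a\,\Pi^A_{a|x}=A_x$ and the definition of $\sigma_{a|x}$, a short computation gives $F(\vec b_0,\vec b_1)=\operatorname{Tr}\!\bigl[(A_0\otimes B_0+A_0\otimes B_1+A_1\otimes B_0-A_1\otimes B_1)\rho\bigr]=\operatorname{Tr}(\mathfrak B_{\text{CHSH}}\rho)$ for the directions $\vec a_0,\vec a_1,\vec b_0,\vec b_1$. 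By hypothesis $F(\vec b_0,\vec b_1)\le 2$ for all $\vec b_0,\vec b_1$, and since $2$ is precisely the LHS bound of this steering functional, the assemblage $\{\sigma_{a|x}\}$ violates no CHSH-type steering inequality.

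The main obstacle is the converse: showing that satisfying every CHSH-type steering inequality actually forces the existence of an LHS model in the $(2,2,2)$ two-qubit scenario. Here I would exploit the Fano--Bloch form \eqref{eq:fano-bloch} of $\rho$ together with Criterion \ref{criterion-chsh}, which translates the CHSH assumption into $\mu_1+\mu_2\le 1$ for the correlation tensor. Using this bound I would build an explicit LHS model by taking the hidden variable set $\Lambda=\{+1,-1\}^2$, deterministic response functions $p(a|x,\lambda)=\delta_{a,\lambda_x}$, a weight $\mu(\lambda)$ read off from the marginal statistics of $\rho$, and pure qubit states $\rho^B_\lambda$ whose Bloch vectors are chosen to reproduce the conditional Bloch vectors of $\sigma_{\lambda_x|x}$. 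The inequality $\mu_1+\mu_2\le 1$ is precisely what is needed to guarantee that the candidate Bloch vectors have norm at most one, so that the $\rho^B_\lambda$ are bona fide density operators; the marginal consistency $\sum_a\sigma_{a|0}=\sum_a\sigma_{a|1}=\rho^B$ is automatic from the trace-out construction. Checking that the resulting ensemble $\{\mu(\lambda),\rho^B_\lambda\}$ reproduces all four $\sigma_{a|x}$ completes the proof by contraposition.
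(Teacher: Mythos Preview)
The paper does not prove this theorem; it merely cites the result from Girdhar and Cavalcanti (2016). So there is no in-paper proof to compare against, and your proposal must stand on its own.

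Your first step is correct and useful: the identity $F(\vec b_0,\vec b_1)=\operatorname{Tr}(\mathfrak B_{\text{CHSH}}\rho)$ shows that, for any fixed pair of Alice's projective measurements, the assemblage $\{\sigma_{a|x}\}$ satisfies every CHSH-type steering inequality whenever $\rho$ is CHSH-local. That is the easy direction and you have it.

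The genuine gap is in your second step, which is the entire substance of the Girdhar--Cavalcanti result. You assert an explicit LHS model with $\Lambda=\{+1,-1\}^2$, deterministic responses, weights $\mu(\lambda)$ ``read off from the marginal statistics'', and pure qubit states $\rho^B_\lambda$ ``chosen to reproduce the conditional Bloch vectors of $\sigma_{\lambda_x|x}$''. None of these objects is actually specified. The hidden variable $\lambda=(\lambda_0,\lambda_1)$ is not an observable of $\rho$, so there are no ``marginal statistics'' from which to read off $\mu(\lambda)$; and each $\rho^B_\lambda$ enters \emph{two} of the four equations (e.g.\ $\rho^B_{++}$ contributes to both $\sigma_{+|0}$ and $\sigma_{+|1}$), so its Bloch vector cannot simply be set equal to either conditional Bloch vector. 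Writing $\tau_\lambda=\mu(\lambda)\rho^B_\lambda$, the system reduces to finding a Hermitian $t$ with $0\le t\le\sigma_{+|0}$, $t\le\sigma_{+|1}$, and $t\ge\sigma_{+|0}-\sigma_{-|1}$ in the operator order; whether such a $t$ exists is exactly the question, and you have not shown how the CHSH bound guarantees it.

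Finally, invoking the state-level Horodecki condition $\mu_1+\mu_2\le 1$ is a red herring here: that inequality concerns the full correlation tensor of $\rho$ optimised over \emph{all} Alice directions, whereas the LHS construction must work for the \emph{fixed} pair $\vec a_0,\vec a_1$. What you actually need is that, in the $(2,2,2)$ qubit scenario, the CHSH-type steering functional is the \emph{only} relevant facet of the LHS set --- and proving that is precisely the nontrivial geometric content of the cited paper, not a detail to be absorbed into a one-line Bloch-vector norm check.
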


From this theorem it follows that for the $(2,2,2)$ case nonlocality and steerability are equivalent. However, there are some states which are steerable with three measurements but not CHSH-nonlocal (for the examples see \citealp{calvacanti-2017}). Necessary and sufficient conditions for steerability are in general not known. The exception is a two-qubit case, for which necessary and sufficient conditions are analysed in \cite{nguyen-vu} and \cite{yu-2018}. For other dimensions there are some partial results, e.g. many inequalities providing sufficient conditions for steerability are derived in \cite{calvacanti-2009}. The following inequality provides condition for steerability of type (2,3,n) of a bipartite state $\rho^{AB}$ (as analysed e.g. in \citealp{hindusi-steering}):
\begin{equation} \label{eq:3-steerability}
\frac{1}{\sqrt{3}} \left| \sum_{i=1}^3 \langle A_i \otimes B_i \rangle \right| \leq 1,
\end{equation}
where $A_i = \vec{a}_i \vec{\sigma}$, $B_i = \vec{b}_i \vec{\sigma}$, $\vec{a}_i$ and $\vec{b}_i$ are measurement directions, $\vec{\sigma} = (\sigma_1, \sigma_2, \sigma_3)$ is a vector composed of Pauli matrices, and $\langle A_i \otimes B_i \rangle  = \text{Tr} (\rho^{AB} A_i \otimes B_i)$

\begin{figure}[H]
\centering
\includegraphics[scale=0.4]{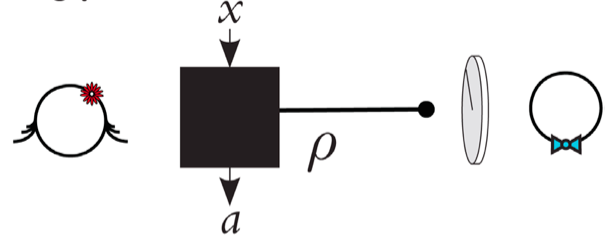}
\caption{Quantum steering experiment. There are two parties, Alice (left) and Bob (right). In contrast to nonlocality scenario, only Alice performs an experiment, choosing measurement $x$ and obtaining outcome $a$. Source: \citealp{calvacanti-2017}.}
\label{fig:bell}
\end{figure}

\section{Negative vs. non-negative conditional entropy}

As mentioned in Section \ref{sec:classical-entropy}, conditional entropy of a quantum state can be negative. The physical meaning of such a phenomenon is described in \citealp{horodecki-2005}. Conditional entropy provides the answer to the following question: Given an unknown quantum state distributed over two systems, how much quantum communication is needed to transfer the full state to one system? If the conditional entropy is positive, its sender needs to communicate this number of quantum bits to the receiver; if it is negative, then sender and receiver instead gain the corresponding potential for future quantum communication. These intuitions are formalised in the so called quantum state merging protocol, whose details can be found in the mentioned paper.

\section{Quantum discord}

\subsection{Definition and meaning}

Quantum discord was introduced in \cite{ollivier-zurek} as a new measure of quantum correlations that encompasses broader class of states than entanglement. The definition is as follows:

\begin{definition}
\textbf{Quantum entropy} of a state $\rho^{AB}$ \textbf{with respect to measurement on the subsystem A},
$\{\Pi^A_i \}$, is $S(\rho^B | \{\Pi^A_i \}) = \sum_i p_i S (\rho^{B |  \Pi^A_i})$, 
where $p_i = \mathrm{Tr} ((\Pi^A_i \otimes \mathds{1}_B) \rho^{AB})$, and $\rho^{B | \Pi^A_i} = \mathrm{Tr}_A ((\Pi^A_i \otimes \mathds{1}_B) \rho^{AB})/ p_i$.
\end{definition}

\begin{definition}
Quantum \textbf{ discord} of a state $\rho^{AB}$ under a measurement on the subsystem $A$, $\{ \Pi^A_i \}$, is the difference $D (B |A) := I (B:A) - J(B : A)$,
where $ I (B:A) $ is a mutual information defined in Section \ref{sec:quantum-entropy}, $J(B : A) = max_{\{ \Pi^A_i \}} J(B | \{ \Pi^A_i \}) $, $J(B | {\{ \Pi^A_i \}}) := S(B) - S(B | \{ \Pi^A_i \})$.
\end{definition}

% Analogous definition for $\{ \Pi^b_j\}$; in general $D(B | A) \neq D(A | B)$

Classical counterparts of $I$ and $J$ coincide: $I_{cl} (A : B):= S(A) + S(B)
- S (A, B) = S(B) - S(B | A) =: J_{cl} (B : A)$ and this is why the quantity has been called 'discord'. 
To shed light on the physical meaning of quantum discord, it would be useful to quote researches investigating this quantity:

\begin{quote}
Classical information is locally accessible, and can be obtained
without perturbing the state of the system: One can interrogate just
one part of a composite system and discover its state while leaving
the overall density matrix (as perceived by observers that do not have
access to the measurement outcome) unaltered. A general separable $\rho$ does not allow for such insensitivity to measurements: Information can
be extracted from the apparatus but only at a price of perturbing $\rho$, even when this density matrix is separable. However, when
discord disappears, such insensitivity (which may be the defining
feature of “classical reality,” as it allows acquisition of
information without perturbation of the underlying state) becomes
possible for correlated quantum systems. \cite{ollivier-zurek}
\end{quote}

\begin{quote}
Quantum discord is the minimum part of the mutual information shared
between A and B that cannot be obtained by the measurement on A. \cite[p. 68]{lectures-correlations}
\end{quote}

\subsection{Properties of quantum discord}

Quantum discord can be shown to possess the following properties \cite{bera2018}:

\begin{itemize}
\item quantum discord is nonnegative $D(B|A) \geq 0$,
\item in general quantum discord is not symmetric $D(B|A) \neq D(A|B)$,
\item quantum discord is invariant under local unitary transformations,
\item for bipartite pure states quantum discord reduces to entropy of entanglement,
\item any entangled state has a non-zero discord,
\item quantum discord is bounded by the entropy $D(B|A) \leq S(B)$.
\end{itemize}

It was argued \cite{datta-2008} that a non-zero quantum discord of a given state indicates its usefulness for quantum computation, sometimes even in the absence of entanglement. An example is so called DQC1 (deterministic quantum computation with single qubit) protocol. The task in this protocol is to compute a trace of a unitary matrix. The authors argue that computational advantage over classical protocols does not depend on entanglement.

\subsection{When quantum discord is zero?} \label{zero-discord}

There is no general formula for computing quantum discord even for two-qubit states. Only results for special classes of states are available.
There exist analytic results for Weyl states \cite{luo-discord-weyl} and also for broader class of states, so called X-states, that is, the states which have non-zero values only on their diagonal and anti-diagonal positions in the computational basis \cite{ali-2010}. 
%is as follows: 
%\begin{equation} \label{eq:discord-weyl}
%D(A|B) = \frac{1 - t}{4} log_2 (1-t) - \frac{1+t}{2} log_2 (1+t) + \frac{1 + 3t}{4} log_2 (1+3t),
%\end{equation} 
%where $t := max \{ |\tilde{t}_1|, |\tilde{t}_2|, |\tilde{t}_3| \}$. 
Checking whether a given quantum state has zero discord is much easier than computing quantum discord in the case when it is non-zero. There exist at least two relatively simple criteria for checking whether a given bipartite state has zero discord. They are as follows:

\begin{criterion} [\citealp{dakic}] \label{criterion-zero-discord-1}
If $\rho^{AB}$ is a two-qubit state, then $\rho^{AB}$ has both quantum discords equal to zero ($D(A|B) = D(B|A) =0$) iff $|| \vec{x} ||^2 + ||t||^2
  -k_{max} = 0$, where $k_{max}$ is the largest eigenvalue of matrix $K =
  \vec{x} \vec{x}^\top + tt^\top$, $||t||^2 = \mathrm{Tr} \ t^\top t$, and $\vec{x}$, $t = (t_{\mu \nu})$ are defined by equation \eqref{eq:fano-bloch}. 
\end{criterion}

For Weyl states this criterion gives $\tilde{t}_2^2 + \tilde{t}_2^2 + \tilde{t}_3^2
  - \mathrm{max}\{ \tilde{t}_2^2, \tilde{t}_2^2, \tilde{t}_3^2 \} = 0$, where $\tilde{t}_i$, $i = 1, 2, 3$ are defined in equation \eqref{eq:weyl}.

\begin{criterion} [\citealp{huang2011}] \label{criterion-zero-discord-2}
A bipartite quantum state $\rho^{AB} \in \tilde{\mathcal{H}}_A \otimes \tilde{\mathcal{H}}_A$ has zero quantum discord, $D(A|B) = 0$, iff all the square blocks of its density matrix of dimension $d = \mathrm{dim} (\mathcal{H}_B)$ are normal matrices and commute with each other. For $D(B|A) =0$ one needs to consider all the blocks of dimension $d = \mathrm{dim} (\mathcal{H}_A) $.
\end{criterion}

\begin{figure}[h]
\centering
\includegraphics[scale=0.33]{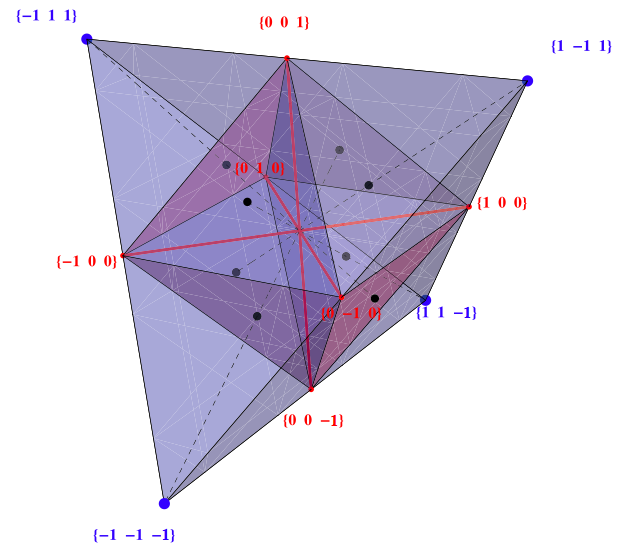}
\caption{Geometrical representation of Weyl states. The axes are parameterised by variables $\tilde{t}_1, \tilde{t}_2, \tilde{t}_3$ present in equation \eqref{eq:weyl}. All the physical states are contained in the tetrahedron. The
    bold-line-contoured octahedron represents separable states. The zero-discord
  states are labeled by the solid red lines. Source: \citealp{dakic}.}
\label{zero-discord-weyl}
\end{figure}
\FloatBarrier

The first of these criterions is formulated only for two-qubit states, whereas the second one works for arbitrary bipartite states. The first criterion does not distinguish between $D(A|B)$ and $D(B|A)$; it follows that although in general $D(A|B) \neq D(B|A)$, for two-qubit states if one of them is equal to zero, then the other one must also be equal to zero. The illustration of application of this criterion to the family of Weyl states is provided in Fig. \ref{zero-discord-weyl}. From the second criterion one may conclude that it is possible that exactly one of discords vanishes only if the dimension of system $A$ is different than the dimension of system $B$. In fact, there exist quantum states which have only one discord equal to zero.
% reference

I will mention here one more criterion, which gives only sufficient condition for zero quantum discord, but is interesting because it allows to estimate the relative volume of the set of states with zero discord.

\begin{criterion} [\citealp{ferraro-2010}]
If $[ \rho^{AB}, \rho^A \otimes
  \mathds{1}_B ] = 0$, then $D(B|A) = 0$ (but not the other way
  around --- e.g. all pure maximally entangled states)
\end{criterion}

It can be proven \cite{ferraro-2010} that the set of density matrices satisfying $[ \rho^{AB}, \rho^A \otimes \mathds{1}_B ] = 0$ has measure zero in the set of all density matrices. Therefore, the set of zero discord states is also of measure zero. For comparison: separable pure states are of measure zero in the  set of all pure states, but separable states have a positive measure in the set of all density matrices \cite{zyczkowski-1998}. Therefore, there are significantly less state with zero quantum discord than states which are separable. This fact can be understood as an indication that quantum discord is able to 'detect' more quantum correlations than the properties described earlier.

\subsection{Classical-classical, quantum-classical and classical-quantum states}

One can define three interesting classes of states, which turn out to be strictly connected with the notion of quantum discord. These are classical-classical, quantum-classical and classical-quantum states, defined as follows (see e.g. \citealp{lectures-correlations}):

\begin{definition} \label{def:classical-classical}
A state $\rho^{AB}$ is called \textbf{classical-classical} iff it has a
form $\rho^{AB} = \sum_{i,j} p_{ij}^{AB} \Pi^A_i \otimes \Pi^B_j$, where $\{ p_{ij}^{AB} \}$ is a
classical probability distribution, $\Pi^A_i := \ket{i}_A \bra{i}$ and $\Pi^B_j := \ket{j}_B
\bra{j}$  are spectral projections of the reduced states
$\rho^A = tr_B \rho^{AB}$ and $\rho^B = tr_A \rho^{AB}$, respectively, $\{ \ket{i}
\}$ and $\{ \ket{j} \}$ are orthonormal bases for parties $A$ and $B$, respectively.
\end{definition}

\begin{definition} \label{def:classical-quantum}
A state $\rho^{AB}$ is called \textbf{classical-quantum} iff it has a form
$\rho^{AB} = \sum_{i} p_{i}^{A}  \Pi^A_i \otimes \rho^B_i $. 
\end{definition}

\begin{definition} \label{def:quantum-classical}
A state $\rho^{AB}$ is called \textbf{quantum-classical} iff it has a form
$\rho^{AB} = \sum_{j} p_{j}^{B} \rho^A_j \otimes \Pi^B_j$. 
\end{definition}

It is useful to compare these definitions with the Definition \ref{def-separability} of separable states. All of them postulate similar forms of states: they should be sums of tensor products of states of subsystems. The difference lies in the details of the form of states of subsystems: sometimes we require that they should be spectral projections of the reduced states (in the case of classical-classical for both subsystems, in the case of quantum-classical and classical-quantum for one subsystem --- called 'classical'), and sometimes we do not impose on them any additional conditions (in the case of separable states for for both subsystems, in the case of quantum-classical and classical-quantum for one subsystem --- called 'quantum').

One can prove the theorem connecting the above classes of states with quantum discord (see e.g. \citealp{bera2018}):

\begin{theorem} \label{thm:classicality-vs-discord} The following equivalences hold:

\noindent
A bipartite state $\rho^{AB}$ is classical-classical iff $D(A|B) = D(B|A) =0$. 

\noindent
A bipartite state $\rho^{AB}$ is classical-quantum iff $D(B|A) =0$.

\noindent
A bipartite state $\rho^{AB}$ is quantum-classical iff $D(A|B) =0$.
\end{theorem}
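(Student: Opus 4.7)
The theorem has three equivalences, but they are not independent. The first statement (classical-quantum $\Leftrightarrow D(B|A)=0$) is the core claim; the second (quantum-classical $\Leftrightarrow D(A|B)=0$) follows from it by swapping the roles of $A$ and $B$; and the classical-classical statement follows by combining the first two, since a classical-classical state is precisely one that is simultaneously classical-quantum and quantum-classical (the common refinement of bases coincides with the spectral projectors of both marginals). So I focus on proving the equivalence classical-quantum $\Leftrightarrow D(B|A)=0$.

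The easy direction ($\Leftarrow$) is a direct calculation. Assume $\rho^{AB}=\sum_i p_i \,\Pi^A_i\otimes \rho^B_i$ with the $\Pi^A_i$ orthogonal rank-one projectors on $A$. Take the measurement family $\{\Pi^A_i\}$ from the decomposition itself. Because the $\Pi^A_i$ are mutually orthogonal, the reduced state is $\rho^A=\sum_i p_i\Pi^A_i$ so $S(\rho^A)=H(\{p_i\})$; and because $\rho^{AB}$ is block-diagonal in this basis, $S(\rho^{AB})=H(\{p_i\})+\sum_i p_i S(\rho^B_i)$. A short computation of $\rho^{B|\Pi^A_i}=\rho^B_i$ then gives $S(B|\{\Pi^A_i\})=\sum_i p_i S(\rho^B_i)$, hence $J(B|\{\Pi^A_i\})=S(\rho^B)-\sum_i p_i S(\rho^B_i)=I(B:A)$. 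Since $J(B:A)\le I(B:A)$ always and this particular measurement already saturates the inequality, $J(B:A)=I(B:A)$ and $D(B|A)=0$.

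For the harder direction ($\Rightarrow$), assume $D(B|A)=0$, so some rank-one projective measurement $\{\Pi^A_i\}$ on $A$ achieves $I(B:A)=J(B|\{\Pi^A_i\})$. Define the post-measurement (dephased) state
\begin{equation}
\sigma^{AB}=\sum_i (\Pi^A_i\otimes\mathds{1}_B)\,\rho^{AB}\,(\Pi^A_i\otimes\mathds{1}_B),
\end{equation}
which is manifestly of classical-quantum form $\sigma^{AB}=\sum_i p_i \Pi^A_i\otimes\rho^{B|\Pi^A_i}$. Noting $\sigma^B=\rho^B$, the forward direction applied to $\sigma^{AB}$ gives $I(B:A)(\sigma^{AB})=J(B|\{\Pi^A_i\})$. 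Writing mutual information as relative entropy, $I(B:A)(\rho)=S(\rho^{AB}\,\|\,\rho^A\otimes\rho^B)$, and applying the dephasing CPTP map $\Phi\otimes\mathrm{id}$ (with $\Phi(X)=\sum_i\Pi^A_i X\Pi^A_i$) to both arguments, monotonicity of the relative entropy under CPTP maps yields $I(B:A)(\rho)\ge I(B:A)(\sigma)$. The assumption forces this inequality to be saturated. Invoking the equality condition for the data-processing inequality for projective dephasing (equivalently: the Petz recoverability condition, which for a projective dephasing channel amounts to the input already being fixed by the channel), one concludes $\rho^{AB}=\sigma^{AB}$, i.e.\ $\rho^{AB}$ is classical-quantum.

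The main obstacle is precisely this saturation step: the data processing inequality alone yields $\ge$, and one must additionally argue that equality pins down $\rho^{AB}$ to be block-diagonal in the measurement basis on $A$. All the other steps are essentially bookkeeping with Definitions \ref{def:classical-classical}--\ref{def:quantum-classical} and elementary entropy identities. Once the classical-quantum equivalence is in place, the quantum-classical equivalence follows verbatim with $A$ and $B$ exchanged, and the classical-classical equivalence follows because a simultaneously classical-quantum and quantum-classical decomposition must use the spectral projectors of $\rho^A$ and $\rho^B$ respectively, matching Definition \ref{def:classical-classical}.
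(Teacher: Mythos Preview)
The paper does not supply a proof of this theorem; it simply cites the review \cite{bera2018}. So there is no ``paper's own proof'' to compare your attempt against, and your write-up goes well beyond what the thesis offers.

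Your argument is sound. Two minor remarks. First, your arrow labels are reversed: when you write ``classical-quantum $\Leftrightarrow D(B|A)=0$'' and then call the step ``assume $\rho^{AB}=\sum_i p_i\Pi^A_i\otimes\rho^B_i$'' the $(\Leftarrow)$ direction, that is actually $(\Rightarrow)$; the content is right, only the symbols are swapped. Second, your ``hard'' direction via the equality case of the data-processing inequality (Petz recoverability for the dephasing channel $\Phi\otimes\mathrm{id}$) is correct but is heavier machinery than strictly necessary. A slightly more elementary route uses the identity $S(\Phi(\tau))-S(\tau)=S(\tau\,\|\,\Phi(\tau))$ valid for any projective dephasing $\Phi$, which turns your condition $I(\rho)=I(\sigma)$ into $S(\rho^{AB}\,\|\,\sigma^{AB})=S(\rho^{A}\,\|\,\sigma^{A})$; monotonicity of relative entropy under the partial trace $\mathrm{Tr}_B$ then gives the inequality, and its equality case (still Petz, but in a form where the recovery map is explicit and forces block-diagonality) yields $\rho^{AB}=\sigma^{AB}$. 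Either way you have correctly identified the one nontrivial step, and the reduction of the classical-classical case to the conjunction of the other two is fine.
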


\section{Quantum super discord} \label{sec:super-quantum-discord}

The notion of super quantum discord was introduced in \citealp{super-discord-1}. It is similar to the notion of quantum discord --- the only difference lies in the fact that it uses weak measurements (see Definition \ref{weak-measurement}) instead of the standard von Neumann measurements (see Definition \ref{von-Neumann-measurement}). First let us look at how the authors state their motivation for considering weak measurements.

After the local von Neumann measurement on a subsystem $A$ a bipartite state $\rho^{AB}$ collapses to a classical-quantum state and after the local von Neumann measurement on a subsystem $B$ it collapses to a quantum-classical state. Therefore, this kind of measurement destroys the correlations in the system $\rho^{AB}$, at least these accessible from the point of view of the subsystem on which we have measured. (Recall that we are considering measurements on a single subsystem, not on an entire system. In the second case it is possible that a post-measurement state will be entangled, namely when the basis in which we measure contains entangled states.) In contrast, after a weak measurement a system can still be in an entangled state. That is where the name of this type of measurement comes from: the weak measurement destroys a state in a lesser degree than the standard von Neumann measurement. This phenomenon can be illustrated by the example given by \citealp{super-discord-1}. Let us consider maximally entangled pure state $\rho_{\Phi^+} $ given by \eqref{eq:bell1}. After a weak measurement on the subsystem $B$ this state is transformed into
\begin{equation}
\rho = \frac{1}{2} \left[ \ket{00} \bra{00} + \ket{11} \bra{11} + \text{sech} x (\ket{00} \bra{11} + \ket{11} \bra{00})
\right].
\end{equation}
It can be shown \cite{super-discord-2} that the above output state is still entangled for sufficiently small values of $x$. Super quantum discord measures the correlation in a state $\rho^{AB}$ as seen by an observer who performs a weak measurement on one of the subsystems. Now, let us state the formal definition of super quantum discord (with respect to subsystem $A$), taken from  \citealp{super-discord-1}:

\begin{definition}
\textbf{Quantum entropy} of a state $\rho^{AB}$ \textbf{with respect to weak measurement on the subsystem A},
$\{P^A (\pm  \xi) \}$, is $S(\rho^B | \{P^A (\xi) \}) =  p(\xi) S (\rho^{B |  P^A (\xi)} ) + p(-\xi) S (\rho^{B |  P^A (-\xi)} ) $, 
where $p (\pm \xi) = \mathrm{Tr} ((P^A (\pm \xi) \otimes \mathds{1}_B) \rho^{AB})$, $\rho^{B | P^A (\pm \xi)} = \mathrm{Tr}_A ((P^A (\pm \xi) \otimes \mathds{1}_B) \rho^{AB})/ p (\pm \xi)$
\end{definition}

\begin{definition}
Quantum \textbf{super discord} of a state $\rho^{AB}$ under a weak measurement on subsystem $A$, $\{ P^A (\pm x) \}$,  is a difference $D (B |A) := I (B:A) - J(B : A)$,
where $J(B : A) = max_{\xi} J(B | \{ P^A (\pm \xi) \}) $, $J(B | {\{ P^A (\pm \xi) \}}) := S(B) - S(B | \{ P^A (\pm \xi) \})$.
\end{definition}

In the above definitions $\xi$ is fixed, so the sums contain only two components, for $\xi$ and for $-\xi$. To understand better the physical meaning of this quantity let us quote the authors who invented the above definition:
\begin{quote} 
A remarkable feature of the super quantum discord is that for pure entangled states it can exceed the quantum entanglement. In this sense, SQD reveals quantum correlation that truly goes beyond quantum entanglement even for pure entangled states. (...) Thus, quantum correlation is not only observer dependent but also depend on how gently or strongly one perturbs the quantum system. \cite{super-discord-1}
\end{quote}

The following properties of super quantum discord will be interesting for us:

\begin{theorem} [\citealp{super-discord-1}] \label{thm:super-discord-vs-discord} 
For any bipartite state $\rho^{AB}$, the quantum super discord is greater than or equal to the quantum discord: $D_w (A|B) \geq D(A|B)$.
\end{theorem}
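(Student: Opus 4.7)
The plan is to reduce the inequality $D_w(A|B) \geq D(A|B)$ to a comparison of conditional entropies. Since $I(B:A)$ is defined purely in terms of $\rho^{AB}$, $\rho^A$, $\rho^B$ and is common to both $D$ and $D_w$, the inequality is equivalent to $J_w(B:A) \leq J(B:A)$, which in turn (subtracting off the common $S(B)$) is equivalent to showing that the best weak measurement yields conditional entropy at least as large as the best projective measurement. I would prove this by establishing the pointwise statement: for any orthogonal projectors $\{\Pi_1,\Pi_2\}$ and any strength $\xi$, the conditional entropy of the weak measurement $\{P^A(\pm\xi)\}$ built from $\{\Pi_i\}$ is at least the conditional entropy of the projective measurement $\{\Pi_i\}$ itself. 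Taking the minimum over all measurements then gives $\min S(B|\{P(\pm\xi)\}) \geq \min S(B|\{\Pi_i\})$, hence $J_w \leq J$ and $D_w \geq D$.

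The key computational step is to express Bob's post-measurement conditional states for the weak measurement as convex combinations of the conditional states for the projective measurement. Expanding $P(\xi) = \sqrt{p}\,\Pi_1 + \sqrt{q}\,\Pi_2$ with $p = (1-\tanh\xi)/2$, $q = (1+\tanh\xi)/2$, and sandwiching $\rho^{AB}$ between $(P(\xi)\otimes\mathds{1})$ factors produces four terms: two ``diagonal'' contributions proportional to $\Pi_i\rho^{AB}\Pi_i$ and two cross terms proportional to $\Pi_1\rho^{AB}\Pi_2$ and $\Pi_2\rho^{AB}\Pi_1$. Crucially, the cross terms vanish after tracing out $A$, because cyclicity on the $A$ factor turns them into $\mathrm{Tr}_A[(\Pi_2\Pi_1\otimes\mathds{1})\rho^{AB}]=0$ and analogously for the other. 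This leaves
\begin{equation}
\rho^{B|P(\xi)} \;=\; \frac{p\, p_1 \rho^B_1 + q\, p_2 \rho^B_2}{p\, p_1 + q\, p_2}, \qquad \rho^{B|P(-\xi)} \;=\; \frac{q\, p_1 \rho^B_1 + p\, p_2 \rho^B_2}{q\, p_1 + p\, p_2},
\end{equation}
where $p_i = \mathrm{Tr}[(\Pi_i\otimes\mathds{1})\rho^{AB}]$ and $\rho^B_i$ is Bob's post-projection state. So each conditional state of the weak measurement is a convex mixture of the two projective post-measurement states, which is the structural fact that drives the argument.

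Now I would invoke concavity of the von Neumann entropy twice. Applied to the two convex combinations above,
\begin{equation}
p(\xi)\,S(\rho^{B|P(\xi)}) \;\geq\; p\, p_1 S(\rho^B_1) + q\, p_2 S(\rho^B_2), \qquad p(-\xi)\,S(\rho^{B|P(-\xi)}) \;\geq\; q\, p_1 S(\rho^B_1) + p\, p_2 S(\rho^B_2),
\end{equation}
where the factors in front of the entropies on the left cancel the denominators of the convex weights. Summing the two inequalities and using $p+q = 1$ gives exactly
\begin{equation}
S(\rho^B | \{P^A(\pm\xi)\}) \;\geq\; p_1 S(\rho^B_1) + p_2 S(\rho^B_2) \;=\; S(\rho^B | \{\Pi^A_i\}).
\end{equation}
Since this holds for every choice of projectors and every $\xi$, maximizing $J$ over a strictly richer class of measurements gives $J_w \leq J$, hence the claim.

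The hard part is really only the first step: namely carefully justifying the formula for $\rho^{B|P(\pm\xi)}$ and verifying that the orthogonality $\Pi_1\Pi_2 = 0$ is exactly what kills the cross terms after the partial trace. Once the post-measurement states are identified as convex combinations, the entropy inequality is immediate from concavity; no finer properties of $\tanh\xi$ or of the weak-measurement formalism are needed, which is also consistent with the fact that the projective limit $\xi\to\infty$ saturates the inequality.
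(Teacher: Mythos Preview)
The thesis does not supply its own proof of this statement; it merely records it as a result of Singh and Pati (2014) and cites that reference. Your argument is correct and is exactly the one given in that original paper: one writes the weak post-measurement reduced states as convex combinations of the projective post-measurement reduced states (the cross terms vanish under $\mathrm{Tr}_A$ because $\Pi_1\Pi_2=0$), applies concavity of the von Neumann entropy, and sums. One cosmetic remark: your closing sentence about ``maximizing $J$ over a strictly richer class'' is phrased a bit misleadingly, since weak measurements are the larger class and projective measurements are the $\xi\to\infty$ boundary; the correct logic, which you in fact established, is the pointwise bound $S(B|\{P(\pm\xi)\})\ge S(B|\{\Pi_i\})$, which immediately gives $J_w\le J$ after optimization.
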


\begin{theorem} [\citealp{super-discord-2}]\label{thm:super-discord-product} 
A bipartite $\rho^{AB}$ has zero super quantum discord $D_w (A|B) = D_w (B|A) =0$ iff $\rho^{AB}$ is a product state.
\end{theorem}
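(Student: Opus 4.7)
The plan is to prove both directions, with the forward direction being a short tensor-product computation, and the backward direction reducing first to classical-classical states (via the already-proved Theorems \ref{thm:super-discord-vs-discord} and \ref{thm:classicality-vs-discord}) and then to a classical data-processing argument.

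For sufficiency, I would argue that if $\rho^{AB} = \rho^A \otimes \rho^B$, then for any weak measurement $\{P^A(\pm\xi)\}$ on $A$ one has
\begin{equation}
(P^A(\pm\xi) \otimes \mathds{1}_B)(\rho^A \otimes \rho^B)(P^A(\pm\xi)^\dagger \otimes \mathds{1}_B) = P^A(\pm\xi)\rho^A P^A(\pm\xi)^\dagger \otimes \rho^B.
\end{equation}
Taking partial trace over $A$ and renormalising gives $\rho^{B|P^A(\pm\xi)} = \rho^B$ regardless of the outcome, so $S(\rho^B|\{P^A(\pm\xi)\}) = S(\rho^B)$ and hence $J(B:A) = 0$. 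Since $I(B:A) = 0$ for any product state, $D_w(B|A) = 0$; swapping the roles of $A$ and $B$ gives $D_w(A|B) = 0$.

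For necessity, I would first invoke Theorem \ref{thm:super-discord-vs-discord} to conclude that $D(A|B) = D(B|A) = 0$, and then Theorem \ref{thm:classicality-vs-discord} to deduce that $\rho^{AB}$ is classical-classical, i.e.\ $\rho^{AB} = \sum_{ij} p_{ij}\, \Pi^A_i \otimes \Pi^B_j$ for some orthonormal bases $\{\ket{i}_A\}$, $\{\ket{j}_B\}$ and a joint distribution $\{p_{ij}\}$. For such a state, and for any weak measurement $\{P^A(\pm\xi)\}$ on $A$ (with arbitrary projectors and arbitrary finite strength $\xi$), the outcome probabilities $P(\pm|i) = \bra{i} P^A(\pm\xi)^\dagger P^A(\pm\xi)\ket{i}$ depend only on $A$, and the conditional state of $B$ remains diagonal in $\{\Pi^B_j\}$. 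Consequently every entropy in the definition of super discord reduces to a classical Shannon entropy, and a short calculation yields
\begin{equation}
D_w(B|A) \;=\; H(B \mid \mathrm{outcome}) - H(B|A),
\end{equation}
where $\mathrm{outcome} \in \{+,-\}$ is the classical random variable produced by the weak measurement.

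By the classical data-processing inequality applied to the Markov chain $B \to A \to \mathrm{outcome}$, the right-hand side is nonnegative and equals zero iff $B \perp A \mid \mathrm{outcome}$. For any finite $\xi$, both $P(+|i)$ and $P(-|i)$ are strictly positive for every $i$, so the outcome only partially determines $A$; combined with $P(B=j \mid A=i, \mathrm{outcome}) = P(B=j|A=i)$ (the outcome depends on $A$ only), conditional independence collapses to $P(B=j|A=i)$ being independent of $i$, which is exactly $p_{ij} = p^A_i p^B_j$, the product-state condition. The main delicate point is that the definition of super discord involves an optimisation over the weak-measurement basis, so one must argue that no choice of basis can achieve $J = I(B:A)$ unless the state is already a product; but the data-processing inequality holds for every basis (the outcome is always a stochastic function of $A$ alone), so the same argument applies uniformly and the optimisation does not complicate the conclusion.
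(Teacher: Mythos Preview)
The thesis does not prove this theorem; it is quoted from \citealp{super-discord-2} and used as a black box, so there is no in-paper argument to compare against.

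Your proof is correct and self-contained. The sufficiency direction is the standard tensor-product computation. For necessity, the reduction to classical-classical states via Theorems~\ref{thm:super-discord-vs-discord} and~\ref{thm:classicality-vs-discord} is valid, and the subsequent data-processing step on the Markov chain $B \to A \to \mathrm{outcome}$ does exactly what is needed: equality in data processing forces $I(B{:}A \mid \mathrm{outcome})=0$, and since for every finite $\xi$ the POVM elements $P^A(\pm\xi)^\dagger P^A(\pm\xi)$ have full support, every $i$ with $p^A_i>0$ occurs with positive probability under \emph{each} outcome, so conditional independence of $A$ and $B$ given the outcome collapses to unconditional independence $p_{ij}=p^A_i p^B_j$. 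Your closing remark about the optimisation is also correct: the bound $H(B\mid\mathrm{outcome})\geq H(B\mid A)$ holds for every choice of the projector pair $\Pi_1,\Pi_2$, so the maximisation over bases cannot manufacture equality unless the state is already product.

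One small point worth stating explicitly: the thesis writes the optimisation in the definition of $J(B{:}A)$ as ``$\max_\xi$'', which is ambiguous. Your argument needs $\xi$ to be a fixed finite strength (with the optimisation running over the projector pair), since if $\xi\to\infty$ were allowed the weak measurement would reduce to a projective one, super discord would collapse to ordinary discord, and the theorem would fail on any non-product classical-classical state. This is the reading intended in \citealp{super-discord-1} and \citealp{super-discord-2}; a one-line clarification would close the only gap.
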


\begin{theorem} [\citealp{super-discord-2}]
A bipartite $\rho^{AB}$ has zero super quantum discord $D_w (A|B) = D_w (B|A) =0$ iff $\rho^{AB}$ has zero mutual information $I(A:B) \equiv I(B:A) =0$.
\end{theorem}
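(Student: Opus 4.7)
The plan is to chain two equivalences through the intermediate notion of a product state. Theorem \ref{thm:super-discord-product} already gives us $D_w(A|B) = D_w(B|A) = 0$ iff $\rho^{AB}$ is a product state, so the entire task reduces to proving the well-known fact that $I(A:B) = 0$ iff $\rho^{AB} = \rho^A \otimes \rho^B$.

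First I would handle the easy direction: if $\rho^{AB} = \rho^A \otimes \rho^B$, then $\log(\rho^A \otimes \rho^B) = \log \rho^A \otimes \mathds{1} + \mathds{1} \otimes \log \rho^B$, and a direct trace computation gives $S(\rho^{AB}) = S(\rho^A) + S(\rho^B)$, so by the definition in Section \ref{sec:quantum-entropy} the mutual information $I(A:B) = S(\rho^A) + S(\rho^B) - S(\rho^{AB})$ vanishes.

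For the converse I would use the well-known identification of mutual information with a relative entropy, namely
\begin{equation}
I(A:B) = S(\rho^{AB} \,\|\, \rho^A \otimes \rho^B) = \mathrm{Tr}\bigl(\rho^{AB} \log \rho^{AB}\bigr) - \mathrm{Tr}\bigl(\rho^{AB} \log(\rho^A \otimes \rho^B)\bigr).
\end{equation}
This identity follows by expanding $\log(\rho^A \otimes \rho^B)$ as above and using the fact that partial traces of $\rho^{AB}$ recover $\rho^A$ and $\rho^B$. Then Klein's inequality (nonnegativity of quantum relative entropy, with equality iff the two arguments coincide) immediately yields $\rho^{AB} = \rho^A \otimes \rho^B$ whenever $I(A:B) = 0$.

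Combining these two steps with Theorem \ref{thm:super-discord-product} closes the chain of equivalences: $D_w(A|B) = D_w(B|A) = 0 \Leftrightarrow \rho^{AB}$ is a product state $\Leftrightarrow I(A:B) = 0$. The only nontrivial ingredient is the equality condition in Klein's inequality, which is standard; it could alternatively be obtained by invoking the equality case of the subadditivity inequality $S(\rho^{AB}) \leq S(\rho^A) + S(\rho^B)$. No new estimates are needed, so I do not expect any genuine obstacle beyond citing these classical facts about von Neumann entropy.
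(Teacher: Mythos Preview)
Your argument is correct: chaining Theorem~\ref{thm:super-discord-product} with the standard fact that $I(A:B)=S(\rho^{AB}\,\|\,\rho^A\otimes\rho^B)$ vanishes iff $\rho^{AB}$ is a product state (via Klein's inequality or the equality case of subadditivity) closes the equivalence cleanly.

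There is, however, nothing to compare against: the thesis does not prove this theorem at all. It is simply imported from \citealp{super-discord-2} and stated without argument, just like Theorem~\ref{thm:super-discord-product} immediately before it. One caveat worth flagging is that both results come from the same reference, so you should check that in the original source the product-state characterisation is not itself derived \emph{from} the mutual-information characterisation; otherwise your chain would be circular relative to the primary literature. Within the logical structure of this thesis, though, where both are taken as given, your derivation is sound and self-contained.
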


In the case of quantum discord it is possible to have $D(A|B) = 0$ and $D(B|A) \neq 0$ or the other way around (see Section \ref{zero-discord}). In contrast, if a super quantum discord is zero with respect to one subsystem, it is also zero with respect to the other subsystem.

\section{Contextuality vs. noncontextuality}

The last pair of properties to be analysed in this thesis is contextuality and noncontextuality. In general, noncontextuality means that the measured value of any observable is independent on other observables that are measured jointly with it. Of course we restrict only to observables that are compatible with a given observable (i.e. commuting with it), because otherwise they could not be measured jointly. There are two senses of contextuality: it can be understood as state-independent property of a set of projectors \cite{kochen-specker} and as state-dependent property, which is possessed by some states but not the others. In this thesis I will be interested only in the second sense of contextuality. It can be formalised in terms of nonexistence of contextual hidden variable theory, which leads to certain inequality (in analogy to nonlocality). There are many versions of this inequality with different numbers of projectors and the best known of them is KCBS (Klyachko-Can-Binicioğlu-Shumovsky) inequality \cite{kcbs}. First, following \citealp{kitajima-2017}, let us formally define the notions of contextuality and noncontextuality:

\begin{definition}
A state $\ket{\psi}$ is noncontextual iff there exist a hidden variable $\lambda \in \Lambda$, a probability measure $\mu$ on the space $\Lambda$ and a value assignment on observables that can be measured on $\rho$, i.e. a function $\nu : \mathds{A} \times \Lambda \mapsto \mathds{R}$ satisfying for any two commuting observables $A, B$:
\begin{enumerate}
\item $\nu (A + B | \lambda) = \nu (A | \lambda) + \nu (B | \lambda)$,
\item $\nu (A B | \lambda) = \nu (A | \lambda) \nu (B \lambda)$,
\item $\nu (\mathds{1} | \lambda) = 1$,
\item $\nu (0 | \lambda) = 0$,
\item $\bra{\psi} A \ket{\psi} = \int_{\Lambda} \nu (A | \lambda) \mu (\lambda) d \lambda$.
\end{enumerate}
\end{definition}

\begin{definition}
A state $\ket{\psi}$ is contextual iff it does not satisfy noncontextuality condition.
\end{definition}

\begin{theorem} [\citealp{kcbs}]
A state $\ket{\psi}$ is noncontextual iff for any family of projectors $P_0$, $P_1$, $P_2$, $P_3$, $P_4$ such that each $P_i$ commutes with $P_{i+1}$ (where the sum should be understood modulo 5), the KCBS inequality holds:
\begin{equation} \label{eq:kcbs}
\bra{\psi} (P_0 + P_1 + P_2 + P_3 + P_4) \ket{\psi} \leq 2.
\end{equation}
\end{theorem}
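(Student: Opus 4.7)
The plan is to prove the two directions of the equivalence separately. For the forward implication (noncontextuality implies the KCBS inequality), assume $\ket{\psi}$ admits a hidden variable $\lambda \in \Lambda$, a measure $\mu$ and a value assignment $\nu$ satisfying conditions 1--5. First I would use condition 2 applied to $P_i^2 = P_i$ to conclude $\nu(P_i|\lambda)^2 = \nu(P_i|\lambda)$, so each $\nu(P_i|\lambda)$ lies in $\{0,1\}$. In the KCBS configuration one takes consecutive projectors not only to commute but to be actually orthogonal, $P_i P_{i+1} = 0$, so condition 2 together with condition 4 yields $\nu(P_i|\lambda)\,\nu(P_{i+1}|\lambda) = \nu(P_i P_{i+1}|\lambda) = \nu(0|\lambda) = 0$ for every $\lambda$. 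Hence the subset $\{i : \nu(P_i|\lambda)=1\}$ is an independent set in the 5-cycle graph $C_5$, whose independence number equals $2$, so $\sum_{i=0}^{4}\nu(P_i|\lambda) \leq 2$ pointwise. Integrating against $\mu$ and invoking condition 5 for each $P_i$ gives the desired bound $\sum_i \bra{\psi} P_i \ket{\psi} \leq 2$.

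For the reverse implication I would need to construct a noncontextual model from the assumption that every KCBS-type inequality (for every admissible choice of pentagon of projectors) holds. A natural strategy is to let $\Lambda$ be the set of $\{0,1\}$-valued valuations on the algebra of observables compatible with conditions 1--4, and to look for a probability measure $\mu$ on $\Lambda$ such that $\bra{\psi} A \ket{\psi} = \int_{\Lambda} \nu(A|\lambda)\,\mu(\lambda)\,d\lambda$ for every relevant observable $A$. By linear programming duality, existence of such a $\mu$ is equivalent to the nonviolation, on the data $\{\bra{\psi}P_i\ket{\psi}\}$, of every facet-defining noncontextuality inequality of the associated correlation polytope.

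The main obstacle lies precisely in this sufficiency direction: one has to argue that the KCBS inequalities, together with their variants obtained by relabellings and symmetries of the pentagon, exhaust the facets of the noncontextual polytope for the relevant scenario. This is the contextuality counterpart of Fine's theorem for Bell locality, and I expect it to be the most delicate step. In practice I would either cite a known completeness result for the pentagon scenario, or construct $\mu$ directly by decomposing the admissible expectation data into a convex combination of the extreme $\{0,1\}$-valued assignments, the existence of such a decomposition being precisely what non-violation of the KCBS-type inequalities guarantees. Once this is in hand, reading off the hidden variable $\lambda$ and the value assignment $\nu$ from the extreme points produces the required noncontextual model.
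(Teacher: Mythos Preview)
The paper does not give its own proof of this theorem; it is stated with a citation to \cite{kcbs} and no argument. So there is nothing in the paper to compare against, and I will simply assess your proposal on its merits.

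Your forward direction is correct, and you even catch an inaccuracy in the statement: as printed, the hypothesis is that $P_i$ \emph{commutes} with $P_{i+1}$, but the KCBS bound is obviously false under that hypothesis alone (take all $P_i=\mathds{1}$; they commute and the left-hand side equals $5$). You rightly work with the intended hypothesis $P_iP_{i+1}=0$, which is also what the paper itself uses later in the proof of Theorem~\ref{thm:absolute-contextuality}. With that fix, the argument via $\nu(P_i|\lambda)\in\{0,1\}$, exclusivity, and the independence number of $C_5$ is the standard and correct one.

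The reverse direction, however, has a genuine gap that your strategy cannot close. The LP--duality / Fine-type argument you sketch, even granting the completeness of the KCBS facets for the $5$-cycle, yields a noncontextual model \emph{only for the five observables of a fixed pentagon}. The definition in the paper asks for much more: a single value assignment $\nu:\mathds{A}\times\Lambda\to\mathds{R}$ on the \emph{entire} observable algebra, satisfying conditions 1--4 for every commuting pair. In Hilbert-space dimension $\geq 3$ the Kochen--Specker theorem says that no such $\nu(\cdot\,|\lambda)$ exists for any single $\lambda$, so $\Lambda$ is empty and \emph{no} state is noncontextual in this global sense. Thus the biconditional, read literally against the paper's definition, cannot hold; only the forward implication survives. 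The ``completeness result for the pentagon scenario'' you would cite is a statement about the five-observable correlation polytope and does not, and cannot, glue into a global model --- that gluing is precisely what Kochen--Specker obstructs. So the obstacle you flag is not merely ``delicate''; under the definitions given it is insurmountable, and the sufficiency direction should be understood as holding only relative to a fixed measurement scenario, not globally.
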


According to the theorem, a state is contextual iff for some family of projectors satisfying conditions specified above the KCBS inequality \eqref{eq:kcbs} is violated, that is, $\bra{\psi} (P_0 + P_1 + P_2 + P_3 + P_4) \ket{\psi} > 2 $. 
There are known examples both of states that are contextual and of states that are noncontextual. The simplest noncontextual state is identity operator \cite{kitajima-2017}. Examples of contextual states are provided in \cite{jerge}; that paper contains also results of experimental tests that confirm violation of KCBS inequality.

\chapter{Relations between different properties}

\section{Relations for special classes of states} \label{sec-relations-special}

In previous chapter many properties of quantum states were described. One may ask then a question, whether they are really different from each other (they are not equivalent, in other words: the sets of states that possess them are not equal) and if yes, what are relations between them (whether some of them imply some other properties, in other words: whether sets of states possessing these properties are included in each other). This chapter gives the answer to this question. First look at the properties of two classes of states introduced in Section \ref{sec:special-states}.

In Table \ref{table-non-absolute} there are shown ranges of parameters for which the Werner states \eqref{eq:werner} and the Gisin states \eqref{eq:gisin} are product states, have zero discord, are separable, unsteerable, local and have non-negative conditional entropy. They are also illustrated in Fig. \ref{fig:werner} and \ref{fig:gisin}. The results have been obtained with the use of Criteria \ref{criterion-product}, \ref{criterion-ppt}, \ref{criterion-chsh} and \ref{criterion-zero-discord-2}. Some of these results were already presented in the literature: \citealp{werner1989} (separability and locality for the Werner states), \citealp{gisin1996} (separability and locality for the Gisin states), \citealp{friis} (non-negative conditional entropy for the Werner states and for the Gisin states), \citealp{ollivier-zurek} (quantum discord for the Werner states). 

For the Werner states the ranges of parameters are different with the exception of product and zero discord states. However, the Gisin states are product and zero discord for different ranges of parameters, therefore these two families of states are sufficient to distinguish between all the mentioned properties completely. From these results it follows that no two of the mentioned properties are equivalent.

\begin{figure}[H]
\centering
\includegraphics[scale=0.7]{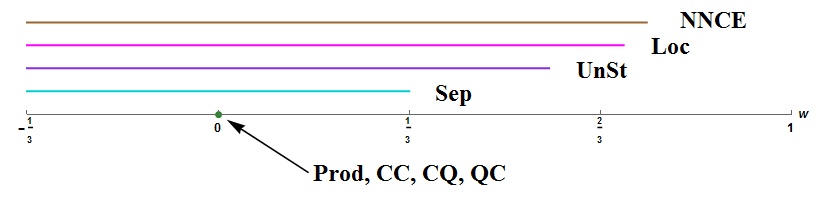}
\caption{Properties of the Werner states. The only product state (Prod) is the state with $w=0$ (green point); it is also the only state with zero discord (and therefore classical-classical CC, classical-quantum CQ and quantum-classical QC). The other properties are: separability (Sep, blue), unsteerability (UnSt, violet), locality (Loc, purple) and non-negative conditional entropy (NNCE, brown).}
\label{fig:werner}
\end{figure}

\begin{figure}[H]
\centering
\includegraphics[scale=0.7]{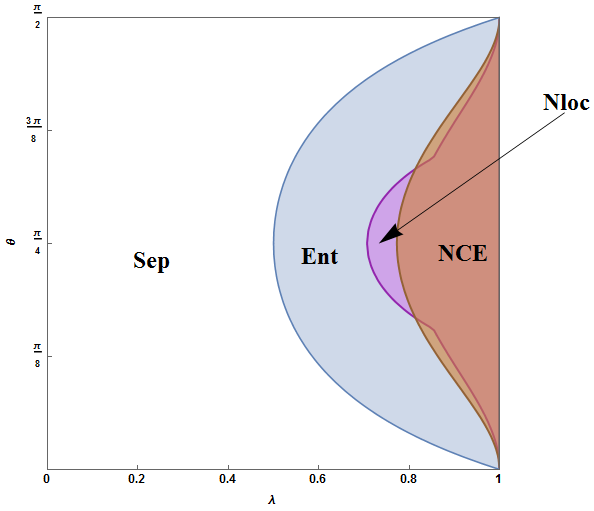}
\caption{Properties of the Gisin states \eqref{eq:gisin} in the space of parameters $\lambda, \theta$. The shaded regions are: NLoc --- nonlocal (violet), NCE --- negative conditional entropy (brown), Ent --- entangled (blue). The white region contains all and only separable Gisin states (Sep). Observe that there are states nonlocal and with negative conditional entropy, states local and with negative conditional entropy, as well as states nonlocal with non-negative conditional entropy, so there is clearly no implication between nonlocality and negative conditional entropy. This illustration is based on \cite{friis}.}
\label{fig:gisin}
\end{figure}

\begin{table}[H]
\centering
{\def\arraystretch{2}\tabcolsep=5pt
\begin{tabular}{llll}
\hline
\multicolumn{1}{|l|}{}                             & \multicolumn{1}{l|}{\textbf{Werner states} \eqref{eq:werner}}                               & \multicolumn{2}{l|}{\textbf{Gisin states} \eqref{eq:gisin}}                                                                                                                                                                                                                                                                                                                                              \\ \hline
\multicolumn{1}{|l|}{parameter}                    & \multicolumn{1}{l|}{$w$}                                         & \multicolumn{1}{l|}{$\theta$}                                                                                                                                                                     & \multicolumn{1}{l|}{$\lambda$}                                                                                                                                             \\ \hline
\multicolumn{1}{|l|}{range of the parameter}       & \multicolumn{1}{l|}{$w \in [-\frac{1}{3},1]$}                          & \multicolumn{1}{l|}{$\theta \in [0, \frac{\pi}{2}]$}                                                                                                                                                         & \multicolumn{1}{l|}{$\lambda \in [0, 1]$}                                                                                                                                              \\ \hline
\multicolumn{1}{|l|}{product}                      & \multicolumn{1}{l|}{$w=0$}                                           & \multicolumn{1}{l|}{$\theta \in \{ 0, \frac{\pi}{2} \}$}                                                                                                                                          & \multicolumn{1}{l|}{$\lambda = 1$}                                                                                                                                         \\ \hline
\multicolumn{1}{|l|}{zero discord}                 & \multicolumn{1}{l|}{$w=0$}                                           & \multicolumn{1}{l|}{$\theta$ arbitrary}                                                                                                                                                                    & \multicolumn{1}{l|}{$\lambda = 0$}                                                                                                                                         \\ \cline{3-4} 
\multicolumn{1}{|l|}{}                             & \multicolumn{1}{l|}{}                                            & \multicolumn{1}{l|}{$\theta \in \{ 0, \frac{\pi}{2} \}$}                                                                                                                                          & \multicolumn{1}{l|}{$\lambda$ arbitrary}                                                                                                                                             \\ \hline
\multicolumn{1}{|l|}{separable}                    & \multicolumn{1}{l|}{$w \in [-\frac{1}{3}, \frac{1}{3}]$}                 & \multicolumn{2}{l|}{\parbox[t]{7cm}{$\lambda \cos^2 \theta \geq 0$ \& $\lambda \sin^2 \theta \geq 0$ \& \\
$1 - \lambda (1 + \sin (2 \theta)) \geq 0$ \& $1 - \lambda (1 - \sin (2 \theta)) \geq 0$ \\
$ $}}           \\ \hline
\multicolumn{1}{|l|}{unsteerable}                  & \multicolumn{1}{l|}{$w \in [-\frac{1}{3}, \frac{1}{\sqrt{3}}]$}                             & \multicolumn{1}{l|}{?}                                                                                                                                                                             & \multicolumn{1}{l|}{?}                                                                                                                                                      \\ \hline
\multicolumn{1}{|l|}{local}                        & \multicolumn{1}{l|}{$w \in [-\frac{1}{3}, \frac{1}{\sqrt{2}}]$}         & \multicolumn{2}{l|}{max$\left\{ \sqrt{\lambda^2 \sin^2(2\theta)+(1-2\lambda)^2}, \sqrt{2} \lambda \sin(2 \theta)\right\} \leq 1$}                                                                                                                                                                                                                                                             \\ \hline
\multicolumn{1}{|l|}{\parbox[t]{3cm}{non-negative \\ conditional entropy}} & \multicolumn{1}{l|}{$w \in [-\frac{1}{3}, w_0]$, $w_0 \approx 0.7476$} & \multicolumn{2}{l|}{\parbox[t]{7cm}{$-2 \frac{1-\lambda}{2} \log \left( \frac{1-\lambda}{2} \right) - \lambda \log \lambda $\\
$+ \left( \frac{1-\lambda}{2} + \lambda \cos^2 \theta \right) \log \left( \frac{1-\lambda}{2} \lambda \cos^2 \theta  \right) $\\
$+ \left( \frac{1-\lambda}{2} + \lambda \sin^2 \theta  \right) \log \left( \frac{1-\lambda}{2} + \lambda \sin^2 \theta  \right)  > 0 $\\
$ $}} \\ \hline                                                                                                                                                           
\end{tabular}%
}
\caption{Selected properties of the Werner states and the Gisin states in function of their parameters.}
\label{table-non-absolute}
\end{table}

\section{General relations}

We have seen that no two properties analysed here are equivalent. However, at least for the Werner states and the Gisin states there are some implications between them. One may ask whether these implications are specific to this classes of states or they hold in general. It turns out that in some cases the answer is positive and some cases is negative. For example, we have already seen that although for the Werner states negative conditional entropy implies nonlocality, this is no longer true for the Gisin states. The following theorem summarises what is known about these relations in the general case:

\begin{theorem} \label{thm-implications-1}
For any bipartite state $\rho$ the following implications hold: 
\begin{enumerate}
\item $\rho$ is nonlocal $\Rightarrow$ $\rho$ is steerable,
\item $\rho$ is steerable with $n$ settings $\Rightarrow$ $\rho$ is steerable with $n+1$ settings,
\item there exist $n \geq 2$ such that $\rho$ is steerable with with $n$ settings $\Rightarrow$ $\rho$ is entangled,
\item $\rho$ is entangled $\Rightarrow$ $\rho$ is not classical-quantum $D(B|A) \neq 0$,
\item $\rho$ is entangled $\Rightarrow$ $\rho$ is not quantum-classical $D(A|B) \neq 0$,
\item $\rho$ is not quantum-classical $\Rightarrow$ $\rho$ is not classical-classical,
\item $\rho$ is not classical-quantum $\Rightarrow$ $\rho$ is not classical-classical,
\item $\rho$ is not classical-classical $\Rightarrow$ $\rho$ is not a product state, i.e. $\rho$ has non-zero super quantum discord $D_w (A|B) \neq0$, $D_w (B|A) \neq 0$.
\end{enumerate}
\end{theorem}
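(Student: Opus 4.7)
I would prove all eight implications by contrapositive, revealing the nested chain of state classes
\begin{equation*}
\text{product} \subseteq \text{CC} \subseteq \text{CQ} \cap \text{QC} \subseteq \text{separable} \subseteq \text{unsteerable with } n \text{ settings} \subseteq \text{local},
\end{equation*}
together with monotonicity of the steerability condition in the number of settings. Each step is a straightforward inclusion that amounts to unpacking the definitions from Chapter~3 and constructing the appropriate explicit reduction.

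First I would dispatch the ``hidden-variable'' implications (1)--(3). For (1), given an LHS model for $\rho^{AB}$ from $A$ to $B$, i.e., $\sigma_{a|x}=\int d\lambda\,\mu(\lambda)\,p(a,x|\lambda)\,\rho^B_\lambda$, I define $p(b|y,\lambda):=\mathrm{Tr}(M_{b|y}\rho^B_\lambda)$ for Bob's POVM elements $M_{b|y}$; this yields the factorized form $p(a,b|x,y)=\int d\lambda\,\mu(\lambda)\,p(a|x,\lambda)\,p(b|y,\lambda)$ demanded by Definition~\ref{def:locality}, so every unsteerable state is local. For (2), any LHS model that reproduces an $(n{+}1)$-setting assemblage automatically reproduces the sub-assemblage obtained by discarding one setting, so unsteerability is monotone in $n$ in the direction we need. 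For (3), given a separable decomposition $\rho^{AB}=\sum_k p_k\,\rho^A_k\otimes\rho^B_k$, I take $\lambda=k$, $\mu(k)=p_k$, $\rho^B_\lambda=\rho^B_k$, and $p(a,x|\lambda)=\mathrm{Tr}(M_{a|x}\rho^A_k)$; a direct computation shows this LHS model reproduces the assemblage for any choice of Alice's measurements and any number of settings, so separable states are unsteerable with arbitrary $n$. (The $n\geq 2$ hypothesis simply excludes the trivial case in which one measurement is always classically explainable.)

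Next I would handle the ``discord-side'' implications (4)--(8) via inclusions among the classes from Definitions \ref{def:classical-classical}--\ref{def:quantum-classical}. For (4) and (5), a CQ state $\sum_i p_i\,\Pi^A_i\otimes\rho^B_i$ is manifestly a convex combination of product states, hence separable by Definition~\ref{def-separability}; the argument for QC is identical, so entangled states lie outside both classes. For (6) and (7), a CC state is obtained from the CQ form by taking $\rho^B_i=\tfrac{1}{p_i}\sum_j p_{ij}\Pi^B_j$, which is a valid density operator, so CC $\subseteq$ CQ; symmetrically CC $\subseteq$ QC. For (8), given a product state $\rho^A\otimes\rho^B$, I take spectral decompositions $\rho^A=\sum_i\lambda^A_i\ket{i}\bra{i}$ and $\rho^B=\sum_j\lambda^B_j\ket{j}\bra{j}$, which yield $\rho^{AB}=\sum_{i,j}\lambda^A_i\lambda^B_j\,\Pi^A_i\otimes\Pi^B_j$, exactly the CC form with $p_{ij}=\lambda^A_i\lambda^B_j\geq 0$ summing to one; the identification with vanishing super discord on both sides is then immediate from Theorem~\ref{thm:super-discord-product}.

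\paragraph*{Main obstacle.} The bulk of the proof is routine set-theoretic bookkeeping. The only genuinely subtle point is the reduced-state requirement in Definition~\ref{def:classical-classical}: the $\Pi^A_i$ and $\Pi^B_j$ must be (rank-one) spectral projectors of the marginals $\rho^A$ and $\rho^B$. For a product state the marginals are precisely $\rho^A$ and $\rho^B$, so this is automatic; however when these marginals have degenerate spectra there is freedom in choosing the eigenbasis within each eigenspace, and I would explicitly fix any orthonormal eigenbasis to obtain rank-one projectors, verifying that the product form $\rho^A\otimes\rho^B$ is preserved under the resulting decomposition. A secondary point of care concerns the direction convention in the steerability statements (1)--(3): the arguments are symmetric in the choice of $A$-to-$B$ versus $B$-to-$A$, and I would simply apply the appropriate side to whichever direction appears in the hypothesis, noting that ``nonlocal'' is direction-independent while ``steerable'' is not.
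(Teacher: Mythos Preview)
Your proposal is correct and follows the same contrapositive/inclusion scheme as the paper, though you are far more explicit: for (1) and (3) the paper simply cites \cite{calvacanti-2017} rather than writing out the LHV-from-LHS and LHS-from-separable constructions, and for (4)--(7) it just says ``follows from the definitions.'' The one substantive difference is item (8): the paper appeals to Theorem~\ref{thm:super-discord-vs-discord} (the inequality $D_w \geq D$, so a state that is not CC has nonzero discord and hence nonzero super discord), whereas you prove the contrapositive product $\Rightarrow$ CC directly via spectral decompositions of the marginals and then invoke Theorem~\ref{thm:super-discord-product}. Both routes are valid; yours is slightly more self-contained since it avoids the quantitative comparison between $D$ and $D_w$, and your remark about degenerate marginal spectra is a nice point of rigor that the paper does not raise.
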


\begin{proof} $ $ \\
\begin{enumerate}
\item See \cite{calvacanti-2017}.
\item If one has $n+1$ settings at the disposal and the method of steering a state by $n$ settings, then one can perform this method with use of $n$ from $n+1$ available settings.
\item See \cite{calvacanti-2017}.
\item This follows from Definitions \ref{def-entanglement} and \ref{def:classical-quantum}.
\item This follows from Definitions \ref{def-entanglement} and \ref{def:quantum-classical}.
\item This follows from Definitions \ref{def:classical-classical} and \ref{def:quantum-classical}.
\item This follows from Definitions \ref{def:classical-classical} and \ref{def:classical-quantum}.
\item This is a consequence of Theorem \ref{thm:super-discord-vs-discord} \cite{super-discord-1}.
\end{enumerate}
\end{proof}

%\begin{theorem}
The implications in Theorem \ref{thm-implications-1} do not hold the other way around.
%\end{theorem}
%\begin{proof}
%Ranges of the parameter $w$ are different for entanglement, nonlocality, steerability and zero discord, which justifies lack of reverse implications for cases 1, 3, 4.
%end{proof}

\begin{theorem} \label{thm-implications-2}
For any bipartite state $\rho$ the following implications hold: 
\begin{enumerate}
\item $\rho$ is a product state, i.e. $\rho$ has zero super quantum discord $D_w (A|B) = D_w (B|A) =0$ $\Rightarrow$ $\rho$ is classical-classical, i.e. $\rho$ has both quantum discords zero $D(A|B) = B(B|A) =0$,
\item $\rho$ is classical-classical $\Rightarrow$ $\rho$ is both classical-quantum $D(B|A) =0$ and quantum-classical $D(A|B) =0$,
\item $\rho$ is either classical-quantum or quantum-classical $\Rightarrow$ $\rho$ is separable,
\item $\rho$ is separable $\Rightarrow$ $\rho$ is unsteerable (for any number of settings),
\item $\rho$ is unsteerable with $n+1$ settings $\Rightarrow$ $\rho$ is unsteerable with $n$ settings,
\item $\rho$ is unsteerable (with any number of settings) $\Rightarrow$ $\rho$ is local.
\end{enumerate}
\end{theorem}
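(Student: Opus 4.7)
The plan is to prove each of the six implications independently, since most follow either from an earlier result in the excerpt or by a short manipulation of definitions.

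Parts (1), (2), (3), and (5) will be immediate. For (1), Theorem \ref{thm:super-discord-vs-discord} supplies $D_w(A|B) \geq D(A|B) \geq 0$, so vanishing of the super discord forces vanishing of the ordinary discord on each side. For (2), the classical-classical form $\rho^{AB} = \sum_{i,j} p_{ij}^{AB} \Pi^A_i \otimes \Pi^B_j$ regroups as $\sum_i p_i^A \Pi^A_i \otimes \rho^B_i$ with $p_i^A := \sum_j p_{ij}^{AB}$ and $\rho^B_i := (1/p_i^A) \sum_j p_{ij}^{AB} \Pi^B_j$ (ignoring terms where $p_i^A = 0$), which is exactly the classical-quantum form; regrouping over $j$ instead yields the quantum-classical form. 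For (3), each summand $\Pi^A_i \otimes \rho^B_i$ in the classical-quantum decomposition is already a product of density operators (since $\Pi^A_i$ is a rank-one projector, hence a pure state), so the state is a convex combination of products and is therefore separable in the sense of Definition \ref{def-separability}; the quantum-classical case is symmetric. For (5), any local hidden state model valid for a collection of $n+1$ settings remains a model when restricted to any $n$-element subset of those settings, so the implication is trivial.

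Parts (4) and (6) require short constructive arguments. For (4), given a separable decomposition $\rho^{AB} = \sum_k p_k \rho^A_k \otimes \rho^B_k$ and any Alice-side measurement $\{M_{a|x}\}$, the assemblage factorises as
\begin{equation*}
\sigma_{a|x} = \mathrm{Tr}_A\bigl((M_{a|x} \otimes \mathds{1}) \rho^{AB}\bigr) = \sum_k p_k \, \mathrm{Tr}(M_{a|x} \rho^A_k)\, \rho^B_k ,
\end{equation*}
so $\lambda = k$, $\mu(\lambda) = p_k$, response probabilities $\mathrm{Tr}(M_{a|x} \rho^A_k)$, and hidden states $\rho^B_k$ furnish the required LHS model; the construction is independent of how many settings Alice considers, giving unsteerability for every $n$. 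For (6), starting from an LHS decomposition $\sigma_{a|x} = \int_{\Lambda} d\lambda\, \mu(\lambda)\, p(a,x|\lambda)\, \rho^B_\lambda$ and applying any Bob-side measurement $\{M_{b|y}\}$, I compute
\begin{equation*}
p(a,b|x,y) = \mathrm{Tr}(M_{b|y} \sigma_{a|x}) = \int_{\Lambda} d\lambda\, \mu(\lambda)\, p(a,x|\lambda)\, \mathrm{Tr}(M_{b|y} \rho^B_\lambda) ,
\end{equation*}
and identify $p(b|y,\lambda) := \mathrm{Tr}(M_{b|y} \rho^B_\lambda)$; together with the Alice-side factor this reproduces the factorised form required by Definition \ref{def:locality}.

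The only real subtlety I anticipate is notational alignment: one must interpret the excerpt's $p(a,x|\lambda)$ consistently between the steerability and locality definitions (effectively in the role of $p(a|x,\lambda)$ in the standard LHV form), so that the Alice-side factor carried out of the LHS integral in (6) matches the LHV template verbatim. No deep obstacle is hidden in the chain; its real mathematical weight sits in Theorem \ref{thm:super-discord-vs-discord}, which is invoked rather than reproved.
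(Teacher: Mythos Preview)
Your proof is correct, but the route differs from the paper's. The paper's own proof is a single line: it notes that Theorem~\ref{thm-implications-2} is exactly the list of contrapositives of the implications in Theorem~\ref{thm-implications-1}, and invokes the respective definitions. You instead argue each implication directly. For items (1)--(3) and (5) the two approaches unwind to the same elementary manipulations of definitions (and the same appeal to Theorem~\ref{thm:super-discord-vs-discord} for item (1)). The genuine difference is in items (4) and (6): the paper's Theorem~\ref{thm-implications-1} establishes the corresponding forward statements (steerable $\Rightarrow$ entangled; nonlocal $\Rightarrow$ steerable) by citing the review \cite{calvacanti-2017}, whereas you exhibit the LHS model from the separable decomposition and then the LHV model from the LHS integral explicitly. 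Your approach buys self-containment and makes the hierarchy transparent without external references; the paper's approach buys brevity by packaging all the work into the earlier theorem and then taking contrapositives.
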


\begin{proof}
This theorem follows easily from and Theorem \ref{thm-implications-1} and the respective definitions.
\end{proof}

\begin{figure}[H]
\centering
\includegraphics[scale=0.8]{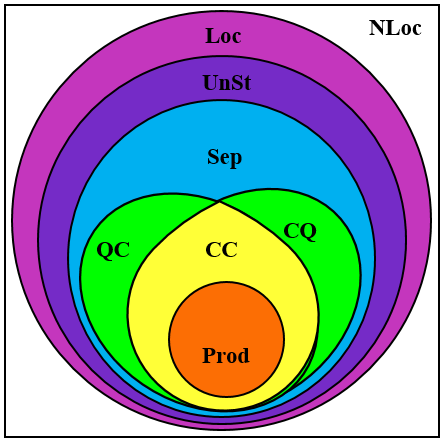}
\caption{Relations between properties of bipartite quantum states: Prod -- product states, CC -- classical-classical states, CQ -- classical-quantum states, QC -- quantum-classical states, Sep -- separable states, UnSt -- unsteerable states, Loc -- local states, NLoc -- nonlocal states. We have mentioned that the states with zero discord are of measure zero in the set of all density matrices, so the figure is out of scale.}
\label{fig:relations-general2}
\end{figure}

\begin{figure}[H]
\centering
\includegraphics[scale=0.8]{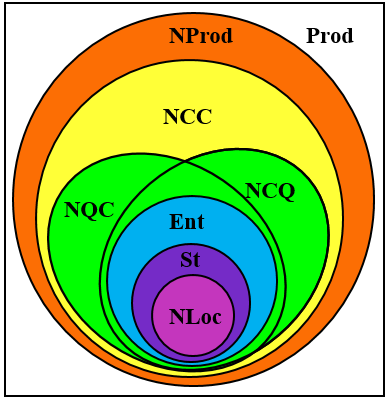}
\caption{Relations between properties of bipartite quantum states: NLoc -- nonlocality, St -- steerability, Ent -- entanglement, NCC -- states that are not classical-classical, NCQ -- states that are not classical-quantum, NQC -- states that are not quantum-classical, NonProd -- states that are not product, Prod -- product states. As before, the figure is out of scale.}
\label{fig:relations-general1}
\end{figure}

The property of negative/non-negative conditional entropy does not belong to the above hierarchy, because the following theorem holds:
\begin{theorem} [\citealp{friis}] \label{nonlocality-entropy}
In general nonlocality does not imply negative conditional entropy and negative conditional entropy does not imply nonlocality. 
\end{theorem}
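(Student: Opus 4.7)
The plan is to establish both non-implications by producing explicit counterexamples drawn from the two parametric families analysed in Section \ref{sec-relations-special}. Theorems of the form ``$P$ does not imply $Q$'' are most cleanly proved by exhibiting a concrete state satisfying $P$ and not $Q$, and Table \ref{table-non-absolute} already packages all the information that we need about locality and conditional entropy for the Werner and Gisin families.

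For the direction ``nonlocality does not imply negative conditional entropy'', I would use a Werner state. From Table \ref{table-non-absolute}, the Werner state $\rho_{\text{Werner}}(w)$ is local iff $w \in [-\tfrac{1}{3}, \tfrac{1}{\sqrt{2}}]$ and has non-negative conditional entropy iff $w \in [-\tfrac{1}{3}, w_0]$ with $w_0 \approx 0.7476$. Since $\tfrac{1}{\sqrt{2}} \approx 0.7071 < w_0$, any choice $w \in (\tfrac{1}{\sqrt{2}}, w_0]$ yields a state that is nonlocal yet has non-negative conditional entropy. Picking, say, $w = 0.72$ and verifying directly from the CHSH criterion (Criterion \ref{criterion-chsh}) that $2\sqrt{\mu_1+\mu_2} > 2$ while computing $S(\rho^{AB}) - S(\rho^B)$ from the eigenvalues of $\rho_{\text{Werner}}$ and $\rho^B = \mathds{1}/2$ would complete this half.

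For the reverse direction ``negative conditional entropy does not imply nonlocality'', the Werner family alone is insufficient: in that family, negative conditional entropy forces $w > w_0 > \tfrac{1}{\sqrt{2}}$ and hence nonlocality. I would therefore turn to the Gisin states, whose conditional entropy depends nontrivially on both $\lambda$ and $\theta$ while their nonlocality is governed by the joint inequality in Table \ref{table-non-absolute}. The strategy is to fix $\theta$ away from $\pi/4$ (to escape the Weyl-state regime where the two conditions are too tightly coupled), and scan $\lambda$ so as to make the conditional-entropy expression
\begin{equation*}
-2\tfrac{1-\lambda}{2}\log\!\tfrac{1-\lambda}{2} - \lambda\log\lambda + \bigl(\tfrac{1-\lambda}{2} + \lambda\cos^{2}\theta\bigr)\log\!\bigl(\tfrac{1-\lambda}{2} + \lambda\cos^{2}\theta\bigr) + \bigl(\tfrac{1-\lambda}{2} + \lambda\sin^{2}\theta\bigr)\log\!\bigl(\tfrac{1-\lambda}{2} + \lambda\sin^{2}\theta\bigr)
\end{equation*}
turn negative while keeping $\max\{\sqrt{\lambda^{2}\sin^{2}(2\theta)+(1-2\lambda)^{2}},\sqrt{2}\lambda\sin(2\theta)\} \leq 1$. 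The existence of such $(\lambda,\theta)$ is exactly the overlap of the ``Loc'' and ``NCE'' regions displayed in Fig.\ \ref{fig:gisin}; I would extract a specific witness numerically and verify both inequalities by direct substitution.

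The main obstacle is the second direction: unlike the Werner case, there is no one-parameter comparison between two explicit thresholds, and one must inspect the two-variable regions carefully to ensure that the chosen point really lies inside the locality region (where the nonlocality criterion is tight only on the boundary) and simultaneously in the interior of the negative-conditional-entropy region. Once a single such $(\lambda,\theta)$ is pinned down and both inequalities are checked, the theorem follows because a single counterexample refutes each of the two putative implications.
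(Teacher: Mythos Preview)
Your proposal is correct and follows the same counterexample strategy as the paper. The only difference is that the paper draws \emph{both} counterexamples from the Gisin family (pointing to Fig.~\ref{fig:gisin}, which exhibits Gisin states that are nonlocal with non-negative conditional entropy as well as Gisin states that are local with negative conditional entropy), whereas you use a Werner state for the first direction. Your choice is arguably cleaner there, since the Werner window $w \in (\tfrac{1}{\sqrt{2}}, w_0]$ gives a one-parameter comparison of explicit thresholds rather than requiring inspection of a two-dimensional region; for the second direction you are forced back to the Gisin family, exactly as the paper notes (and as you correctly observe, since for Werner states negative conditional entropy does imply nonlocality).
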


We can see that this is true by looking at Table \ref{table-non-absolute} and Fig. \ref{fig:gisin}. For the Werner states there is an implication from negative conditional entropy to nonlocality.  
However, this is only a special case, as there are nonlocal Gisin states with positive conditional entropy and Gisin states with negative conditional entropy which are local.

However, negative conditional entropy requires entanglement:
\begin{theorem} [\citealp{cerf-adami}]
All states with negative conditional entropy are entangled.
\end{theorem}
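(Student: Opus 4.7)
The plan is to prove the contrapositive: every separable bipartite state has non-negative conditional entropy, $S(A|B) \geq 0$. Recall that by definition $S(A|B) = S(\rho^{AB}) - S(\rho^B)$, so the goal is to show $S(\rho^{AB}) \geq S(\rho^B)$ whenever $\rho^{AB}$ is separable. The key tool will be strong subadditivity of von Neumann entropy, applied to a carefully chosen extension of $\rho^{AB}$ that records the classical label of the separable decomposition.

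First I would use Definition \ref{def-separability} to write $\rho^{AB} = \sum_{k=1}^r p_k \rho^A_k \otimes \rho^B_k$, and then introduce an auxiliary ``flag'' system $E$ spanned by an orthonormal basis $\{\ket{k}\}$. Consider the extended classical-quantum state
\begin{equation}
\sigma^{ABE} = \sum_k p_k \, \rho^A_k \otimes \rho^B_k \otimes \ket{k}\bra{k}^E,
\end{equation}
which satisfies $\mathrm{Tr}_E \sigma^{ABE} = \rho^{AB}$. Because $\sigma^{ABE}$ is block-diagonal in the $E$ basis (and similarly for its partial traces over $A$), a direct computation using Theorem \ref{vonneumann-shannon} yields
\begin{equation}
S(\sigma^{ABE}) = H(\vec{p}) + \sum_k p_k \bigl[S(\rho^A_k) + S(\rho^B_k)\bigr], \qquad S(\sigma^{BE}) = H(\vec{p}) + \sum_k p_k S(\rho^B_k),
\end{equation}
while $S(\sigma^{AB}) = S(\rho^{AB})$ and $S(\sigma^B) = S(\rho^B)$.

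Next I would invoke strong subadditivity, $S(\sigma^{ABE}) + S(\sigma^B) \leq S(\sigma^{AB}) + S(\sigma^{BE})$. Substituting the four expressions above, the $H(\vec{p})$ terms and the $\sum_k p_k S(\rho^B_k)$ terms cancel, leaving
\begin{equation}
\sum_k p_k S(\rho^A_k) + S(\rho^B) \leq S(\rho^{AB}).
\end{equation}
Since each $S(\rho^A_k) \geq 0$, this gives $S(\rho^{AB}) - S(\rho^B) \geq 0$, i.e.\ $S(A|B) \geq 0$, which is the contrapositive of the claim.

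The main obstacle is recognizing the right tool: bare concavity or subadditivity of $S$ is insufficient, since the wrong direction comes out in each case, and one must not try to work directly with $\rho^{AB}$. The trick that unlocks the proof is enlarging the state by a classical register labeling the separable decomposition, which turns the mixture into a block-diagonal state whose entropies are transparent, and then using strong subadditivity on the enlarged state to recover an inequality about the original $\rho^{AB}$. An analogous argument with $A$ and $B$ swapped establishes $S(B|A) \geq 0$ as well.
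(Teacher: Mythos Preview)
Your proof is correct. The paper itself does not supply a proof of this theorem; it merely states the result and attributes it to \citealp{cerf-adami}, so there is no in-paper argument to compare against. Your flag-state extension together with strong subadditivity is one of the standard proofs of this fact; an equivalent phrasing is that conditional entropy $\rho^{AB}\mapsto S(A|B)_{\rho^{AB}}$ is concave (itself a consequence of strong subadditivity), and on each product term $\rho^A_k\otimes\rho^B_k$ one has $S(A|B)=S(\rho^A_k)\geq 0$, whence $S(A|B)_{\rho^{AB}}\geq \sum_k p_k S(\rho^A_k)\geq 0$ for any separable $\rho^{AB}$.
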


It is known that for pure states some of the analysed properties become equivalent:
\begin{theorem} 
If $\rho$ is a pure bipartite state, the following equivalences hold: $\rho$ is nonlocal $\Leftrightarrow$ $\rho$ is steerable $\Leftrightarrow$ $\rho$ is entangled $\Leftrightarrow$ $\rho$ has negative conditional entropy. Equivalently: $\rho$ is local $\Leftrightarrow$ $\rho$ is not steerable $\Leftrightarrow$ $\rho$ is separable $\Leftrightarrow$ $\rho$ has non-negative conditional entropy.
\end{theorem}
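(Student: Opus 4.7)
The plan is to close the cycle of equivalences by combining (a) the general forward chain already established earlier in the chapter, (b) a Schmidt-decomposition computation of conditional entropy that is special to pure states, and (c) Gisin's theorem that every pure entangled bipartite state violates the CHSH inequality.

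First I would observe that Theorem \ref{thm-implications-1} (items 1 and 3) already gives the forward chain $\rho$ nonlocal $\Rightarrow$ $\rho$ steerable $\Rightarrow$ $\rho$ entangled for arbitrary bipartite states, so those implications carry over to the pure case with no additional work. Consequently the new content is the reverse direction: entangled $\Rightarrow$ nonlocal, plus the equivalence with negative conditional entropy.

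Next, for the equivalence between entanglement and negative conditional entropy, I would use the Schmidt decomposition $\ket{\psi}_{AB} = \sum_i \sqrt{\lambda_i}\ket{i_A}\ket{i_B}$ with $\lambda_i \geq 0$ and $\sum_i \lambda_i = 1$. Since the joint state is pure, $S(\rho^{AB}) = 0$, while $\rho^A$ and $\rho^B$ share the spectrum $\{\lambda_i\}$, so by Theorem \ref{vonneumann-shannon} both reduced entropies equal $H(\vec{\lambda})$. Hence
\begin{equation}
S(\rho^A|\rho^B) = S(\rho^{AB}) - S(\rho^B) = -H(\vec{\lambda}) \leq 0,
\end{equation}
with equality iff exactly one Schmidt coefficient is non-zero, i.e.\ iff $\ket{\psi}$ is a product state. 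Since for pure states being a product state and being separable coincide (as noted after Definition \ref{def-entanglement}), this yields $\rho$ entangled $\Leftrightarrow$ $S(\rho^A|\rho^B) < 0$ and simultaneously separable $\Leftrightarrow$ non-negative conditional entropy.

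The main obstacle is the remaining implication entangled $\Rightarrow$ nonlocal for pure states, which is Gisin's theorem. I would prove it by passing to the Schmidt form and restricting attention to a two-dimensional subspace on each side spanned by Schmidt vectors indexed by two non-vanishing $\lambda_i$ (such a pair exists precisely because $\ket{\psi}$ is entangled). On this effective two-qubit pure state I would compute the correlation tensor $t$ in the standard Pauli basis and verify that the two largest eigenvalues of $t^\top t$ sum to strictly more than $1$, so that Criterion \ref{criterion-chsh} gives $\max\langle\mathfrak{B}_{\text{CHSH}}\rangle = 2\sqrt{\mu_1+\mu_2} > 2$. Explicitly, choosing measurement directions in the $xz$-plane of each qubit with angles tuned to the two selected Schmidt weights realises the violation. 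This step is the only genuinely non-trivial part; once it is in hand, negating all four properties yields the dual equivalences for local, unsteerable, separable and non-negative-conditional-entropy states, completing the proof.
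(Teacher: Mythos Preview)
Your argument is correct. The paper actually states this theorem without proof, treating it as a known summary of results from the literature, so there is no internal proof to compare against. Your route---the forward chain from Theorem \ref{thm-implications-1}, the Schmidt-decomposition identity $S(\rho^A|\rho^B)=-H(\vec{\lambda})$ for pure states (using Theorem \ref{vonneumann-shannon} and the remark after Definition \ref{def-entanglement} that separable pure states are product), and Gisin's theorem established via Criterion \ref{criterion-chsh} on the effective two-qubit block spanned by two non-zero Schmidt vectors---is precisely the standard way to close the cycle and would constitute a complete proof of what the paper merely asserts.
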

% references

The only property, which was not mentioned yet in the above theorems, is contextuality. It is different from the other properties because its definition does not rely on the division of physical system into subsystems. In fact, there are contextual states even in 3 dimensions, whereas the other properties are defined for systems of dimension at least 4 (the dimension cannot be a prime number). Therefore, for sure contextuality does not collapse to any other property described here. What is more, the set of contextual states does not contain and is not contained in any set of states possessing one of the other properties. However, one may ask what is the relation between contextuality and other properties in spaces where all of them are well-defined and, as far as I know, this relation has not been investigated.

\chapter{Absolute properties of quantum states}

Each of the properties defined in the previous chapter can be possessed by a given quantum state absolutely or non-absolutely. A property is possessed by a given state \textbf{absolutely} iff it is preserved under arbitrary unitary operation. Otherwise it is possessed \textbf{non-absolutely}. In each pair of properties usually exactly one of them can be possessed absolutely, e.g. there exist states absolutely separable, but there are no states absolutely entangled. The only exception is contextuality and noncontextuality. In this chapter we will put forward necessary and sufficient conditions for a given state possessing a given property absolutely (as far as such conditions are known). We will start from stating that usually only \textbf{global} unitary transformations matter in this context. The case of contextuality and noncontextuality is omitted here and will be considered in Section \ref{absolute-noncontextuality}.

\begin{fact} \label{local-unitary-does-not-change}
For any bipartite state $\rho$, none of the following properties of $\rho$: locality/nonlocality, steerability/unsteerability, entanglement/separability, the value of quantum discord and the value of quantum super discord can be changed by performing \textbf{local} unitary transformations.
\end{fact}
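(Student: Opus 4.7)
The plan is to verify invariance property by property, exploiting two structural features of a local unitary $U=U_A\otimes U_B$: it preserves the tensor product structure by construction, and it is globally invertible with inverse $U_A^\dagger\otimes U_B^\dagger$ which is itself a local unitary. The invertibility is what lets each implication be upgraded to an equivalence, so that the properties truly cannot be changed (in either direction) rather than merely one-sidedly preserved.

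First I would dispatch separability/entanglement directly from Definition \ref{def-separability}. If $\rho^{AB}=\sum_k p_k\,\rho_k^A\otimes\rho_k^B$, then conjugation by $U_A\otimes U_B$ distributes through the tensor product and yields $\sum_k p_k(U_A\rho_k^A U_A^\dagger)\otimes(U_B\rho_k^B U_B^\dagger)$, which is again a convex combination of products of density operators because $U_A\rho_k^A U_A^\dagger$ and $U_B\rho_k^B U_B^\dagger$ are states. The reverse implication comes from applying the same argument to $U_A^\dagger\otimes U_B^\dagger$. Hence separability is preserved in both directions, which automatically gives the same conclusion for entanglement.

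Next I would handle locality and steerability by absorbing the local unitaries into the measurement operators on each side. For a bipartite measurement scenario, measuring an operator $M$ on $(U_A\otimes U_B)\rho(U_A\otimes U_B)^\dagger$ produces the same outcome statistics as measuring $(U_A^\dagger\otimes U_B^\dagger)M(U_A\otimes U_B)$ on $\rho$, and this correspondence is a bijection between admissible measurement families because $U_A,U_B$ are unitary. Therefore an LHV (resp.\ LHS) decomposition of the outcome statistics for $\rho$ at the setting level translates into an LHV (resp.\ LHS) decomposition for the transformed state and conversely, so locality and unsteerability (and hence their negations) are untouched. For quantum discord, invariance under local unitaries is already listed in the properties of discord in the previous chapter, and for super quantum discord the same argument works: the weak measurement operators $P^A(\pm\xi)$ are defined from a pair of orthogonal projectors, and conjugating the projectors by $U_A$ sends admissible weak measurements to admissible weak measurements bijectively, while $S(\rho^B)$ and the mutual information $I(A:B)$ are unitarily invariant on each marginal and on the joint state.

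The conceptual content is entirely in the first observation — that the local tensor factor structure and the set of admissible local observables/measurements are stable under $U_A\otimes U_B$ — so there is no real obstacle, only bookkeeping. The only point that requires minor care is in the discord/super discord cases, where one must check that the maximization defining $J(B:A)$ is unchanged: the change of variables $\{\Pi_i^A\}\mapsto\{U_A\Pi_i^A U_A^\dagger\}$ (respectively the analogous one for weak measurements) is a bijection of the optimization domain onto itself, so the maximum is achieved at the same value. No new computations beyond this identification are needed.
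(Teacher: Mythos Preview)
Your argument is correct. The paper itself states this fact without proof, treating it as a well-known structural observation; your proposal supplies exactly the standard verification the paper omits. The key points you hit --- that $U_A\otimes U_B$ maps each side's admissible measurement families (projective, POVM, weak) bijectively onto themselves, that it preserves product decompositions in the separability definition, and that the optimizations defining $J(B:A)$ in discord and super discord are over domains stable under this bijection --- are precisely what is needed, and the invertibility remark correctly upgrades each one-sided preservation to an equivalence. The only reference back to the paper is your appeal to the listed property ``quantum discord is invariant under local unitary transformations'' in the discord section, which is consistent with how the thesis uses that fact. Nothing is missing.
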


% All the parts of the theorem are already proven in different papers. For locality/nonlocality see (reference), for steerability/unsteerability see (reference), for entanglement/separability see (reference), for quantum discord see (reference) and for quantum super discord see \cite{super-discord-2}.

%\begin{proof}
%\end{proof}

\section{Absolute separability} \label{absolute-separability}

The definitions of separability \ref{def-separability} and entanglement \ref{def-entanglement} assume a particular choice of factorisation of the Hilbert space. As a consequence, a state which is entangled with respect to a given choice of both subsystems can be separable for another choice of both subsystems. Therefore, one can formulate the following definition of absolutely separable states:

\begin{definition} [\citealp{kus-zyczkowski-2001}]
A bipartite state $\rho$ is called absolutely separable iff for any unitary operator $U$, the state $\rho' = U \rho U^\dagger$ is separable.
\end{definition}

The following three theorems describe the conditions under which a quantum state is absolutely separable:

\begin{theorem} [\citealp{thirring2011}] \label{thm:absolute-separability-pure}
Any separable pure state can be transformed by unitary operation into entangled state and the other way around. Therefore no pure separable states are absolutely separable.
\end{theorem}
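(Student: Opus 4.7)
The plan is to exploit the fact that the unitary group acts transitively on the set of pure states of a given Hilbert space: for any two unit vectors $\ket{\psi}, \ket{\phi} \in \mathcal{H}$, one can extend each to an orthonormal basis of $\mathcal{H}$ and define a unitary $U$ mapping one basis onto the other, so that $U \ket{\psi} = \ket{\phi}$. This transitivity is the only nontrivial ingredient; everything else is bookkeeping.

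First I would fix an arbitrary pure separable state $\rho = \ket{\psi}\bra{\psi}$, where by the remark following Definition \ref{def-entanglement} separability for pure states coincides with being a product state, so $\ket{\psi} = \ket{a} \otimes \ket{b}$ for some $\ket{a} \in \mathcal{H}_A$, $\ket{b} \in \mathcal{H}_B$. Next I would pick a concrete pure entangled state in $\mathcal{H}_A \otimes \mathcal{H}_B$, for instance (in the two-qubit case) the Bell state $\ket{\Phi^+}$ defined in Section \ref{sec:special-states}. By transitivity there is a unitary $U$ with $U \ket{\psi} = \ket{\Phi^+}$; conjugation gives $U \rho U^\dagger = \ket{\Phi^+}\bra{\Phi^+}$, a pure entangled state. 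For dimensions higher than $2 \times 2$ one simply embeds an analogous maximally entangled state supported on a $2 \times 2$ block and proceeds identically.

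Reversing the roles of $\ket{\psi}$ and $\ket{\Phi^+}$ yields the converse half of the statement: for any pure entangled $\ket{\phi}$ there is a unitary $V$ with $V \ket{\phi} = \ket{a} \otimes \ket{b}$, turning an entangled pure state into a separable one. This establishes both halves of the first sentence of the theorem.

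The final claim that no pure separable state is absolutely separable is then immediate from the definition of absolute separability: the unitary $U$ constructed above witnesses that $\rho$ is not preserved within the separable set. I do not anticipate any real obstacle here; the whole argument rests on transitivity of the unitary action on pure states together with the existence of at least one entangled pure state in $\mathcal{H}_A \otimes \mathcal{H}_B$, which requires only $\dim \mathcal{H}_A, \dim \mathcal{H}_B \geq 2$.
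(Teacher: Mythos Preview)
Your argument is correct: transitivity of the unitary group on the unit sphere of $\mathcal{H}_A \otimes \mathcal{H}_B$ (extend each of two unit vectors to an orthonormal basis and send one basis to the other) is exactly what is needed, together with the existence of at least one pure entangled vector when $\dim\mathcal{H}_A, \dim\mathcal{H}_B \geq 2$. The paper itself does not supply a proof of this theorem --- it merely states the result and cites \citealp{thirring2011} --- so there is nothing to compare your route against; your write-up is the standard and expected justification.
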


%\begin{proof}

%\end{proof}

\begin{theorem} [\citealp{verstraete-2001}] \label{thm:absolute-separability-mixed}
If $\rho$ is a mixed two-qubit state with an ordered spectrum $d_1 \geq d_2 \geq d_3 \geq d_4$, then $\rho$ is absolutely separable iff 
\begin{equation}
d_1 - d_3 - 2 \sqrt{d_2 d_4} \leq 0.
\end{equation}
\end{theorem}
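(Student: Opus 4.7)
The plan is to use the PPT criterion to turn absolute separability into a spectral question about the partial transpose, and then solve an extremal problem over the unitary orbit of the diagonal spectrum matrix.

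Reduction: Since any two states with the same spectrum are related by a unitary conjugation, $\rho$ is absolutely separable iff every state $\sigma$ with spectrum $(d_1, d_2, d_3, d_4)$ is separable. Criterion~\ref{criterion-ppt} is applicable because we are in dimension $2 \times 2$, so this is in turn equivalent to $(UDU^\dagger)^{\top_B} \geq 0$ for every $U \in \mathrm{U}(4)$, where $D = \mathrm{diag}(d_1, d_2, d_3, d_4)$. The theorem will therefore follow from the identity
$$
\min_{U \in \mathrm{U}(4)} \lambda_{\min}\bigl((UDU^\dagger)^{\top_B}\bigr) \;=\; d_3 + 2\sqrt{d_2 d_4} - d_1,
$$
which I would prove by showing the two matching inequalities.

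For the easy direction (necessity of the stated condition), I would exhibit a concrete state with the prescribed spectrum whose partial transpose has smallest eigenvalue exactly $d_3 + 2\sqrt{d_2 d_4} - d_1$. A natural candidate is a Bell-diagonal state: by Fact~\ref{weyl-bell-diagonal} every Bell-diagonal state is a Weyl state of the form \eqref{eq:weyl}, its partial transpose is obtained by flipping the sign of one of the $\tilde t_i$, and using \eqref{eq:bell1}--\eqref{eq:bell4} one can assign the weights $(d_1, d_2, d_3, d_4)$ to the four Bell states in a specific order so that the resulting $\sigma^{\top_B}$ has an off-diagonal block whose smallest eigenvalue is precisely $d_3 + 2\sqrt{d_2 d_4} - d_1$. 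Whenever this quantity is negative, $\sigma$ is NPT, hence entangled, so $\rho$ is not absolutely separable.

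The harder direction is the lower bound: for every unitary $U$, $\lambda_{\min}\bigl((UDU^\dagger)^{\top_B}\bigr) \geq d_3 + 2\sqrt{d_2 d_4} - d_1$. My approach would be to write $\sigma = UDU^\dagger = \sum_i d_i \ket{v_i}\bra{v_i}$ in spectral form, so that $\sigma^{\top_B} = \sum_i d_i \ket{v_i}\bra{v_i}^{\top_B}$, where each term is a Hermitian matrix whose nonzero eigenvalues are $\pm c_i c_i'$ in terms of the Schmidt coefficients of $\ket{v_i}$. Bounding $\lambda_{\min}(\sigma^{\top_B})$ from below then reduces to an optimization over orthonormal four-frames in $\mathcal{H}_2 \otimes \mathcal{H}_2$ with prescribed weights. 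The main obstacle is precisely this optimization: the quotient of $\mathrm{U}(4)$ by the stabilizer of $D$ is twelve-dimensional and the Schmidt-coefficient constraints are non-convex. I would attack it via a symmetry reduction, noting that the objective is invariant under local unitaries (Fact~\ref{local-unitary-does-not-change}) and under global partial transposition; this forces the minimizer to lie in the fixed-point set, which turns out to consist of X-form states parametrised by only four parameters. On that low-dimensional family the smallest eigenvalue of the partial transpose admits a closed-form expression, and elementary analysis (Lagrange multipliers, or direct inspection of the $2\times 2$ anti-diagonal block) matches the bound, with equality exactly on the configuration built in the previous paragraph. Combining the two directions yields the iff.
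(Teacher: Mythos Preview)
The paper does not actually prove Theorem~\ref{thm:absolute-separability-mixed}; it merely quotes the result from \cite{verstraete-2001}. So there is no ``paper's own proof'' to compare with, and your proposal has to stand on its own. Unfortunately it contains two genuine errors.

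First, the explicit witness you propose for the necessity direction does not work. For a Bell--diagonal state $\sigma=\sum_i p_i\,\rho_{\beta_i}$ the partial transpose is again a Weyl state, and a short computation using \eqref{eq:weyl} and \eqref{eq:bell1}--\eqref{eq:bell4} gives $\mathrm{spec}(\sigma^{\top_B})=\{\tfrac12-p_i\}_{i=1}^4$. These eigenvalues are \emph{linear} in the $p_i=d_i$; no choice of ordering can ever produce a term $\sqrt{d_2 d_4}$. Concretely, for the spectrum $(\tfrac12,\tfrac12,0,0)$ the Verstraete condition fails ($d_1-d_3-2\sqrt{d_2 d_4}=\tfrac12>0$), yet every Bell--diagonal state with that spectrum has $\lambda_{\min}(\sigma^{\top_B})=0$ and is PPT. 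The correct extremal state is \emph{not} Bell--diagonal: one takes eigenvectors $\ket{00},\ket{11}$ for $d_2,d_4$ and $\tfrac{1}{\sqrt2}(\ket{01}\pm\ket{10})$ for $d_1,d_3$, and the resulting $\sigma^{\top_B}$ has smallest eigenvalue $\tfrac{d_2+d_4}{2}-\sqrt{(\tfrac{d_2-d_4}{2})^2+(\tfrac{d_1-d_3}{2})^2}$, whose sign coincides with that of $d_3+2\sqrt{d_2 d_4}-d_1$ but whose value does not. This also shows that your displayed identity for $\min_U\lambda_{\min}$ is false as written (e.g.\ for a pure maximally entangled state the minimum is $-\tfrac12$, not $-1$).

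Second, the symmetry argument for the hard direction is not sound. Invariance of the objective under local unitaries lets you pass to a \emph{cross--section} of orbits, not to a fixed--point set; and the local--unitary orbit space of two--qubit states is nine--dimensional, so it is not covered by X--states (which form a seven--parameter family). There is no reason a priori why the minimizer should be of X--form, and indeed the original proof in \cite{verstraete-2001} does not proceed this way: it optimizes the Wootters concurrence $C(\sigma)=\max\{0,\mu_1-\mu_2-\mu_3-\mu_4\}$ (with $\mu_i$ the singular values of $\sqrt{\sigma}(\sigma_y\otimes\sigma_y)\sqrt{\sigma}^{\,*}$) directly over the unitary orbit, which is what produces the $\sqrt{d_2 d_4}$ term. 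Your reduction--to--PPT strategy is the right starting point, but both the extremal construction and the optimization step need to be replaced.
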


The next theorem provides a generalisation of this result for a higher dimension of the second subsystem, where the first subsystem remains 2-dimensional:
\begin{theorem} [\citealp{hildebrand-2007} for $2 \times 3$ case, \citealp{johnston-2013} for the remaining cases] \label{thm:absolute-separability-higherdim} 
If $\rho$ is a bipartite state of dimension $2 \times n$ (for arbitrary $n$) with an ordered spectrum $d_1 \geq d_2 \geq \ldots \geq d_{2n}$, then $\rho$ is absolutely separable iff
\begin{equation}
d_1 - d_{2n-1} - 2 \sqrt{d_{2n-2} d_{2n}} \leq 0
\end{equation}
\end{theorem}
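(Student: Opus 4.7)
The plan is to reduce absolute separability to a purely spectral question, then split the proof into a necessity direction (which uses only the PPT criterion) and a sufficiency direction (which is the subtle one for $n \geq 4$). First I would observe that since the unitary orbit of $\rho$ is precisely the set of Hermitian operators with the same ordered spectrum $d_1 \geq \cdots \geq d_{2n}$, absolute separability depends only on this spectrum. The claim then becomes: every mixed state with spectrum $\{d_i\}$ is separable iff $d_1 - d_{2n-1} - 2\sqrt{d_{2n-2}d_{2n}} \leq 0$.

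For the ``only if'' direction I would generalise Verstraete's $2 \times 2$ construction used in Theorem~\ref{thm:absolute-separability-mixed}. Assuming the spectral inequality fails, I would exhibit an explicit unitary $U$ such that $(U\rho U^\dagger)^{\top_B}$ has a negative eigenvalue, hence by Criterion~\ref{criterion-ppt} (whose ``PPT is necessary for separability'' half holds in every dimension) fails to be separable. Concretely, I would place the four extremal eigenvalues $d_1, d_{2n-2}, d_{2n-1}, d_{2n}$ into a four-dimensional subspace aligned with a single $2 \times 2$ sub-bipartition of $\mathcal{H}_2 \otimes \mathcal{H}_n$, arranged exactly as in the $2\times 2$ argument, and place the remaining eigenvalues diagonally so that the partial transpose has a $2\times 2$ principal submatrix with determinant $d_{2n-1}^2 - \tfrac{1}{4}(d_1 - d_{2n-1})^2$ (or the analogous expression) that becomes negative precisely when the spectral inequality is violated.

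The main obstacle is the ``if'' direction. It suffices to establish two things: first, that the spectral inequality is equivalent to absolute PPT in $2 \times n$; second, that in the $2 \times n$ setting absolute PPT already implies absolute separability, even though ordinary PPT does not imply separability for $n \geq 4$. For the first part I would write $\rho$ in block form $\bigl(\begin{smallmatrix}A & B \\ B^\dagger & C\end{smallmatrix}\bigr)$ with $n \times n$ blocks, apply Schur complements to the partial transpose, and minimise the smallest eigenvalue over the unitary orbit, hoping to obtain $\tfrac{1}{2}(d_{2n-1} - d_1) + \sqrt{d_{2n-2}d_{2n}}$ as the infimum, which rearranges to the claimed inequality. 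For the second part I would try to show that any absolutely PPT state admits an explicit separable decomposition, either by reducing to a suitable $2 \times 2$ sub-bipartition via partial isometries, or by proving the stronger claim that such a state lies within a known separable ball around the maximally mixed state in Frobenius norm.

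I expect the unitary minimisation yielding the absolute-PPT criterion to be delicate but essentially mechanical once the block structure and Schur complements are set up, since it echoes Hildebrand's $2\times 3$ argument. The true conceptual obstacle is closing the gap between absolute PPT and absolute separability in higher $n$: this demands an argument that detects exactly why the spectrum-restricted unitary orbit avoids the bound entangled PPT states that otherwise populate $2 \times n$ for $n \geq 4$, and I anticipate this requires a tailored witness or decomposition analysis exploiting the tightness of the spectral inequality rather than any generic PPT-implies-separable reasoning.
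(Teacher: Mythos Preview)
The paper does not prove this theorem; it is quoted with attribution, and the only indication of the proof route is the remark after Theorem~\ref{thm:absolute-ppt} that ``for the $2\times n$ case absolute separability and absolute PPT are equivalent \ldots\ and in fact Theorem~\ref{thm:absolute-separability-higherdim} \ldots\ [was] proven with use of Theorem~\ref{thm:absolute-ppt}.'' Your high-level architecture---reduce to spectra, handle necessity by an explicit NPT witness, and for sufficiency pass through absolute PPT and then close the gap absolute-PPT $\Rightarrow$ absolute-separable in $2\times n$---is exactly this route, and you correctly locate the real content in the last implication.

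Two concrete issues. First, a minor one: your candidate determinant $d_{2n-1}^2-\tfrac14(d_1-d_{2n-1})^2$ cannot be right, since the spectral condition involves $d_{2n-2}d_{2n}$; the standard construction places $d_{2n-2}$ and $d_{2n}$ on the diagonal of the relevant $2\times 2$ partial-transpose block with off-diagonal $\tfrac12(d_1-d_{2n-1})$, giving determinant $d_{2n-2}d_{2n}-\tfrac14(d_1-d_{2n-1})^2$. Second, and more seriously, neither of your proposed mechanisms for the sufficiency gap will close it. The separable-ball argument fails outright: the set of absolutely separable states is strictly larger than any Frobenius ball about $\mathds{1}/(2n)$, so membership in such a ball is only sufficient, never necessary. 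The ``reduce to a $2\times 2$ sub-bipartition'' idea is too vague to carry weight---a generic state on the unitary orbit has full support and does not sit inside any $2\times 2$ subfactorisation. Johnston's actual argument is specific to the $2\times n$ block structure: writing $U\rho U^\dagger=\bigl(\begin{smallmatrix}A&B\\B^\dagger&C\end{smallmatrix}\bigr)$ and exploiting that in $2\times n$ separability is equivalent to PPT together with a rank/positivity condition that can be controlled uniformly from the spectral inequality. Without that structure-specific step your outline remains a plan rather than a proof.
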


As we have seen, there is an asymmetry between separability and entanglement. Any state is separable in some basis (so there are no absolutely entangled states), but there are some states which are absolutely separable.

\section{Absolute PPT (Positive Partial Transpose)} \label{absolute-ppt}

For the PPT property one can formulate analogous notion of absolute PPT property as below:

\begin{definition}
A bipartite state $\rho $ is called absolutely PPT (Positive Partial Transpose) iff for any unitary operator $U$, the state $\rho' = U \rho U^\dagger$ satisfies PPT criterion \ref{criterion-ppt}.
\end{definition}

To give the necessary and sufficient conditions for a state's being absolutely PPT, let us define some auxiliary objects. We assume that the dimension of Hilbert space is $m \times n$. Define $p = \text{min} (m, n)$, $p_+ = p(p+1)/2$, $p_- = p(p-1)/2$, $S_+ = \{ (k,l) | 1 \leq k \leq l \leq p \}$, $S_- = \{ (k,l) | 1 \leq k < l \leq p \}$. Then, we define orderings of $S_+$ and $S_-$, namely $\sigma_+ : S_+ \mapsto \{ 1, \ldots, p_+ \}$ and $\sigma_- : S_- \mapsto \{ 1, \ldots, p_ \}$. This enables us to formulate the following theorem:

\begin{theorem} [\citealp{hildebrand-2007}] \label{thm:absolute-ppt} 
A bipartite state $\rho \in \tilde{\mathcal{H}}_m \otimes \tilde{\mathcal{H}}_n$ is absolutely PPT iff for any $\sigma_+, \sigma_-$, the following inequality holds:
\begin{equation}
\Lambda (\lambda ; \sigma_+ , \sigma_-) + \Lambda (\lambda ; \sigma_+, \sigma_-)^\top \geq 0,
\end{equation}
where $\Lambda$ is a matrix with elements
\begin{equation}
\Lambda_{kl} (\lambda ; \sigma_+ , \sigma_-) = \left\{
\begin{array}{c}
\lambda_{nm + 1 - \sigma_+ (k,l)} , \quad k \leq l,\\
- \lambda_{\sigma_- (l,k)}, \quad \quad \quad k > l, \\
\end{array}
\right.
\end{equation}
and $\lambda_1, \ldots, \lambda_{mn}$ are the eigenvalues of $\rho$ in decreasing order, assembled into a vector $\lambda$.
\end{theorem}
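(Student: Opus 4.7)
My plan is to reduce the absolute-PPT property, which by the orbit-spectrum correspondence $\{U \rho U^\dagger\} = \{\sigma : \mathrm{spec}(\sigma) = \lambda\}$ is a condition on the spectrum $\lambda$ alone, to a family of matrix positive-semidefiniteness conditions on $\Lambda + \Lambda^\top$. The reduction rests on three ingredients: self-adjointness of partial transpose in the Hilbert-Schmidt inner product, von Neumann's trace inequality, and the Schmidt decomposition of pure states.

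First, fix a unit vector $\ket{\psi}$ and set $P = \ket{\psi}\bra{\psi}$. The PPT of $U \rho U^\dagger$ tested on $\ket{\psi}$ reads
\begin{equation*}
\bra{\psi} (U \rho U^\dagger)^{T_B} \ket{\psi} = \text{Tr}\bigl((U \rho U^\dagger)^{T_B} P\bigr) = \text{Tr}\bigl(U \rho U^\dagger \, P^{T_B}\bigr) \geq 0.
\end{equation*}
Minimising the right-hand side over $U$ by von Neumann's trace inequality gives the spectral inequality $\sum_{i=1}^{mn} \lambda_i \, \mu_{mn+1-i}(P^{T_B}) \geq 0$, where $\mu_j$ denotes the $j$-th largest eigenvalue of $P^{T_B}$, and this must hold for every $\ket{\psi}$.

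Next, I would compute the spectrum of $P^{T_B}$ via the Schmidt decomposition $\ket{\psi} = \sum_{k=1}^{p} s_k \ket{u_k}\otimes\ket{v_k}$. A direct calculation shows that $P^{T_B}$ sends $\ket{u_a}\otimes\ket{v_b}$ to $s_a s_b \ket{u_b}\otimes\ket{v_a}$, so its nonzero eigenvalues are $\{s_k s_l : (k,l)\in S_+\}$ together with $\{-s_k s_l : (k,l) \in S_-\}$, supplemented by $mn - p^2$ zeros. The optimal pairing in von Neumann's inequality matches the $p_-$ largest $\lambda_i$ with the negative eigenvalues of $P^{T_B}$ and the $p_+$ smallest $\lambda_i$ with the positive ones; encoding these matchings by the bijections $\sigma_+$ and $\sigma_-$ of the theorem, the inequality becomes
\begin{equation*}
\sum_{(k,l)\in S_+} \lambda_{mn+1-\sigma_+(k,l)}\, s_k s_l \;-\; \sum_{(k,l)\in S_-} \lambda_{\sigma_-(k,l)}\, s_k s_l \;\geq\; 0,
\end{equation*}
which is exactly $\tfrac{1}{2}\, s^{\top} (\Lambda + \Lambda^\top)\, s$. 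Since every admissible ordering of the products $\{s_k s_l\}$ is realised by some Schmidt vector, absolute PPT is equivalent to the quadratic form being nonnegative on $\mathbb{R}_{\geq 0}^{p}$ for every pair $(\sigma_+,\sigma_-)$.

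The last step is to upgrade nonnegativity on the positive orthant to full positive semidefiniteness. The key structural remark is that $\Lambda + \Lambda^\top$ is a symmetric Z-matrix: its diagonal entries $2\lambda_{mn+1-\sigma_+(k,k)}$ are nonnegative, and the off-diagonal entries $\lambda_{mn+1-\sigma_+(k,l)} - \lambda_{\sigma_-(k,l)}$ are nonpositive because the smallest $p_+$ eigenvalues of $\rho$ and the largest $p_-$ are disjoint groups (as $p_+ + p_- = p^{2} \leq mn$). For symmetric Z-matrices, copositivity and positive semidefiniteness coincide: replacing any $s \in \mathbb{R}^p$ by $(|s_1|,\ldots,|s_p|)$ only decreases the quadratic form, since $s_k s_l \leq |s_k||s_l|$ and the off-diagonal coefficients are nonpositive. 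I expect the main obstacle to sit in this last step and the surrounding bookkeeping, namely verifying the Z-matrix structure uniformly in $(\sigma_+,\sigma_-)$, handling degenerate Schmidt spectra where several orderings coincide, and confirming that the universal quantifier over $(\sigma_+,\sigma_-)$ is neither too weak (missing some Schmidt vectors) nor too strong (producing spurious PSD conditions not forced by any admissible $\ket{\psi}$).
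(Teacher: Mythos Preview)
The paper does not prove this theorem; it is quoted from Hildebrand (2007) without argument. Your sketch is precisely Hildebrand's original strategy, and the three worries you flag at the end all resolve cleanly.

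The Z-matrix structure holds for every pair $(\sigma_+,\sigma_-)$: since $p_+ + p_- = p^2 \leq mn$, the index sets $\{1,\ldots,p_-\}$ and $\{mn-p_++1,\ldots,mn\}$ are disjoint, so each off-diagonal entry $\lambda_{mn+1-\sigma_+(k,l)} - \lambda_{\sigma_-(k,l)}$ compares one of the $p_+$ smallest eigenvalues of $\rho$ against one of the $p_-$ largest and is therefore $\leq 0$. Degenerate Schmidt spectra cause no trouble, because any compatible $(\sigma_+,\sigma_-)$ then yields the same value of the quadratic form. Finally, the universal quantifier is exactly right. For the ``if'' direction, every $\ket{\psi}$ induces at least one $(\sigma_+,\sigma_-)$ that realises the anti-sorted (von Neumann--minimal) pairing, so positive semidefiniteness of that particular $\Lambda+\Lambda^\top$ gives the needed inequality. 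For the ``only if'' direction, an arbitrary $(\sigma_+,\sigma_-)$ applied to a nonnegative Schmidt vector $s$ produces some (not necessarily extremal) pairing $\sum_i \lambda_i\,\mu_{\pi(i)}$ of the eigenvalues of $\rho$ with those of $P^{T_B}$; by the rearrangement inequality this dominates the anti-sorted minimum, which is $\geq 0$ by absolute PPT. Hence the extra orderings impose no spurious constraints, and your copositivity-to-PSD step via the Z-matrix observation closes the argument.
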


For $2 \times n$ case absolute separability and absolute PPT are equivalent \cite{johnston-2013} and in fact the Theorem \ref{thm:absolute-separability-higherdim} concerning absolute separability for these cases were proven with use of Theorem \ref{thm:absolute-ppt}.

\section{Absolute locality} \label{absolute-locality}

As in the previous cases, one can formulate the following definition of absolute version of the locality property:

\begin{definition}
A bipartite state $\rho$ is called absolutely local iff for any unitary operator $U$, the state $\rho' = U \rho U^\dagger$ is local.
\end{definition}

The necessary and sufficient conditions for a state being absolutely local have been found:

\begin{theorem}[\citealp{hindusi-locality}] \label{thm-absolute-locality}
If $\rho$ is a two-qubit state with an ordered spectrum $d_1 \geq d_2 \geq d_3 \geq d_4$, then $\rho$ is absolutely local iff
\begin{equation}
(2 d_1 + 2 d_2 - 1)^2 + (2 d_1 + 2 d_3 - 1)^2 \leq 1
\end{equation}
\end{theorem}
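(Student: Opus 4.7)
The plan is to reduce the theorem to an optimization problem over the unitary orbit of $\rho$, then identify the optimum. By Criterion \ref{criterion-chsh}, a two-qubit state $\sigma$ is local iff $\mu_1(\sigma) + \mu_2(\sigma) \le 1$, where $\mu_1, \mu_2$ are the two largest eigenvalues of $t(\sigma)^{\top} t(\sigma)$ and $t(\sigma)$ is the correlation tensor from \eqref{eq:fano-bloch}. Since unitary conjugation preserves the spectrum, the orbit $\{ U \rho U^{\dagger} \}$ is exactly the set of all two-qubit states with spectrum $(d_1, d_2, d_3, d_4)$, so $\rho$ is absolutely local iff
\begin{equation}
F(d_1,d_2,d_3,d_4) \; := \; \max_{U} \bigl[\mu_1(U\rho U^{\dagger}) + \mu_2(U \rho U^{\dagger})\bigr] \; \le \; 1 .
\end{equation}
The theorem then reduces to the equality $F(d_1, d_2, d_3, d_4) = (2d_1 + 2d_2 - 1)^2 + (2d_1 + 2d_3 - 1)^2$.

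For the lower bound on $F$, I would exhibit an explicit state in the orbit achieving this value. Take the Bell-diagonal state $\rho^{\star} = d_1 \rho_{\Phi^+} + d_2 \rho_{\Phi^-} + d_3 \rho_{\Psi^+} + d_4 \rho_{\Psi^-}$; by Fact \ref{weyl-bell-diagonal} its spectrum is $(d_1, d_2, d_3, d_4)$, hence $\rho^{\star}$ lies in the orbit of $\rho$. Summing the Fano-Bloch expressions \eqref{eq:bell1}--\eqref{eq:bell4} yields a diagonal correlation tensor with entries
\begin{equation}
t_{xx} = 2(d_1 + d_3) - 1, \qquad t_{yy} = 1 - 2(d_1 + d_4), \qquad t_{zz} = 2(d_1 + d_2) - 1 .
\end{equation}
The ordering $d_1 \ge d_2 \ge d_3 \ge d_4$ together with $\sum_i d_i = 1$ force $d_1 + d_2 \ge d_1 + d_3 \ge \tfrac{1}{2}$, so $t_{zz} \ge t_{xx} \ge 0$; the inequality $|t_{yy}| \le t_{xx}$ follows from a short case analysis on the sign of $1 - 2(d_1 + d_4)$, reducing to $d_1 \ge d_2$ in one case and $d_3 \ge d_4$ in the other. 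Therefore $t_{zz}^2$ and $t_{xx}^2$ are the two largest eigenvalues of $t^{\top} t$, giving $\mu_1(\rho^{\star}) + \mu_2(\rho^{\star}) = (2d_1 + 2d_2 - 1)^2 + (2d_1 + 2d_3 - 1)^2$.

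The matching upper bound is the main obstacle. I would first use local unitaries (which stay inside the orbit and, by Fact \ref{local-unitary-does-not-change}, preserve correlations) to bring the correlation tensor to diagonal form $t = \mathrm{diag}(t_1, t_2, t_3)$, so that each orbit element is parameterized by $(\vec{a}, \vec{b}, t_1, t_2, t_3)$ subject to positivity of $\rho$ and the prescribed spectrum. Equating $\mathrm{Tr}(\rho^2) = \sum_i d_i^2$ with the Fano-Bloch identity $\mathrm{Tr}(\rho^2) = \tfrac{1}{4}\bigl(1 + |\vec{a}|^2 + |\vec{b}|^2 + \sum_i t_i^2\bigr)$ produces the spectrum-invariant relation
\begin{equation}
|\vec{a}|^2 + |\vec{b}|^2 + \sum_{i=1}^{3} t_i^2 \; = \; 4 \sum_{i=1}^{4} d_i^2 - 1 .
\end{equation}
Since $\mu_1 + \mu_2 = \sum_i t_i^2 - \min_i t_i^2$, the upper bound $F \le (2d_1+2d_2-1)^2 + (2d_1+2d_3-1)^2$ becomes the sharp spectral inequality
\begin{equation}
|\vec{a}|^2 + |\vec{b}|^2 + \min_i t_i^2 \; \ge \; (2d_1 + 2d_4 - 1)^2
\end{equation}
for every $(\vec{a}, \vec{b}, t)$ that together yield a state of spectrum $(d_1, d_2, d_3, d_4)$. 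Proving this inequality is the hard part; the natural route is a Lagrange-multiplier optimization of the left-hand side under the four spectrum-fixing constraints $\mathrm{Tr}(\rho^k) = \sum_i d_i^k$, $k = 1, 2, 3, 4$, and verifying that every interior critical point collapses to the Bell-diagonal configuration of the previous paragraph (where $\vec{a} = \vec{b} = 0$ and $\min_i t_i^2 = (2d_1 + 2d_4 - 1)^2$), with the boundary of the feasible region---where $\min_i t_i^2$ is pinned by a rank drop in $\rho$---handled separately. Alternatively, one may invoke existing characterizations of extremal correlation tensors on a fixed unitary orbit of two-qubit states.
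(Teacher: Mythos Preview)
The paper does not give its own proof of this theorem; it cites \citealp{hindusi-locality} and records only that the key tool there is the Cartan decomposition of $SU(4)$ (Lemma following the statement). In that approach one writes an arbitrary unitary as $U=(U_A\otimes U_B)\,U_g\,(V_A\otimes V_B)$, observes that the outer local factor leaves $\mu_1+\mu_2$ invariant, and then maximizes the Horodecki quantity directly over the remaining parameters, which is a finite explicit computation. Your route is genuinely different: you never invoke the Cartan form but instead diagonalize the correlation tensor by local unitaries and try to control $\mu_1+\mu_2$ through the trace identity $\mathrm{Tr}\,\rho^2=\tfrac14(1+|\vec a|^2+|\vec b|^2+\sum_i t_i^2)$.

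Your lower bound via the Bell-diagonal representative $\rho^\star$ is clean and correct; the ordering argument for $t_{zz}^2\ge t_{xx}^2\ge t_{yy}^2$ is fine. The problem is the upper bound. After your reduction, the statement you need is
\[
|\vec a|^2+|\vec b|^2+\min_i t_i^2 \;\ge\; (2d_1+2d_4-1)^2
\]
for every Fano--Bloch data set realizing the prescribed spectrum, and you do not prove it. Using only $\mathrm{Tr}\,\rho^2$ cannot suffice: that single scalar constraint does not distinguish states of the same purity but different spectra, so any Lagrange-multiplier argument must bring in the higher invariants $\mathrm{Tr}\,\rho^3$, $\mathrm{Tr}\,\rho^4$ (equivalently, the full spectrum), and you give no indication of how the critical-point analysis closes. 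The alternative you mention---``existing characterizations of extremal correlation tensors on a fixed unitary orbit''---is precisely what the Cartan-decomposition computation in \citealp{hindusi-locality} supplies, so appealing to it is circular. As written, the proposal establishes the ``only if'' direction (via $\rho^\star$) but leaves the ``if'' direction as an unproven inequality; to complete the argument along the paper's lines you should parameterize the orbit through $U_g$ acting on the diagonal state and carry out the explicit maximization there.
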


In the proof of this theorem the following lemma is used (it will be also needed in our proof in Section \ref{absolute-discord}):

\begin{lemma}[Cartan decomposition of $SU(4)$, \citealp{kraus-2001}]
Every matrix belonging to $SU(4)$ can be decomposed into two local unitary matrices $U_A \otimes U_B, V_A \otimes V_B$ and a global unitary matrix $U_g$ in the so-called Cartan form:
\begin{equation} \label{eq:U}
U = \left( U_A \otimes U_B \right) U_g \left( V_A \otimes V_B \right),
\end{equation}
where $U_A, U_B, V_A, V_B \in SU(2)$ and $U_g$ is given by
\begin{equation} \label{eq:Ug}
U_g = e^{-i (\lambda_1 \sigma_1 \otimes \sigma_1 + \lambda_2 \sigma_2  \otimes \sigma_2 + \lambda_3 \sigma_3 \otimes \sigma_3)}
\end{equation}
with $\lambda_1 , \lambda_2 , \lambda_3 \in [0, 2 \pi]$.
\end{lemma}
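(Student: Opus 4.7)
My plan is to prove the Cartan decomposition by combining the \emph{magic basis} isomorphism between $SU(2)\otimes SU(2)$ and $SO(4)$ with a real-orthogonal analogue of the singular value decomposition applied to a unitary matrix. This converts the abstract Lie-algebraic KAK theorem into a concrete linear-algebraic argument adapted to the two-qubit setting.

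The first step will be to introduce the magic basis $\{|e_k\rangle\}_{k=1}^{4}$ --- an orthonormal basis of $\mathbb{C}^2\otimes\mathbb{C}^2$ built from the Bell states with carefully chosen phases --- and let $M$ be the unitary whose columns are these vectors. I would then verify the key structural fact that conjugation by $M$ maps every local unitary to a real orthogonal matrix: for all $U_A,U_B\in SU(2)$ we have $M^\dagger(U_A\otimes U_B)M\in SO(4)$, and every element of $SO(4)$ arises in this way. This identification rests on the exceptional isomorphism $\mathfrak{so}(4)\cong\mathfrak{su}(2)\oplus\mathfrak{su}(2)$ and can be checked by noting that the three commuting Hermitian operators $\sigma_j\otimes\sigma_j$ are simultaneously diagonalised by the magic basis, with each of the four admissible sign patterns realised exactly once.

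Given $U\in SU(4)$, set $\tilde U:=M^\dagger U M$. The second and main step is to produce a real-orthogonal factorisation $\tilde U=O_1\Lambda O_2$ with $O_1,O_2\in O(4)$ and $\Lambda$ diagonal unitary. The strategy is to examine the symmetric unitary matrix $S:=\tilde U\tilde U^{T}$ and diagonalise it as $S=O_1\Lambda^{2}O_1^{T}$ with $O_1\in O(4)$, using the classical fact that a symmetric unitary can be written $\exp(iH)$ with $H$ real symmetric, which in turn follows from simultaneous real-orthogonal diagonalisation of the commuting real symmetric matrices $\operatorname{Re}H$ and $\operatorname{Im}H$. Setting $O_2:=\Lambda^{-1}O_1^{T}\tilde U$, one computes $O_2 O_2^{T}=\Lambda^{-1}O_1^{T}S O_1\Lambda^{-1}=I$ and $O_2(\overline{O_2})^{-1}=\Lambda^{-1}O_1^{T}\tilde U\tilde U^{T}O_1\overline{\Lambda}=\Lambda^{-1}\Lambda^{2}\overline{\Lambda}=I$, so $O_2$ is both orthogonal and real. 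Transforming back yields $U=(MO_1M^\dagger)(M\Lambda M^\dagger)(MO_2M^\dagger)$: by step one the outer factors are local unitaries in $U(2)\otimes U(2)$, and a determinant/phase adjustment --- multiplying each $O_i$ by a reflection whose effect is absorbed into $\Lambda$ --- brings them into $SU(2)\otimes SU(2)$. The middle factor is diagonal in the magic basis and therefore belongs to the abelian group generated by $\{\mathds{1},\sigma_1\otimes\sigma_1,\sigma_2\otimes\sigma_2,\sigma_3\otimes\sigma_3\}$, so it can be written as $e^{-i\alpha_0}\exp\bigl(-i\sum_j\lambda_j\sigma_j\otimes\sigma_j\bigr)$; the residual global phase $\alpha_0$ is absorbed into the local factors to produce the claimed $U_g$.

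The hardest step is the real-orthogonal diagonalisation of $S$: unlike the ordinary spectral theorem, this requires showing that eigenspaces of a symmetric unitary admit real orthonormal bases, which demands careful treatment of eigenvalue degeneracies and pairing of complex-conjugate eigenvectors within each invariant subspace. A secondary but nontrivial bookkeeping task will be ensuring the local factors land in $SU(2)$ rather than merely $U(2)$ and pinning down the range $\lambda_j\in[0,2\pi]$: the latter amounts to quotienting by the finite Weyl group generated by conjugations by specific two-qubit Clifford operations, which permute and sign-flip the three generators $\sigma_j\otimes\sigma_j$.
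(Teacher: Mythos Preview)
The paper does not supply its own proof of this lemma: it is quoted as a known structural result from \cite{kraus-2001} and used as a tool in the proofs of Theorems~\ref{thm-absolute-locality} and~\ref{thm-absolute-zero-discord}. So there is no ``paper's proof'' to compare against directly.

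That said, your sketch is essentially the standard argument used in the cited reference and the surrounding literature (Kraus--Cirac, Zhang--Vala--Sastry--Whaley, etc.): pass to the magic basis so that local unitaries become $SO(4)$, then perform a real-orthogonal ``SVD'' of the resulting unitary by diagonalising the symmetric unitary $S=\tilde U\tilde U^{T}$ with a real orthogonal matrix. The computations you outline for $O_2$ being both orthogonal and real are correct, and the identification of the Cartan subalgebra with $\mathrm{span}\{\sigma_j\otimes\sigma_j\}$ via the magic basis is exactly the right mechanism.

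One small slip in your exposition: when you justify that a symmetric unitary $S$ admits a real-orthogonal diagonalisation, you write ``the commuting real symmetric matrices $\operatorname{Re}H$ and $\operatorname{Im}H$'' --- but if $H$ is already real symmetric then $\operatorname{Im}H=0$, so this is vacuous. What you actually need (and what the argument uses) is that $\operatorname{Re}S$ and $\operatorname{Im}S$ are commuting real symmetric matrices: writing $S=A+iB$ with $A,B$ real, symmetry of $S$ gives $A,B$ symmetric and unitarity gives $[A,B]=0$, so they are simultaneously diagonalised by some $O\in O(4)$. This is presumably what you meant; just fix the wording. The remaining bookkeeping (determinant signs, absorbing the global phase, Weyl-group reduction of the $\lambda_j$ range) is routine as you indicate.
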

%\begin{proof}
%\end{proof}

\section{Absolutely unsteerable states} \label{absolute-unsteerable}

Similarly to absolute locality, one can define absolute unsteerability:

\begin{definition}
A bipartite state $\rho$ is called absolutely unsteerable iff for any unitary operator $U$, the state $\rho' = U \rho U^\dagger$ is unsteerable.
\end{definition}

The necessary and sufficient conditions for a state being absolutely unsteerable are not known. However, there is in the literature the following partial result, concerning only steerability with three settings:

\begin{theorem}[\citealp{hindusi-steering}] \label{thm-absolute-steering}
If $\rho$ is a two-qubit state with spectrum $d_1 d_2, d_3, d_4$, then $\rho$ is absolutely unsteerable with three settings iff its eigenvalues satisfy
\begin{equation}
3 (d_1^2 + d_2^2 + d_3^2 +d_4^2) - 2(d_1 d_2 + d_1 d_3 + d_1 d_4 + d_2 d_3 + d_2 d_4 + d_3 d_4) \leq 1.
\end{equation}
\end{theorem}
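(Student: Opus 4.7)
My plan is to (i) translate the steering condition into a bound on the correlation matrix $T$ of the Fano--Bloch decomposition \eqref{eq:fano-bloch}, and (ii) maximize that bound over the unitary orbit of $\rho$, exploiting that the spectrum $d_1, d_2, d_3, d_4$ is invariant along the orbit.

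First, by Fact \ref{local-unitary-does-not-change}, steerability is unaffected by local unitaries, so via the Cartan decomposition \eqref{eq:U}--\eqref{eq:Ug} only the entangling factor $U_g$ can change the correlation structure, while the spectrum and hence $\text{Tr}(\rho^2) = \sum_i d_i^2$ remain constant. Writing $\langle A_i \otimes B_i\rangle = \vec{a}_i^{\top} T \vec{b}_i$ and optimizing over orthonormal measurement triples $\{\vec{a}_i\}, \{\vec{b}_i\}$, the $(2,3,n)$ CJWR-type steering scenario behind \eqref{eq:3-steerability} reduces (after carrying out the optimization as a sum-of-squares expression) to the criterion $\|T\|_F^2 \leq 1$, where $\|T\|_F^2 = \sum_{m,n} t_{mn}^2$. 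This measurement optimization is the main non-routine step.

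Next, I would use the orthogonality relations $\text{Tr}(\sigma_m \sigma_n) = 2\delta_{mn}$ and $\text{Tr}(\sigma_n) = 0$ in \eqref{eq:fano-bloch} to derive the identity
\begin{equation*}
4\text{Tr}(\rho^2) = 1 + |\vec{a}|^2 + |\vec{b}|^2 + \|T\|_F^2,
\end{equation*}
which yields $\|T\|_F^2 \leq 4\sum_i d_i^2 - 1$, with equality exactly when $\vec{a} = \vec{b} = 0$ (i.e.\ $\rho$ is locally maximally mixed). So the maximum of $\|T\|_F^2$ over the unitary orbit equals $4\sum_i d_i^2 - 1$. Imposing $\leq 1$ gives $\sum_i d_i^2 \leq 1/2$, which via $2\sum_{i<j}d_i d_j = 1 - \sum_i d_i^2$ rearranges into the form $3\sum_i d_i^2 - 2\sum_{i<j}d_i d_j \leq 1$ of the theorem.

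For the necessity direction, whenever $\sum_i d_i^2 > 1/2$ I would exhibit a state in the orbit that saturates the Frobenius bound: by Fact \ref{weyl-bell-diagonal}, the Bell-diagonal state with weights $p_i = d_i$ is a Weyl state (so $\vec{a} = \vec{b} = 0$) that is unitarily equivalent to $\rho$, and for it $\|T\|_F^2 = 4\sum_i d_i^2 - 1 > 1$, witnessing steerability within the orbit. The principal obstacle is the first step -- justifying that the optimization over measurement directions in the framework of \citealp{hindusi-steering} really yields the Frobenius bound on $T$ (rather than, say, a trace-norm bound, which one would naively read off directly from the linear form of \eqref{eq:3-steerability}). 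Once that optimization is handled, the rest is simply the Fano--Bloch identity together with the explicit Weyl-state construction provided by Fact \ref{weyl-bell-diagonal}.
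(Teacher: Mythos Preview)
The paper itself does not prove this theorem; it is quoted from \citealp{hindusi-steering} without argument, so there is no in-paper proof to compare against. Your outline is correct and is precisely the strategy of that reference: show that the maximum of the left-hand side of \eqref{eq:3-steerability} over measurement directions equals $\|T\|_F$, then use the purity identity $4\,\mathrm{Tr}(\rho^2)=1+|\vec a|^2+|\vec b|^2+\|T\|_F^2$ to conclude that $\max_{U}\|T\|_F^2 = 4\sum_i d_i^2-1$, attained at the Bell-diagonal state with weights $d_i$ (Fact~\ref{weyl-bell-diagonal}). The algebraic rewriting $4\sum_i d_i^2-1\le 1 \Leftrightarrow 3\sum_i d_i^2-2\sum_{i<j}d_id_j\le 1$ via $\bigl(\sum_i d_i\bigr)^2=1$ is exactly right.

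On the obstacle you flag: the Frobenius (rather than trace) norm appears because in the CJWR derivation only one party's triple $\{\hat a_i\}$ is constrained to be orthonormal, while the other party's $\{\hat b_i\}$ are free unit vectors. Optimising each $\hat b_i$ independently yields $\sum_i |T^{\top}\hat a_i|$; since $\sum_i |T^{\top}\hat a_i|^2=\mathrm{Tr}(TT^{\top})=\|T\|_F^2$ for \emph{any} orthonormal frame, Cauchy--Schwarz bounds the sum by $\sqrt{3}\,\|T\|_F$, and a Schur--Horn-type argument produces a frame with all three lengths equal, so the bound is saturated. Hence $\max F_3=\|T\|_F$, resolving your worry, and the rest of your argument goes through verbatim.
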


%Quan 2016: for Bell diagonal states steerability with 2 measurements iff CHSH violation; but for these states locality iff absolute locality; therefore absolute non-sterability with 2 measurements iff absolute locality

\section{Absolute non-negativity of conditional entropy} \label{absolute-entropy}

Similarly to the previous properties, non-negative conditional entropy can also be possessed absolutely:

\begin{definition}[\citealp{hindusi-entropy}]
A bipartite state $\rho$ is said to have the conditional entropy non-negative absolutely iff for any unitary operator $U$, the state $\rho' = U \rho U^\dagger$ has non-negative conditional entropy.
\end{definition}

The necessary and sufficient conditions for absolute non-negativity of conditional entropy are known:

\begin{theorem}[\citealp{hindusi-entropy}] \label{thm-absolute-entropy}
A two-qubit state $\rho$ has absolutely non-negative conditional entropy iff $S (\rho) \geq 1$.
\end{theorem}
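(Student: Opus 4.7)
The plan is to collapse the absoluteness condition onto a single numerical comparison by exploiting two elementary facts: the unitary invariance of $S(\rho)$ and the universal cap $S(\sigma) \leq \log 2 = 1$ on the entropy of a qubit. Since the conditional entropy of $U\rho U^\dagger$ reads
\begin{equation*}
S(A|B) = S(U\rho U^\dagger) - S(\mathrm{Tr}_A[U\rho U^\dagger]) = S(\rho) - S(\mathrm{Tr}_A[U\rho U^\dagger]),
\end{equation*}
absolute non-negativity of the conditional entropy is equivalent to
\begin{equation*}
S(\rho) \geq \sup_{U \in U(4)} S(\mathrm{Tr}_A[U \rho U^\dagger]),
\end{equation*}
and analogously with the roles of $A$ and $B$ interchanged. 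The proof then reduces to computing this supremum.

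For the sufficiency direction ($S(\rho) \geq 1 \Rightarrow$ absolute non-negativity) I would argue in one line: $\mathrm{Tr}_A[U \rho U^\dagger]$ is a one-qubit density matrix, hence its entropy never exceeds $1$, so the right-hand side of the displayed inequality is at most $1$ and the assumption finishes the job. The same bound applies to $\mathrm{Tr}_B[U \rho U^\dagger]$, taking care of $S(B|A)$.

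For the necessity direction ($S(\rho) < 1 \Rightarrow$ there is a unitary making the conditional entropy negative) I would exhibit a specific unitary saturating the supremum at $1$. Take the spectral decomposition $\rho = \sum_{i=1}^{4} d_i \ket{\phi_i}\bra{\phi_i}$ and pick any unitary $U_0$ mapping $\{\ket{\phi_i}\}$ onto the Bell basis $\{\ket{\Phi^+}, \ket{\Phi^-}, \ket{\Psi^+}, \ket{\Psi^-}\}$; existence is guaranteed by the transitivity of $U(4)$ on orthonormal bases. Then $U_0 \rho U_0^\dagger$ is Bell-diagonal, and because every Bell state has both reduced states equal to $\mathds{1}/2$, the same holds for any convex combination, so $\mathrm{Tr}_A[U_0 \rho U_0^\dagger] = \mathrm{Tr}_B[U_0 \rho U_0^\dagger] = \mathds{1}/2$. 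Consequently both conditional entropies of $U_0 \rho U_0^\dagger$ equal $S(\rho) - 1$, which is strictly negative under the assumption $S(\rho) < 1$.

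Combining the two directions gives the claimed equivalence. The argument is essentially lossless because the Bell-diagonal construction attains the absolute ceiling $\log 2 = 1$ for the reduced entropy, so no room is left between the ``easy'' qubit-entropy upper bound and the ``hard'' explicit maximiser. The only subtle point worth noting is that absolute non-negativity here refers jointly to both $S(A|B)$ and $S(B|A)$, but this causes no trouble since the Bell-basis maximiser simultaneously maximises both reduced entropies and the $\log 2$ bound also applies to both sides.
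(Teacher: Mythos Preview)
Your proof is correct. The paper itself does not supply a proof of this theorem; it merely cites \cite{hindusi-entropy}, so there is no in-paper argument to compare against. Your route---bounding the reduced-state entropy by $\log 2 = 1$ for sufficiency, and rotating into the Bell basis (equivalently, into a Weyl state, cf.\ Fact~\ref{weyl-bell-diagonal}) to force both marginals to $\tfrac{1}{2}\mathds{1}_2$ for necessity---is the natural one and matches the argument in the cited reference.
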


Recall that everywhere in this thesis '$\log$' means logarithm to base 2.

%\begin{proof} $ $ \\
%$\Rightarrow$: Assume that a two-qubit state $\rho$ has non-negative conditional entropy absolutely, that is, for any unitary operator $U$, $S(U \rho U^\dagger) - S(Tr_B (U \rho U^\dagger)) \geq 0 $. From Theorem \ref{vonneumann-shannon} it follows that the value of von Neumann entropy does not change under unitary transformations (because it depends only on spectrum). Therefore, for any unitary operator $U$, $S(\rho) \geq  S(Tr_B (U \rho U^\dagger))$. The maximum value of entropy for one-qubit system is $S( \frac{1}{2} \mathds{1}) =1$. It is always achievable, because one can always find a unitary operator $U$ which transforms a given $\rho$ to Bell-diagonal state $\rho_{B-d} = U \rho U^\dagger$ (because it is possible to diagonalize hermitian matrix by unitary matrix in any basis) and it follows from the Theorem \ref{weyl-bell-diagonal} that $S(\rho_{B-d} ) = \frac{1}{2} \mathds{1}$. Therefore, $S(\rho) \geq S(\frac{1}{2} \mathds{1}) \geq 1$.

%\noindent
%$\Leftarrow$: Assume that a two-qubit state $\rho$ has entropy $S(\rho) = 1$. For two-qubit system, maximal achievable entropy for a subsystem is $1$. As observed before, the entropy of $\rho$ does not change under unitary operation. Therefore, the difference between this entropy and the entropy of respective subsystem will be always non-negative: $S(U \rho U^\dagger) - S(Tr_B (U \rho U^\dagger)) \geq 0 $ for any unitary $U$.
%\end{proof}

\section{Absolute zero quantum discord} \label{absolute-discord}

In contrast to the properties analysed before, quantum discord is quantitative --- we know its precise value, not only that it is possessed or not. However, one can 'discretise' this issue by dividing states into discordless (with quantum discord equal to zero) and states with non-zero quantum discord.
Therefore, we can ask when the value of quantum discord is absolutely zero, i.e. when a state does not have correlations of this type absolutely:

\begin{definition}
A bipartite state $\rho$ is said to have zero quantum discord absolutely iff for any unitary operator $U$, the state $\rho' = U \rho U^\dagger$ has zero quantum discord.
\end{definition}

In this section we will use Criterion \ref{criterion-zero-discord-2} \cite{huang2011} of zero quantum discord from Section \ref{zero-discord} to determine the necessary and sufficient conditions for having zero discord absolutely by a two-qubit states. For two-qubit system represented by a density matrix

\begin{equation}
\rho =  \left(
\begin{array}{cccc}
 r_{11} & r_{12} & r_{13} & r_{14} \\
r_{21} & r_{22} & r_{23} & r_{24} \\
r_{31} & r_{32} & r_{33} & r_{34} \\
 r_{41} & r_{42} & r_{43} & r_{44} \\
\end{array}
\right) 
\equiv \left(
\begin{array}{cc}
A & B \\
C & D \\
\end{array}
\right) 
\end{equation}
with blocks
\begin{equation}
A = \left(
\begin{array}{cc}
 r_{11} & r_{12} \\
r_{21} & r_{22} \\
\end{array}
\right), \quad
B = \left(
\begin{array}{cc}
r_{13} & r_{14} \\
r_{23} & r_{24} \\
\end{array}
\right), \quad
C = \left(
\begin{array}{cccc}
r_{31} & r_{32} \\
 r_{41} & r_{42} \\
\end{array}
\right), \quad
D = \left(
\begin{array}{cccc}
r_{33} & r_{34} \\
r_{43} & r_{44} \\
\end{array}
\right)
\end{equation}
this criterion means that the following equalities must be satisfied: 
\begin{equation} \label{eq:condition-zero-discord-1}
[A, A^\dagger] = [B, B^\dagger ] = [C, C^\dagger] = [D, D^\dagger] = 0
\end{equation}
and
\begin{equation} \label{eq:condition-zero-discord-2}
[A,B] = [A, C] = [A, D] = [B, C] = [B, D] = [C,D] = 0.
\end{equation}

The first simplification follows from the fact that every density matrix is hermitian and therefore can be diagonalized by some unitary matrix. If a given state has zero discord absolutely, in particular it has zero discord after diagonalization (because diagonalizing matrix belongs to the class of unitary matrices). Therefore, each equivalence class of states has a representative that is a diagonal matrix and to find the class of all states with absolute zero discord it suffices to restrict to the class of diagonal density matrices. 

Recall from section \ref{absolute-locality} that every $SU(4)$ matrix can be decomposed into local part and a special global matrix $U_g$ given by \eqref{eq:Ug}. One idea is to act with the global unitary matrix \eqref{eq:U} in its most general form and then solve the equations that follow from the conditions \eqref{eq:condition-zero-discord-1} and \eqref{eq:condition-zero-discord-2}. However, these equations are rather complicated, so from the practical point of view it is better to divide our task into two steps. In the first step we act on arbitrary diagonal matrix with $U_g$ only, obtaining necessary conditions for having zero discord absolutely. As we will see, the result will be one-parameter family of states. In the second step we will act on this family with $U$ in its general form (including local parts and the special global part). It turns out that it is possible to make further simplification by putting some parameters in local matrices to zero and such less general form is sufficient to eliminate all potential candidates for being absolutely zero discord state with the exception of one --- the maximally mixed state $\frac{1}{4}\mathds{1}_4$. It is easy to see that this state indeed has zero discord absolutely because after an arbitrary unitary transformation it remains unchanged.

Let us perform the first step. Consider an arbitrary diagonal density matrix
\begin{equation} \label{eq:diagonal-discord}
\rho_d =  \left(
\begin{array}{cccc}
d_1 & 0 & 0 & 0 \\
0 & d_2 & 0 & 0 \\
0 & 0 & d_3 & 0 \\
 0 & 0 & 0 & 1 - d_1 - d_2 - d_3 \\
\end{array}
\right),
\end{equation}
where $d_1, d_2, d_3 \in \mathds{R}$. The unitary matrix $U_g$ can be written in the Cartan form \eqref{eq:Ug}, which simplifies to
\begin{equation}
U_g = \left(
\scalemath{0.8}{\begin{array}{cccc}
 e^{-i \lambda _3} \cos \left(\lambda _1-\lambda _2\right) & 0 & 0 & -i e^{-i
   \lambda _3} \sin \left(\lambda _1-\lambda _2\right) \\
 0 & e^{i \lambda _3} \cos \left(\lambda _1+\lambda _2\right) & \sin
   \left(\lambda _1+\lambda _2\right) \left(\sin \lambda _3-i \cos
  \lambda _3 \right) & 0 \\
 0 & \sin \left(\lambda _1+\lambda _2\right) \left(\sin \lambda
   _3 -i \cos \lambda _3\right) & e^{i \lambda _3} \cos
   \left(\lambda _1+\lambda _2\right) & 0 \\
 -i e^{-i \lambda _3} \sin \left(\lambda _1-\lambda _2\right) & 0 & 0 & e^{-i
   \lambda _3} \cos \left(\lambda _1-\lambda _2\right) \\
\end{array}}
\right).
\end{equation}

Under the action of $U_g$ the state $\rho_d$ is transformed as follows:
\begin{equation}
\begin{split}
\rho'_d & = U_g^{\dagger} \rho_d U_g \\
& = \frac{1}{2}
\left(
\scalemath{0.8}{\begin{array}{cccc}
(2 d_1+d_2+d_3-1) C_- -d_2-d_3+1 & 0 & 0 & i (2 d_1+d_2+d_3-1) S_- \\
 0 & (d_2-d_3) C_+ +d_2+d_3 &   i (d_2-d_3) S_+ & 0 \\
 0 & -i (d_2-d_3) S_+ &   (d_2-d_3) C_+ +d_2+d_3 & 0   \\
 - i (2 d_1+d_2+d_3-1) S_- & 0   & 0 & -(2 d_1+d_2+d_3-1) C_- -d_2-d_3+1 \\
\end{array}}
\right), \end{split}
\end{equation}
where $S_+ = \sin (2 \lambda_1 + 2 \lambda_2)$, $S_{-} = \sin (2 \lambda_1 + 2 \lambda_2)$, $C_+ = \cos (2 \lambda_1 + 2 \lambda_2)$, $C_{-} = \cos (2 \lambda_1 - 2 \lambda_2)$.

To this transformed state $\rho'_d$ we apply conditions \eqref{eq:condition-zero-discord-1} and \eqref{eq:condition-zero-discord-2}. Three of them are always satisfied: $[A, A^\dagger] = [D, D^\dagger] = [A,D] =0$. The rest gives us equations for eigenvalues of $\rho_d$, which have the following solutions: $d_1 = d_2 = d_3 = \frac{1}{4}$ and $d_1 = \frac{1}{2} - d_2, d_3 = d_2$. This gives us the following necessary condition:
If a two-qubit state with eigenvalues $d_1, d_2, d_3, 1 - d_1 - d_2 - d_3$ has zero discord absolutely, then its eigenvalues satisfy the following relation:
\begin{equation} \label{eq:absolute-discord-2}
d_1 = \frac{1}{2} - d_2, d_3 = d_2
\end{equation}
or some of its permutations.

Now, let us perform the second step. Any unitary matrix belonging to $SU (2)$ can be parameterised in the following way:
\begin{equation} \label{eq:su2}
U_{\text{loc}} = \left(
\begin{array}{cc}
e^{i \alpha} \cos \phi & e^{i \beta} \sin \phi \\
- e^{- i \beta} \sin \phi & e^{- i \alpha} \cos \phi \\
\end{array}
\right)
\end{equation}

Each of the matrices $U_{\text{A}}, U_{\text{B}}, V_{\text{A}}, V_{\text{B}}$ has this form, so the whole matrix $U$ given by \eqref{eq:U} contains four independent matrices of the type \eqref{eq:su2}. We will add to each parameter $\alpha, \beta, \phi$ indices connected with matrices $U_{\text{A}}, U_{\text{B}}, V_{\text{A}}, V_{\text{B}}$, so e.g.
\begin{equation} \label{eq:UA}
U_{\text{A}} = \left(
\begin{array}{cc}
e^{i \alpha_{\text{UA}}} \cos \phi_{\text{UA}} & e^{i \beta_{\text{UA}}} \sin \phi_{\text{UA}} \\
- e^{- i \beta_{\text{UA}}} \sin \phi_{\text{UA}} & e^{- i \alpha_{\text{UA}}} \cos \phi_{\text{UA}} \\
\end{array}
\right)
\end{equation}
and similarly for $U_{\text{B}}, V_{\text{A}}$ and $V_{\text{B}}$.

It is sufficient to consider the case $\alpha_{\text{UA}} = \alpha_{\text{UB}} = \alpha_{\text{VA}} = \alpha_{\text{VB}} = 
\beta_{\text{UA}} = \beta_{\text{UB}} = \beta_{\text{VA}} = \beta_{\text{VB}} =
\phi_{\text{UB}} = \phi_{\text{VA}} = 0$ (only $\phi_{\text{UA}}$ and $\phi_{\text{VB}}$ are non-zero). We apply transformation of this type to our state \eqref{eq:diagonal-discord} satisfying \eqref{eq:absolute-discord-2}. From the conditions \eqref{eq:condition-zero-discord-1} and \eqref{eq:condition-zero-discord-2} we again obtain the set of equations constraining $d_2$, the only solution of which is the state $\frac{1}{4}\mathds{1}_4$. Therefore the following theorem holds:

\begin{theorem} \label{thm-absolute-zero-discord}
The only two-qubit state that has zero discord absolutely is the maximally mixed state $\frac{1}{4}\mathds{1}_4$.
\end{theorem}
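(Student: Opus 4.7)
My plan exploits the fact that absolute zero discord is a property of the spectrum alone: since diagonalisation is itself a unitary operation, $\rho$ has absolute zero discord iff its diagonal form $\rho_d = \mathrm{diag}(d_1, d_2, d_3, d_4)$ does. So the task reduces to determining which spectra $(d_1,d_2,d_3,d_4)$ with $\sum d_i = 1$ and $d_i \geq 0$ have the property that every state on their unitary orbit satisfies Criterion \ref{criterion-zero-discord-2}.

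First I would invoke the Cartan decomposition $U = (U_A \otimes U_B)\, U_g\, (V_A \otimes V_B)$ from Section \ref{absolute-locality} together with Fact \ref{local-unitary-does-not-change}: local unitaries leave the value of quantum discord invariant, so the outer factor $U_A \otimes U_B$ may be dropped. The problem thus becomes showing that $U_g (V_A \otimes V_B) \rho_d (V_A^\dagger \otimes V_B^\dagger) U_g^\dagger$ has zero discord for all choices of $\lambda_1,\lambda_2,\lambda_3$ and local factors $V_A, V_B \in SU(2)$.

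Next I would carry out the analysis in two stages. Stage one: set $V_A = V_B = \mathds{1}$ and compute the four $2\times 2$ blocks of $\rho'_d := U_g \rho_d U_g^\dagger$, then impose Huang's conditions \eqref{eq:condition-zero-discord-1} and \eqref{eq:condition-zero-discord-2}. Because the resulting commutators must vanish identically in $\lambda_1, \lambda_2, \lambda_3$, each trigonometric monomial in $\sin(2\lambda_1 \pm 2\lambda_2)$ and $\cos(2\lambda_1 \pm 2\lambda_2)$ yields a polynomial equation in $(d_1,d_2,d_3)$. I expect the solution set to collapse to the maximally mixed spectrum plus at most a handful of low-dimensional families obtained by pairwise matching of eigenvalues (e.g.\ of the form $d_1 + d_2 = \tfrac{1}{2}$, $d_3 = d_2$, together with the permutations forced by the symmetry of the $U_g$-action). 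Stage two: take each such surviving family and repeat the analysis with nontrivial local unitaries $V_A, V_B$ parametrised as in \eqref{eq:su2}. Switching on just the two rotation angles $\phi_{V_A}, \phi_{V_B}$ should suffice to produce new commutator constraints that force the remaining free eigenvalue parameter to $1/4$. Finally I would observe that $\tfrac{1}{4}\mathds{1}_4$ is fixed by every unitary, so it trivially has absolute zero discord, establishing sufficiency.

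The main obstacle is stage one: the entries of $U_g \rho_d U_g^\dagger$ are quadratic trigonometric polynomials in three angles, so the off-diagonal blocks are dense and the pairwise block commutators generate a bulky system of polynomial identities in $(d_1,d_2,d_3)$. Keeping the algebra tractable will require choosing a symmetric reparameterisation (e.g.\ writing $d_i = \tfrac{1}{4} + \epsilon_i$ with $\sum \epsilon_i = 0$) and systematically exploiting the $S_3$-symmetry of the Cartan angles so that one only has to solve a small number of inequivalent cases before reducing all of them to $\epsilon_i = 0$.
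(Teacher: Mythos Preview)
Your proposal is correct and follows essentially the same route as the paper: reduce to the diagonal representative, invoke the Cartan decomposition, first apply $U_g$ alone to cut the spectrum down to the one-parameter family $d_1=\tfrac{1}{2}-d_2$, $d_3=d_2$ (and permutations), and then switch on a couple of local-unitary angles to force $d_2=\tfrac{1}{4}$. The only cosmetic difference is that the paper activates $\phi_{UA}$ and $\phi_{VB}$ in the second stage rather than $\phi_{VA}$ and $\phi_{VB}$; your use of Fact~\ref{local-unitary-does-not-change} to discard the outer local factor is a cleaner justification for why inner angles alone suffice. Incidentally, your anticipated ``main obstacle'' largely evaporates: $U_g \rho_d U_g^\dagger$ is an $X$-state, so the $2\times2$ blocks are diagonal or anti-diagonal and the commutator system in stage one is very small.
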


Taking into account Theorem \ref{thm:classicality-vs-discord}, one can define absolute versions of being classical-classical, classical-quantum and quantum-classical:
\begin{definition} 
A bipartite state $\rho$ is called absolutely classical-classical/classical-quantum/quantum-classical iff for any unitary operator $U$, the state $\rho' = U \rho U^\dagger$ is classical-classical/classical-quantum/quantum-classical, respectively.
\end{definition}

The criteria for belonging to these classes of states are the same as for having vanishing discord: for absolutely quantum-classical $D (A|B)$ must vanish absolutely, for absolutely classical-quantum $D(B|A)$ must vanish absolutely and for absolutely classical-classical both of these conditions are needed. Of course, as a corollary to the previous theorem, the following holds for two-qubit states:
\begin{theorem} \label{thm-absolute-cc-cq-qc}
$ $

\noindent
The only two-qubit state that is absolutely classical-classical is the maximally mixed state $\frac{1}{4}\mathds{1}_4$. 

\noindent
The only two-qubit state that is absolutely classical-quantum is the maximally mixed state $\frac{1}{4}\mathds{1}_4$. 

\noindent
The only two-qubit state that is absolutely quantum-classical is the maximally mixed state $\frac{1}{4}\mathds{1}_4$. 
\end{theorem}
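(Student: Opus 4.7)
The plan is to derive this theorem as a direct corollary of Theorem \ref{thm-absolute-zero-discord}, by translating each classicality notion into a statement about vanishing quantum discord and then invoking the previous result. First I would use Theorem \ref{thm:classicality-vs-discord} to rewrite the three absolute classicality properties in terms of discord: $\rho$ is absolutely classical-classical iff $D(A|B) = D(B|A) = 0$ holds for every $U \rho U^\dagger$, absolutely classical-quantum iff $D(B|A) = 0$ holds for every $U \rho U^\dagger$, and absolutely quantum-classical iff $D(A|B) = 0$ holds for every $U \rho U^\dagger$.

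Next I would exploit the symmetry between the two discords in the two-qubit case. As remarked after Criterion \ref{criterion-zero-discord-2}, for two-qubit states $D(A|B) = 0$ if and only if $D(B|A) = 0$, because the block-criterion is applied with $d = \dim(\mathcal{H}_A) = \dim(\mathcal{H}_B) = 2$ in both directions. Consequently the three absolute classicality properties all coincide with the single property of having absolutely zero discord.

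By Theorem \ref{thm-absolute-zero-discord}, the unique two-qubit state with absolutely zero discord is $\tfrac{1}{4}\mathds{1}_4$. It remains to check that $\tfrac{1}{4}\mathds{1}_4$ is indeed absolutely classical-classical (and hence absolutely classical-quantum and absolutely quantum-classical), which is immediate since $U \, \tfrac{1}{4}\mathds{1}_4 \, U^\dagger = \tfrac{1}{4}\mathds{1}_4$ for every unitary $U$, and this state trivially satisfies Definition \ref{def:classical-classical} by writing it as $\tfrac{1}{4}\sum_{i,j} \ket{i}\bra{i} \otimes \ket{j}\bra{j}$ for any orthonormal bases.

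There is no real obstacle here: the result is a two-line corollary once Theorem \ref{thm-absolute-zero-discord} and the two-qubit symmetry of discord have been established. The only minor point worth being explicit about is that the one-sided absolute classicality conditions a priori look weaker than the two-sided one, so one must invoke the equivalence $D(A|B) = 0 \iff D(B|A) = 0$ for two-qubit states, rather than attempting a direct argument.
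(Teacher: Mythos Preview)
Your approach mirrors the paper's exactly: there the theorem is stated simply ``as a corollary to the previous theorem'' via the dictionary of Theorem~\ref{thm:classicality-vs-discord}, with no further detail. You have merely made the intended reasoning explicit.

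There is, however, a subtle issue in the step where you invoke ``for two-qubit states $D(A|B)=0$ if and only if $D(B|A)=0$.'' Although this is asserted in the paper after Criterion~\ref{criterion-zero-discord-2}, it is actually false at the level of individual states: for example $\rho=\tfrac12\ket{0}\bra{0}\otimes\ket{0}\bra{0}+\tfrac12\ket{1}\bra{1}\otimes\ket{+}\bra{+}$ is classical-quantum but not quantum-classical. Your own justification (``the block-criterion is applied with $d=2$ in both directions'') conflates two genuinely different partitions of the $4\times 4$ matrix into $2\times 2$ blocks. The repair is easy and avoids this pointwise claim entirely: the proof of Theorem~\ref{thm-absolute-zero-discord} used one particular block partition and hence establishes the result for one of the two discords; the identical argument with the roles of $A$ and $B$ exchanged --- or, equivalently, noting that conjugation by the SWAP gate is itself a global unitary and interchanges CQ with QC --- handles the other one. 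The classical-classical case then follows a fortiori, since it is the conjunction of the two one-sided conditions.
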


\section{Absolute zero quantum super discord}\label{absolute-super-discord}

Similarly to the case of quantum discord, we can define absolute version of quantum super discord:
\begin{definition}
A bipartite state $\rho$ is said to have zero quantum super discord absolutely iff for any unitary operator $U$, the state $\rho' = U \rho U^\dagger$ has zero super quantum discord.
\end{definition}
The Theorem \ref{thm:super-discord-product} \cite{super-discord-2} implies that having zero super discord absolutely is equivalent to being absolutely product, where the last property is defined as follows:
\begin{definition}
A bipartite state $\rho$ is said to be an absolutely product state iff for any unitary operator $U$, the state $\rho' = U \rho U^\dagger$ is a product state.
\end{definition}

From the Theorems \ref{thm:super-discord-vs-discord} \cite{super-discord-1} and \ref{thm-absolute-zero-discord} we can conclude that
\begin{theorem} \label{thm-absolute-zero-super-discord}
The only two-qubit state that has zero super discord absolutely is the maximally mixed state $\frac{1}{4}\mathds{1}_4$.
\end{theorem}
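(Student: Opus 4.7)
The plan is to reduce the claim directly to Theorem \ref{thm-absolute-zero-discord} via the comparison between quantum discord and super quantum discord, so no new calculation is required. First I would record the pointwise inequality: by Theorem \ref{thm:super-discord-vs-discord} we have $D_w(A|B) \geq D(A|B)$ for every bipartite state, and analogously $D_w(B|A) \geq D(B|A)$. Combined with the non-negativity of ordinary quantum discord recalled in the properties list of Section~\ref{sec:super-quantum-discord} and the quantum discord section, this gives the implications $D_w(A|B) = 0 \Rightarrow D(A|B) = 0$ and $D_w(B|A) = 0 \Rightarrow D(B|A) = 0$.

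Next I would feed this into the definition of absolute zero super discord. If $\rho$ has both super discords equal to zero absolutely, then for every unitary $U$ the transformed state $U \rho U^\dagger$ has vanishing super discord (in both orderings), and by the previous step therefore vanishing ordinary quantum discord in both orderings. In other words, $\rho$ has zero quantum discord absolutely. Applying Theorem \ref{thm-absolute-zero-discord} forces $\rho = \tfrac{1}{4} \mathds{1}_4$.

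To finish I would verify the converse: the maximally mixed state is unitarily invariant, since $U \tfrac{1}{4} \mathds{1}_4 U^\dagger = \tfrac{1}{4} \mathds{1}_4$ for every $U$, and it factorises as $\tfrac{1}{2}\mathds{1}_2 \otimes \tfrac{1}{2}\mathds{1}_2$, so it is a product state. By Theorem \ref{thm:super-discord-product} it therefore has zero super quantum discord, and unitary invariance ensures this holds after any unitary is applied.

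There is no real obstacle here: the statement is essentially a corollary of Theorem \ref{thm-absolute-zero-discord}. The only thing to be careful about is to apply the comparison inequality in both orderings $D_w(A|B)$ and $D_w(B|A)$, so that the hypothesis of Theorem \ref{thm-absolute-zero-discord} (both discords vanishing) is met, and to note explicitly that the converse direction uses Theorem \ref{thm:super-discord-product} rather than the inequality, which only goes one way.
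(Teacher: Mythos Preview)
Your proposal is correct and follows exactly the route the paper takes: the paper simply states that the result follows from Theorem~\ref{thm:super-discord-vs-discord} together with Theorem~\ref{thm-absolute-zero-discord}, and your write-up spells out this reduction in full, including the converse direction via Theorem~\ref{thm:super-discord-product}, which the paper leaves implicit.
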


What is more, from the relation between between zero super discord and being a product state it follows, as a corollary, that
\begin{theorem} \label{thm-absolute-product}
The only two-qubit state that is absolutely product is the maximally mixed state $\frac{1}{4}\mathds{1}_4$. 
\end{theorem}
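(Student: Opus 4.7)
The plan is to derive this as a direct corollary of the preceding Theorem~\ref{thm-absolute-zero-super-discord}, using the characterisation of product states via super discord supplied by Theorem~\ref{thm:super-discord-product}. The key observation is that the equivalence "$\rho$ is a product state $\iff$ $\rho$ has zero super quantum discord" lifts automatically to the absolute setting, because the universal quantifier over unitaries in the definition of an absolute property interacts cleanly with a pointwise equivalence.

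First I would formalise the lifting step as follows. Fix a two-qubit state $\rho$. By Theorem~\ref{thm:super-discord-product}, for every unitary $U$ the state $U\rho U^\dagger$ is a product state if and only if $D_w(A|B) = D_w(B|A) = 0$ for $U\rho U^\dagger$. Taking the universal quantifier over $U \in U(4)$ on both sides yields that $\rho$ is absolutely product if and only if $\rho$ has absolutely zero super quantum discord.

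Next I would apply Theorem~\ref{thm-absolute-zero-super-discord}, which identifies the unique two-qubit state with absolutely zero super discord as $\tfrac{1}{4}\mathds{1}_4$. Combined with the equivalence above, this shows that $\tfrac{1}{4}\mathds{1}_4$ is the only candidate. To close the proof I would briefly verify that this candidate indeed meets the defining condition: since $\tfrac{1}{4}\mathds{1}_4 = \tfrac{1}{2}\mathds{1}_2 \otimes \tfrac{1}{2}\mathds{1}_2$ is a product state and $U \mathds{1}_4 U^\dagger = \mathds{1}_4$ for any unitary $U$, every image $U\rho U^\dagger = \tfrac{1}{4}\mathds{1}_4$ is a product state, so $\tfrac{1}{4}\mathds{1}_4$ is absolutely product.

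There is essentially no obstacle here; the entire content is packaged in Theorems~\ref{thm:super-discord-product} and \ref{thm-absolute-zero-super-discord}. The only point that deserves care is making explicit the quantifier interchange in the lifting step, so that the reader sees why a pointwise biconditional over all unitary images translates directly into an equivalence between the two absolute properties.
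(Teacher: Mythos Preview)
Your proposal is correct and follows precisely the route the paper takes: the paper states Theorem~\ref{thm-absolute-product} as an immediate corollary of Theorem~\ref{thm-absolute-zero-super-discord} via the equivalence in Theorem~\ref{thm:super-discord-product}, without writing out the quantifier-lifting argument or the verification for $\tfrac{1}{4}\mathds{1}_4$ explicitly. Your write-up simply makes these implicit steps visible.
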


%\begin{theorem}
%There exist no state that has zero super quantum discord absolutely.
%\end{theorem}
%\begin{proof}
%All products states are pure states, so there exist unitary transformation that changes them to entangled states (by Theorem \ref{the:absolute-separability-pure}), which of course are not product states. Therefore zero quantum super discord cannot be an absolute property.
%\end{proof}

\section{Absolute contextuality and noncontextuality}\label{absolute-noncontextuality}

Contextuality and non-contextuality is the only pair of properties such that both elements of the pair have non-trivial absolute counterparts. 

\begin{definition}
A bipartite state $\rho$ is called absolutely contextual/absolutely noncontextual iff for any unitary operator $U$, the state $\rho' = U \rho U^\dagger$ is contextual/noncontextual, respectively.
\end{definition}

Contextuality and noncontextuality are different from other properties considered in this thesis because they do not refer to the division of system into subsystems. Therefore, we can expect that the distinction between local and global unitary operations does not matter for preserving this properties. In fact, one can prove even stronger result:

\begin{theorem} \label{thm:absolute-contextuality}
Contextuality and noncontextuality are always absolute, i.e. if a given state $\rho$ (of arbitrary dimensionality) is contextual, then it is also absolutely contextual and if it is noncontextual, then it is also absolutely noncontextual.
\end{theorem}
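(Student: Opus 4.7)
The plan is to exploit the fact that the KCBS characterization of (non)contextuality (the theorem preceding the statement) is manifestly covariant under unitary conjugation, so the property is invariant under every unitary and hence trivially absolute. Concretely, fix a unitary $U$ and set $\ket{\psi'} = U\ket{\psi}$. If $\{P_0,P_1,P_2,P_3,P_4\}$ is any family of projectors with $[P_i,P_{i+1}]=0$ (indices mod $5$), define $Q_i := UP_iU^\dagger$. Then each $Q_i$ is again an orthogonal projector (since $Q_i^2 = UP_i^2U^\dagger = Q_i$ and $Q_i^\dagger=Q_i$), the commutation relations are preserved ($Q_iQ_{i+1}=UP_iP_{i+1}U^\dagger = UP_{i+1}P_iU^\dagger = Q_{i+1}Q_i$), and the KCBS sum is preserved in value:
\begin{equation}
\bra{\psi'}\Bigl(\textstyle\sum_{i=0}^{4}Q_i\Bigr)\ket{\psi'}
= \bra{\psi}U^\dagger\Bigl(\textstyle\sum_{i=0}^{4}UP_iU^\dagger\Bigr)U\ket{\psi}
= \bra{\psi}\Bigl(\textstyle\sum_{i=0}^{4}P_i\Bigr)\ket{\psi}.
\end{equation}

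The map $\{P_i\}\mapsto\{UP_iU^\dagger\}$ is a bijection on admissible families (its inverse is conjugation by $U^\dagger$). So the suprema of the KCBS sum over admissible families coincide for $\ket{\psi}$ and $\ket{\psi'}$. Therefore $\ket{\psi}$ violates KCBS for some family if and only if $\ket{\psi'}$ does, which by the KCBS theorem means $\ket{\psi}$ is contextual iff $\ket{\psi'}$ is contextual. Contrapositively, $\ket{\psi}$ is noncontextual iff $\ket{\psi'}$ is noncontextual. Since this holds for every unitary $U$, both properties are automatically absolute.

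I expect no substantive obstacle: the argument rests only on the algebraic fact that unitary conjugation is an automorphism of the projector lattice that preserves commutation and expectation values in the transformed state. The only mildly delicate point is the direction ``noncontextual $\Rightarrow$ absolutely noncontextual'', which one should phrase carefully: one must quantify over \emph{all} projector families when checking noncontextuality of $\ket{\psi'}$, and use the bijection above to pull any potentially violating family back to a violating family for $\ket{\psi}$, contradicting the assumed noncontextuality. The argument is dimension-independent, so the qualifier ``of arbitrary dimensionality'' in the statement poses no extra difficulty.
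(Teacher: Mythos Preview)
Your proof is correct and follows essentially the same route as the paper: conjugate the KCBS projector family by the unitary, observe that the projector property, the pairwise commutation, and the value of the KCBS sum are all preserved, and conclude that (non)contextuality is invariant under every unitary. You are in fact slightly more thorough than the paper, which only spells out the contextual direction and uses the orthogonality condition $P_iP_{i+1}=0$ rather than the commutation condition stated in the KCBS theorem; your explicit treatment of the bijection and the noncontextual direction is a welcome addition but not a different method.
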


\begin{proof}
Suppose that for a given $\rho$ there exist projectors $P_i$, $i = 1, \ldots, 5$ such
that $P_i P_{i+1} = 0$ and Tr$(\rho P) > 2$, where $\mathfrak{P} = \sum_{i=1}^{5}
P_i$. Consider rotation of $\rho$ by an arbitrary unitary matrix $U$:
$\rho' = U \rho U^\dagger$. It can be shown that for this new state
also exist projectors which are witnesses for violating of KCBS
inequality: it suffices to take $P'_i = U P_i U^\dagger$, $\mathfrak{P}' =
\sum_{i=1}^{5} P'_i = U \mathfrak{P} U^\dagger$. They satisfy the condition of
orthogonality $P'_i P'_{i+1} = U P_i U^\dagger U P_{i+1} U^\dagger = U P_i P_{i+1}
U^\dagger = 0$ and Tr$(\rho' \mathfrak{P}') =$ Tr$(U \rho U^\dagger U \mathfrak{P} U^\dagger )$ = Tr$(U \rho \mathfrak{P}
U^\dagger)$ = Tr$(U^\dagger U \rho \mathfrak{P})$ = Tr$(\rho \mathfrak{P}) >2$.
\end{proof}

One may wonder why do we not have a similar argument for nonlocality, as both contextuality and nonlocality consist of violation of certain inequality for some choice of appropriate operator ($\mathfrak{B}_{CHSH}$ or $\mathfrak{P}$, respectively). The difference lies in dissimilar criteria imposed on these operators. The CHSH operator must have a certain structure given by the equation \eqref{eq:CHSH-operator}. This structure can be spoilt by global unitary operation, whereas, as we have seen, the conditions defining $\mathfrak{P}$ are still satisfied after an arbitrary unitary operation.

\chapter{Relations between different absolute properties} \label{sec-relations-absolute}

\section{Relations for special classes of states} \label{sec-relations-absolute-special}

In Table \ref{table-absolute} there are shown ranges of parameters for which the Werner states \eqref{eq:werner} and the Gisin states \eqref{eq:gisin} are absolutely product states, have zero discord absolutely, are absolutely separable, absolutely unsteerable, absolutely local and have non-negative conditional entropy absolutely. These results have been obtained with the use of the following Theorems: \ref{thm:absolute-separability-mixed} \cite{verstraete-2001}, \ref{thm-absolute-locality} \cite{hindusi-locality}, \ref{thm-absolute-steering} \cite{hindusi-steering}, \ref{thm-absolute-entropy} \cite{hindusi-entropy}, \ref{thm-absolute-zero-discord} and \ref{thm-absolute-zero-super-discord}. 
Note that for the Gisin states the parameter $\theta$ does not matter in this context; this is because possessing absolute properties depends only on eigenvalues of the state and for the Gisin states eigenvalues depend only on parameter $\lambda$ (see Section \ref{sec:gisin}). Some of these results were already presented in the literature: \citealp{hindusi-locality} (absolute separability and absolute locality for the Werner states, absolute locality for the Gisin states), \citealp{hindusi-entropy} (absolutely non-negative conditional entropy for the Werner states), \citealp{hindusi-steering} (absolute unsteerability for the Werner states and the Gisin states). 

When we compare these results with Table \ref{table-non-absolute} as well as Fig. \ref{fig:werner} and \ref{fig:gisin}, we can observe that for the Werner states there is no difference between possessing a given property and possessing a given property absolutely. In contrast, for the Gisin states the ranges of parameters are changed in all the cases\footnote{We do not know the ranges of the parameters for which the Gisin states are unsteerable. However, we know that the set of unsteerable states must contain the set of separable states (Theorem \ref{thm-implications-2}) and that the set of absolutely unsteerable states must be contained in the set of absolutely local states (Theorem \ref{thm-absolute-implications}). As a consequence, if the sets of Gisin unsteerable states and of Gisin absolutely unsteerable states were the same, the set of separable Gisin states would be contained in the set of absolutely local Gisin states. As there are Gisin states which are separable but not absolutely local, the sets of unsteerable states and absolutely unsteerable states are not the same.}. Therefore, the equivalence between the properties and the respective absolute properties holds only for very special families of states (such as the Werner states) and in general is not true. The comparison between the properties and the respective absolute properties for Gisin states is illustrated in Fig. \ref{fig:gisin-absolute}. 

The results for the Werner states and the Gisin states allow one to distinguish between almost all of the absolute properties analysed here: only being an absolutely product state and having zero quantum discord absolutely are impossible to distinguish (and as we have seen in Section \ref{absolute-super-discord}, in fact they are generally equivalent).

\begin{table}[H]
\centering
{\def\arraystretch{2}\tabcolsep=5pt
\begin{tabular}{llll}
\hline
\multicolumn{1}{|l|}{}                                        & \multicolumn{1}{l|}{\textbf{Werner states} \eqref{eq:werner}}                               & \multicolumn{2}{l|}{\textbf{Gisin states} \eqref{eq:gisin}}                                                                                                                \\ \hline
\multicolumn{1}{|l|}{parameters}      & \multicolumn{1}{l|}{$w$}        & \multicolumn{1}{l|}{$\theta$}            & \multicolumn{1}{l|}{$\lambda$}        \\ \hline
\multicolumn{1}{|l|}{range of the parameter}  & \multicolumn{1}{l|}{$w \in [-\frac{1}{3},1]$}        & \multicolumn{1}{l|}{$\theta \in [0,\frac{\pi}{2}]$} & \multicolumn{1}{l|}{$\lambda \in [0,1]$}     \\ \hline
\multicolumn{1}{|l|}{absolutely product}                      & \multicolumn{1}{l|}{$w=0$}                                           & \multicolumn{1}{l|}{never}               & \multicolumn{1}{l|}{never}                                                                            \\ \hline
\multicolumn{1}{|l|}{absolutely zero discord}                 & \multicolumn{1}{l|}{$w=0$}                                           & \multicolumn{1}{l|}{never}               & \multicolumn{1}{l|}{never}                                                                            \\ \hline
\multicolumn{1}{|l|}{absolutely separable}                    & \multicolumn{1}{l|}{$w \in [-\frac{1}{3}, \frac{1}{3}]$}                 & \multicolumn{1}{l|}{never}               & \multicolumn{1}{l|}{never}                                                                            \\ \hline
\multicolumn{1}{|l|}{absolutely unsteerable}                  & \multicolumn{1}{l|}{$w \in [-\frac{1}{3}, \frac{1}{\sqrt{3}}]$} & \multicolumn{1}{l|}{$\theta$ arbitrary}           & \multicolumn{1}{l|}{$\lambda \in [0, \frac{2}{3}]$}                                                               \\ \hline
\multicolumn{1}{|l|}{absolutely local}                        & \multicolumn{1}{l|}{$w \in [-\frac{1}{3}, \frac{1}{\sqrt{2}}]$}         & \multicolumn{1}{l|}{$\theta$ arbitrary}           & \multicolumn{1}{l|}{$\lambda \in [0, \frac{1}{\sqrt{2}}]$}                                                        \\ \hline
\multicolumn{1}{|l|}{\parbox[t]{4cm}{absolutely non-negative \\ conditional entropy \\ $ $}} & \multicolumn{1}{l|}{$w \in [-\frac{1}{3}, w_1]$, $w_1 \approx 0.7476$} & \multicolumn{1}{l|}{$\theta$ arbitrary}           & \multicolumn{1}{l|}{$\lambda \in [0, \lambda_1]$, $\lambda_1 \approx 0.7729$} \\ \hline
\end{tabular}%
}
\caption{Selected absolute properties of the Werner states and the Gisin states in function of their parameters.}
\label{table-absolute}
\end{table}

\begin{figure}[H]%
\centering
\subfloat{{\includegraphics[scale=0.6]{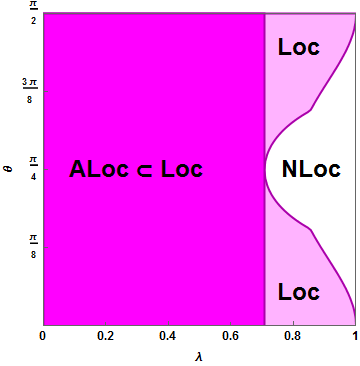} }}%
\qquad
\subfloat{{\includegraphics[scale=0.6]{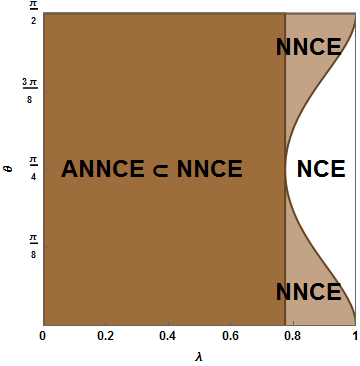} }}%
\caption{Properties of the Gisin states. Left figure: ALoc --  absolutely local states (dark purple), Loc -- local states (dark purple and light purple), NLoc -- nonlocal states (white). Right figure: ANNCE -- absolutely non-negative conditional entropy (dark brown), NNCE -- non-negative conditional entropy (light brown and dark brown), NCE -- negative conditional entropy (white).}
\label{fig:gisin-absolute}
\end{figure}

\section{General relations} \label{sec-relations-absolute-general}

The relations between absolute versions of the properties are very much similar to the relations between 'ordinary' versions of these properties, as summarised in the following theorem:

\begin{theorem} \label{relationAB}
Assume that for any quantum state possessing the property $A$ implies possessing the property $B$. It follows that if a given state $\rho$ has the property $A$ absolutely, then it also has the property $B$ absolutely.
\end{theorem}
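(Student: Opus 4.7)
The plan is to unwind the definition of an absolute property and then apply the hypothesis pointwise for each unitary. The statement is essentially a formal transfer principle: if $A \Rightarrow B$ holds for \emph{every} quantum state, then in particular it holds for every state in the unitary orbit of $\rho$, and this is exactly what is needed to lift the implication to the absolute level.

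More concretely, I would begin by fixing an arbitrary unitary operator $U$ acting on the Hilbert space of $\rho$, and consider the rotated state $\rho' = U \rho U^\dagger$. By the assumption that $\rho$ has the property $A$ absolutely, $\rho'$ possesses $A$. Since the general implication $A \Rightarrow B$ is assumed to hold for \emph{any} quantum state (and $\rho'$ is a quantum state), applying it to $\rho'$ yields that $\rho'$ possesses $B$. Because $U$ was arbitrary, every state in the orbit $\{U\rho U^\dagger : U \text{ unitary}\}$ has property $B$, which is precisely the definition of $\rho$ having $B$ absolutely.

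I do not expect any genuine obstacle here; the argument is a one-line quantifier manipulation and requires no properties of $A$ or $B$ beyond the universal quantification over states in the hypothesis. The only point worth flagging is that the same Hilbert space (and, for bipartite properties, the same factorisation into $\mathcal{H}_A \otimes \mathcal{H}_B$) is used both when defining ``$\rho$ has $A$ absolutely'' and when invoking the implication $A \Rightarrow B$; since unitary conjugation preserves the ambient space, this is automatic. Hence the proof is essentially a two-step chain: absolute-$A$ gives $A$ on the whole unitary orbit, and pointwise $A \Rightarrow B$ converts this to $B$ on the whole unitary orbit, which is absolute-$B$.
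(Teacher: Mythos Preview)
Your proposal is correct and follows exactly the same line as the paper's own proof: fix an arbitrary unitary $U$, use absolute-$A$ to get $A$ for $\rho' = U\rho U^\dagger$, apply the pointwise implication $A \Rightarrow B$ to obtain $B$ for $\rho'$, and conclude absolute-$B$ since $U$ was arbitrary. The additional remark you make about the ambient Hilbert space and factorisation being preserved under unitary conjugation is a reasonable clarification, but it is not needed for the argument as stated in the paper.
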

\begin{proof}
Assume that $\rho$ has the property $A$ absolutely and that for any quantum state, possessing the property $A$ implies possessing the property $B$. Let us transform the state $\rho$ by some unitary operator $U$, obtaining $\rho' = U \rho U^\dagger$. Because $\rho$ has the property $A$ absolutely, $\rho'$ must have the property $A$. Therefore, from the implication, $\rho'$ must have the property $B$. As the unitary operator $U$ was arbitrary, it follows that $\rho$ has the property $B$ absolutely.
\end{proof}

From the Theorem \ref{relationAB} it follows that the Theorem \ref{thm-implications-2} is true for absolute versions of the properties as well:
\begin{theorem} \label{thm-absolute-implications} For any bipartite state $\rho$ the following implications hold: 
\begin{enumerate}
\item $\rho$ is a product state absolutely, i.e. $\rho$ has zero super quantum discord $D_w (A|B) = D_w (B|A) =0$ absolutely $\Rightarrow$ $\rho$ is absolutely classical-classical, i.e. $\rho$ has both quantum discords zero $D(A|B) = B(B|A) =0$ absolutely,
\item $\rho$ is absolutely classical-classical $\Rightarrow$ $\rho$ is both absolutely classical-quantum $D(B|A) =0$ and absolutely quantum-classical $D(A|B) =0$,
\item $\rho$ is either absolutely classical-quantum or absolutely quantum-classical $\Rightarrow$ $\rho$ is absolutely separable,
\item $\rho$ is absolutely separable $\Rightarrow$ $\rho$ is absolutely unsteerable (with any number of settings),
\item $\rho$ is absolutely unsteerable with $n+1$ settings $\Rightarrow$ $\rho$ is absolutely unsteerable with $n$ settings,
\item $\rho$ is absolutely unsteerable (with any number of settings) $\Rightarrow$ $\rho$ is absolutely local.
\end{enumerate}
\end{theorem}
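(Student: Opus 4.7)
The proof is essentially a mechanical application of Theorem \ref{relationAB} to each of the six implications listed in Theorem \ref{thm-implications-2}. The plan is to observe that Theorem \ref{relationAB} is a purely formal meta-statement: whenever one property of bipartite states implies another on the level of all states, the same implication lifts to their absolute versions. Since every implication in Theorem \ref{thm-implications-2} is of exactly this form (a universal statement of the shape ``for any bipartite $\rho$, property $A$ implies property $B$''), each of the six implications in the current theorem follows by a single invocation of Theorem \ref{relationAB}.

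Concretely, I would structure the argument as six one-line applications: for item 1 set $A$ to ``product / zero super quantum discord'' and $B$ to ``classical-classical''; for item 2 set $A$ to ``classical-classical'' and $B$ to either ``classical-quantum'' or ``quantum-classical'' (applied twice); for item 3 set $A$ to ``classical-quantum'' (resp.\ ``quantum-classical'') and $B$ to ``separable''; for item 4 set $A$ to ``separable'' and $B$ to ``unsteerable with $n$ settings'' for each $n$; for item 6 set $A$ to ``unsteerable'' and $B$ to ``local''. In each case the corresponding non-absolute implication is already recorded in Theorem \ref{thm-implications-2}, so Theorem \ref{relationAB} delivers the absolute version.

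Item 5 is slightly different in flavour because it compares the same property at two different parameter values rather than two distinct properties, but it fits the same template: the non-absolute implication ``unsteerable with $n+1$ settings $\Rightarrow$ unsteerable with $n$ settings'' was proved in Theorem \ref{thm-implications-2} (item 5 of that theorem, which is itself immediate from the definition of steerability: a strategy that fails with fewer measurement settings certainly still fails with more). Once this is in hand, applying Theorem \ref{relationAB} with $A$ being ``unsteerable with $n+1$ settings'' and $B$ being ``unsteerable with $n$ settings'' closes the case.

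There is no serious obstacle here; the only point requiring mild care is to make sure one is citing Theorem \ref{thm-implications-2} for the correct underlying non-absolute implication in each of the six sub-cases, and to note explicitly that the classes of unsteerable states indexed by the number of settings form a nested family so the lifting in item 5 is legitimate. Because every step is a direct invocation of an already-proved result, the entire proof can be reduced to a single sentence: apply Theorem \ref{relationAB} to each implication in Theorem \ref{thm-implications-2}.
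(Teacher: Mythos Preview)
Your proposal is correct and matches the paper's own proof exactly: the paper simply states that Theorem \ref{thm-absolute-implications} follows from Theorem \ref{relationAB} applied to each implication in Theorem \ref{thm-implications-2}. Your write-up is in fact more detailed than the paper's one-line justification, but the approach is identical.
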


The implication from absolute separability to absolute locality has been already noted in \cite{hindusi-locality-2} and \cite{hindusi-locality}. The relation between the above properties and absolute non-negativity of conditional entropy is less understood. Its relation with absolute separability is known:

\begin{theorem} [\citealp{hindusi-entropy}]
The class of absolutely separable two-qubit states forms a proper subset of the class of two-qubit states that have non-negative conditional entropy absolutely.
\end{theorem}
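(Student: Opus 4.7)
The plan is to reduce the statement to a purely spectral comparison via the two characterisations already established. Theorem \ref{thm:absolute-separability-mixed} says absolute separability for a mixed two-qubit state $\rho$ with ordered spectrum $d_1 \geq d_2 \geq d_3 \geq d_4 \geq 0$ is equivalent to $d_1 - d_3 \leq 2\sqrt{d_2 d_4}$, while Theorem \ref{thm-absolute-entropy} says absolute non-negativity of conditional entropy is equivalent to $S(\rho) = -\sum_i d_i \log d_i \geq 1$. Since both properties depend only on the eigenvalues of $\rho$, and in view of Fact \ref{local-unitary-does-not-change} and Theorem \ref{vonneumann-shannon}, the theorem reduces to showing (i) the set $\mathcal{A} := \{\vec{d} : d_1 - d_3 \leq 2\sqrt{d_2 d_4}\}$ (intersected with the probability simplex and ordering constraints) is contained in $\mathcal{E} := \{\vec{d} : H(\vec{d}) \geq 1\}$, and (ii) this containment is strict.

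For the inclusion $\mathcal{A} \subseteq \mathcal{E}$, I would prove the entropy inequality by a constrained minimisation: minimise $H(\vec{d})$ over $\vec{d} \in \mathcal{A}$ subject to $\sum_i d_i = 1$ and $d_1 \geq d_2 \geq d_3 \geq d_4 \geq 0$. Because $H$ is strictly concave and $\mathcal{A}$ is closed, the minimum is attained on the boundary $d_1 - d_3 = 2\sqrt{d_2 d_4}$. On this boundary I would substitute $d_1 = d_3 + 2\sqrt{d_2 d_4}$, reducing to three free parameters, and apply the Karush–Kuhn–Tucker conditions. I expect the critical-point analysis to show the infimum equals exactly $1$, attained at configurations where the eigenvalue vector looks like $(\tfrac{1}{2},\tfrac{1}{2},0,0)$ or permutations consistent with the ordering. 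A cleaner shortcut might be to use majorisation: show that every $\vec{d} \in \mathcal{A}$ is majorised by $(\tfrac{1}{2},\tfrac{1}{2},0,0)$, whence $H(\vec{d}) \geq H(\tfrac{1}{2},\tfrac{1}{2},0,0) = 1$ by Schur-concavity of the Shannon entropy.

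For the strict properness, I would point to Table \ref{table-absolute}: the Werner state with, say, $w = \tfrac{1}{2}$ has eigenvalues $(\tfrac{5}{8}, \tfrac{1}{8}, \tfrac{1}{8}, \tfrac{1}{8})$ and so fails the absolute separability bound $d_1 - d_3 - 2\sqrt{d_2 d_4} = \tfrac{5}{8} - \tfrac{1}{8} - 2 \cdot \tfrac{1}{8} = \tfrac{1}{4} > 0$, yet its entropy is $-\tfrac{5}{8}\log\tfrac{5}{8} - \tfrac{3}{8}\log\tfrac{1}{8} \approx 1.549 > 1$, so it lies in $\mathcal{E} \setminus \mathcal{A}$. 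This single witness establishes the strict inclusion.

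The main obstacle is step (ii) of the inclusion argument. The region $\mathcal{A}$ is defined by a nonlinear inequality and the ordering cone, so the minimiser of $H$ could sit at a non-smooth corner (for instance where $d_4 = 0$ or where two of the ordering inequalities become equalities simultaneously), and several constraints may be active at once. Handling these corners properly -- and verifying that the minimum along the analytic boundary is indeed $1$ rather than something smaller -- is the delicate point; the majorisation route, if it works, would sidestep this case analysis entirely and is the approach I would try first.
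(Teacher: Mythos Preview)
Your proposal is correct; the majorisation route in particular goes through cleanly: from $d_1 - d_3 \leq 2\sqrt{d_2 d_4} \leq d_2 + d_4$ (AM--GM) one gets $2d_1 \leq d_1 + d_2 + d_3 + d_4 = 1$, i.e.\ $d_1 \leq \tfrac{1}{2}$, and the remaining partial-sum conditions for majorisation by $(\tfrac{1}{2},\tfrac{1}{2},0,0)$ are trivial, so Schur-concavity of $H$ gives $H(\vec d) \geq 1$ immediately. Your Werner-state witness for strictness is also fine.

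The paper, however, takes a much shorter path. For the inclusion it simply invokes Theorem~\ref{relationAB}: since separability implies non-negative conditional entropy for \emph{every} state (the Cerf--Adami result cited earlier in the thesis), absolute separability implies absolute non-negativity of conditional entropy automatically, with no spectral analysis whatsoever. For strictness the paper points to the Gisin states in Table~\ref{table-absolute}, which are never absolutely separable yet have absolutely non-negative conditional entropy for $\lambda \in [0,\lambda_1]$. Your direct spectral computation reaches the same conclusion and even exhibits the explicit majorisation structure as a bonus, but the paper's route makes clear that the inclusion is a corollary of a general monotonicity principle rather than an eigenvalue coincidence, and avoids the KKT/corner-case worries you flagged.
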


\begin{proof}
Being a subset follows from the Theorem \ref{relationAB}. Being a proper subset follows from the fact, that there exist two-qubit states which have non-negative conditional entropy absolutely but are not absolutely separable. Examples of such states are Gisin states for $\lambda \in [0, \lambda_1]$, $\lambda_1 \approx 0.7729$ and arbitrary $\theta$ (see Table \ref{table-absolute}).
\end{proof}

The relation between absolutely local two-qubit states and two-qubit states that have non-negative conditional entropy absolutely is in general not known. For the Werner states absolute separability implies absolute non-negative conditional entropy, as the Werner states are absolutely separable for $w \leq \frac{1}{3}$ and have non-negative conditional entropy absolutely for $w \leq w_e$, where $w_e $ is the solution of the equation $3(1-w_e) \log(1-w_e)+(1+3w_e) \log(1+3w_e) = 4$ and its numerical value is $w_e \approx 0.7476$ \cite{hindusi-entropy}. As everywhere else, by '$\log$' we mean here logarithm to base 2.

\begin{figure}[h]
\centering
\includegraphics[scale=0.8]{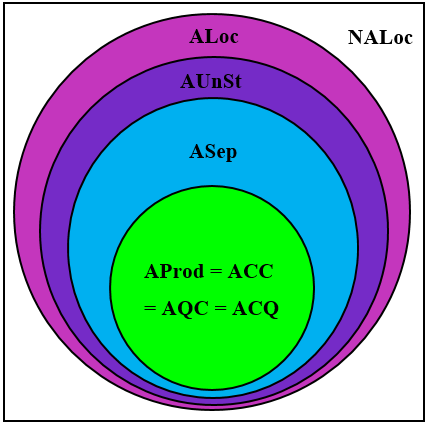}
\caption{Relations between different absolute properties: AProd -- absolutely product states, ACC -- absolutely classical-classical states (with both discords equal zero absolutely), ACQ -- absolutely classical-quantum states, AQC -- absolutely quantum-classical states, ASep -- absolutely separable states, AUnSt -- absolutely unsteerable states, ALoc -- absolutely local states. Note that the green circle denotes only one point ($\frac{1}{4}\mathds{1}_4$), so the figure is out of scale.}
\label{fig:absolute}
\end{figure}
\FloatBarrier

\chapter{Concluding remarks}

The aim of this thesis was to provide a review of properties of quantum states analysed in the literature, relations between them and their behaviour under unitary transformations. For each property X, its 'absolute' version (the property of being absolutely X) was introduced: a quantum state $\rho$ is said to be an absolutely X state iff for any unitary operator $U$, the state $\rho' = U \rho U^\dagger$ has the property X. 

The following new results have been proved:
\begin{itemize}
\item The only two-qubit state that has zero discord absolutely is the maximally mixed state $\mathds{1}_4$ --- see Theorem \ref{thm-absolute-zero-discord};
\item The only two-qubit state that has zero super discord absolutely is the maximally mixed state $\mathds{1}_4$ --- see Theorem \ref{thm-absolute-zero-super-discord});
\item Contextuality and noncontextuality are always absolute, i.e. if a given state $\rho$ (of arbitrary dimensionality) is contextual, then it is also absolutely contextual and if it is noncontextual, then it is also absolutely noncontextual --- see Theorem \ref{thm:absolute-contextuality};
\item If for any quantum state possessing the property $A$ implies possessing the property $B$, then also possessing the property $A$ absolutely implies possessing the property $B$ absolutely --- see Theorem \ref{relationAB}.
\end{itemize}

With regard to specific classes of states, for the Gisin states the range of parameters for which they are product, zero discord, absolutely product and absolutely zero discord have been determined --- see Table \ref{table-non-absolute} in Section \ref{sec-relations-special} and Table \ref{table-absolute} in Section \ref{sec-relations-absolute-special}. 

The presented results do not exhaust the topic. The following issues remain for further study:
\begin{itemize}
\item The theorems concerning absolute zero discord and absolute product states have been proven only for two-qubit states. The conjecture that they also hold for higher dimensions seems to be natural. Given the parameterisation of unitary matrix for the dimension $m \times n$, one can extend the method used in the proof of the Theorem \ref{thm-absolute-zero-discord} to check the conjecture for the $m \times n$ case. However, the number of equations to solve will be large and, what is worse, this method could not be used to confirm the conjecture in its full generality. Therefore, some other methods are required.
\item There is a hierarchy between properties of quantum states described in Theorems \ref{thm-implications-1}, \ref{thm-implications-2} and \ref{thm-absolute-implications}. The natural question to ask is whether one can find some other property that can be included into this hierarchy.
\item The property of (non-)negative conditional entropy does not belong to the hierarchy because of its relation with nonlocality (see Theorem \ref{nonlocality-entropy}, \citealp{friis}). Its relation with some other properties is not known, both for the absolute and for the 'ordinary' case, so this is another issue for future investigations.
\item The notion of nonlocality analysed here is not the most general one: we have discussed only CHSH nonlocality, which is of the (2, 2, 2) type (two subsystems, two measurements and two outcomes). The reason for this restriction is the fact that only for this case a general and easy-to-use criterion is known (Criterion \ref{criterion-chsh}, \citealp{horodeccy-1995}). It would be interesting to analyse the problem in full generality and see the relations between other types of locality/nonlocality and the rest of properties analysed in this thesis, both for the absolute and for the 'ordinary' case.
\end{itemize}

\chapter{Acknowledgements}

I would like to thank my supervisor, prof. Karol Życzkowski, for his help in preparing this thesis. I want also thank Dardo Goyeneche and Konrad Szymański for useful discussions.


\begin{thebibliography}{}
\bibitem[Ali et al. (2010)]{ali-2010}
Ali, M. et. al. (2010). Quantum discord for two-qubit X states. Physical Review A \textbf{81}, 042105.

\bibitem[Aharonov et al. (1988)]{aharonov-1988}
Aharonov, Y. et al. (1988). How the result of a measurement of a component of the spin of a spin-1/2 particle can turn out to be 100. Physical Review Letters \textbf{60}, 1351.

\bibitem[Bell (1987)]{bell}
Bell, J. S. (1987). Speakable and Unspeakable in Quantum Mechanics. Cambridge: Cambridge University Press.

\bibitem[Bengtsson and Życzkowski (2017)]{bengtsson-zyczkowski}
Bengtsson, I. and Życzkowski, K. (2017). Geometry of Quantum States: an Intruduction to Quantum Entanglement. Cambridge: Cambridge University Press.

\bibitem[Bera et al. (2018)]{bera2018}
Bera, A., et.al. (2018). Quantum discord and its allies: a review of
recent progress. Reports on Progress in Physics \textbf{81}, 024001.

\bibitem[Bhattacharya et al. (2018)]{hindusi-steering}
Bhattacharya, S. S., et. al. (2018). Absolute non-violation of a
  three-setting steering inequality by two-qubit states. Quantum Information
  Processing 17:3.

\bibitem[Brunner et al. (2014)]{brunner}
Brunner, N., et al. (2014). Bell nonlocality. Reviews of Modern
Physics \textbf{86} 419.

\bibitem[Calvacant et al. (2009)]{calvacanti-2009}
Calvacanti, D. et al. (2009). Experimental criteria for steering and the Einstein-Podolsky-Rosen paradox. Physical Review A \textbf{80}, 032112.

\bibitem[Cavalcanti and Skrzypczyk (2017)]{calvacanti-2017}
Cavalcanti, D. and Skrzypczyk, P. (2017). Quantum steering: a review with focus on semidefinite programming. Reports on Progress in Physics \textbf{80}, 024001.

\bibitem[Cerf and Adami (1999)]{cerf-adami}
Cerf, N.J. and Adami, C. (1999). Quantum extension of conditional probability.
Phys. Rev. A 60 893.

\bibitem[Clauser et al. (1969)]{chsh}
Clauser, J.F. et al. (1969). Proposed experiment to test local hidden-variable theories, Physical Review Letters \textbf{23} 880.

\bibitem[Dakić et al. (2010)]{dakic}
Dakić, B. et al. (2010). Necessary and sufficient condition for nonzero quantum discord. Physical Review Letters \textbf{105}, 190502.

\bibitem[Datta et al. (2008)]{datta-2008}
Datta, A. et al. (2008). Quantum discord and the power of one qubit. Physical Review Letters \textbf{100}, 050502.

\bibitem[Fano (1983)]{fano-1983}
Fano, U. (1983). Pairs of two-level systems. Reviews of Modern Physics \textbf{55}, 855.

\bibitem[Fanchini et al. (2017)]{lectures-correlations}
 Fanchini, F.F., Pinto, D. de O.S., and Adesso, G. (2017). Lectures on General Quantum
Correlations and their Applications, Springer. 

\bibitem[Ferraro et al. (2010)]{ferraro-2010}
Ferraro, A. et. al. (2010). Almost all quantum states have nonclassical correlations. Physical Review A \textbf{81}, 052318.

\bibitem[Friis (2017)]{friis}
Friis, N., Bulusu, S., and Bertlmann, R. A. (2017).
Geometry of two-qubit states with negative conditional entropy.
J. Phys. A: Math. Theor. \textbf{50}, 125301.

\bibitem[Ganguly et al. (2017)]{hindusi-locality}
Ganguly, N. et. al. (2017). Bell-CHSH violation under global unitary operations: necessary and sufficient conditions. https://arxiv.org/abs/1611.05586, unpublished.

\bibitem[Girdhar and Cavalcanti (2016)]{girdhar-calvacanti-2016}
Girdhar, P. and Cavalcanti, E. G. (2016). All two-qubit states that are steerable via Clauser-Horne-Shimony-Holt-type correlations are Bell nonlocal. Physical Review A \textbf{94}, 032317.

\bibitem[Gisin (1996)]{gisin1996}
Gisin, N. (1996). 
Hidden quantum nonlocality revealed by local filters. 
Phys. Lett. A \textbf{210} 151.

\bibitem[Hildebrand (2007)]{hildebrand-2007}
Hildebrand, R. (2007). Positive partial transpose from spectra. Physical Review A \textbf{76}, 052325.

\bibitem[Horodeccy (1995)]{horodeccy-1995}
Horodecki, R., Horodecki, P., and Horodecki, M. (1995). Violating Bell
inequality by mixed spin-1/2 states: necessary, sufficient condition.
Physical Letters A \textbf{200}, 340. 

\bibitem[Horodeccy (1996)]{horodecki-1996}
Horodecki, R. and Horodecki, M. (1996a). Information-theoretic aspects
of inseparability of mixed states. Physical Review A \textbf{53}(3).

\bibitem[Horodeccy (1996)]{horodecki-1996b}
Horodecki, M., P. Horodecki, and R. Horodecki (1996b). Separability of mixed states: necessary and sufficient conditions. Phys.
Lett. A 223, 1.

\bibitem[Horodecki et. al. (2005)]{horodecki-2005}
Horodecki, M., Oppenheim, J., and Winter, A. (2005). 
Partial quantum information.
Nature 436 673.

\bibitem[Horodeccy (2009)]{horodeccy-entanglement}
Horodecki R. et al. (2009). Quantum entanglement. Reviews of Modern Physics \textbf{81}, 865.

\bibitem[Huang et al. (2011)]{huang2011}
Huang, J.H, Wang, L., and Zhu, S.-Y. (2011). A new criterion for zero quantum discord. New Journal of Physics 13 063045.

\bibitem[Jerge et al. (2016)]{jerge}
Jerge, M. et al. (2016). Contextuality without nonlocality in a superconducting quantum system. Natute Communications 7, 12930.

\bibitem[Johnston (2013)]{johnston-2013}
Johnston, N. (2013). Separability from spectrum for qubit-qudit states. Physical Review A \textbf{88}, 062330.

\selectlanguage{polish}
\bibitem[Jurkowski (2014)]{jurkowski}
Jurkowski, J. (2014). Korelacje nieklasyczne. Kwantowe spl\k{a}tanie i dyskord. Toruń: Wydawnictwo Naukowe Uniwersytetu Mikołaja Kopernika.
\selectlanguage{english}

\bibitem[Kitajima (2017)]{kitajima-2017}
Kitajima, Y. (2017). A state-dependent noncontextuality inequality in algebraic quantum theory. Physics Letters A \textbf{381}, 2305-2312.

\bibitem[Klyachko et al. (2008)]{kcbs}
Klyachko, A. A., et al. (2008). Simple Test for Hidden Variables in Spin-1 Systems. Physical Review Letters \textbf{101}, 020403.

\bibitem[Kochen and Specker (1967)]{kochen-specker}
Kochen, S. and Specker, E.P. (1967). Problem of hidden variables in quantum mechanics. Journal of Mathematics and Mechanics 17, 59.

\bibitem[Kraus and Cirac (2001)]{kraus-2001}
Kraus, B. and Cirac, I. J. (2001). Optimal creation of entanglement using a two-qubit gate. Physical Review A \textbf{63}, 062309.

\bibitem[Kuś and Życzkowski (2001)]{kus-zyczkowski-2001}
Kuś, M. and Życzkowski, K. (2001). Geometry of entangled states. Physical Review A \textbf{63}, 032307.

\bibitem[Li et al. (2014)]{super-discord-2}
Li, B. et al. (2014). Non-zero total correlation means non-zero quantum correlation. Physics Letters 378, 1249-1253.

\bibitem[Luo (2008)]{luo-discord-weyl}
Luo, S. (2008). Quantum discord for two-qubit systems. Physical Review A \textbf{77}, 042303.

\bibitem[Nguyen and Vu (2016)]{nguyen-vu}
Nguyen, H. C. and Vu, T. (2016). Necessary and sufficient condition for steerability of two-qubit states by the geometry of steering outcomes. Europhysics Letters \textbf{115}, 10003.

\bibitem[Nielsen and Chuang (2000)]{nielsen-chuang}
Nielsen, M.A., and Chuang, I.L. (2000). Quantum Computation and
Quantum Information, Cambridge: Cambridge University Press.

\bibitem[Ollivier and Zurek (2001)]{ollivier-zurek}
Ollivier, H. and Zurek, W. H. (2001). Quantum discord: A measure of
the quantumness of correlations. Physical Review Letters \textbf{88}, 017901.

\bibitem[Oreshkov and Brun (2005)]{oreshkov-2005}
Oreshkov, O. and Brun, A. A. (2005). Weak measurements are universal. Physical Review Letters \textbf{95}, 110409.
 
\bibitem[Patro et al. (2017)]{hindusi-entropy}
 Patro, S., Chakrabarty, I., and Ganguly,
  N. (2017). Non-negativity of conditional von Neumann entropy and
  global unitary operations. Physical Review A \textbf{96}, 062102.

\bibitem[Peres (1996)]{ppt}
Peres, A. (1996). Separability criterion for density matrices. Physical Review Letters \textbf{77}, 1413.

\bibitem[Roy et al. (2017)]{hindusi-locality-2}
Roy, A. et. al. (2017). Characterization of nonlocal resources under global unitary action. Quantum Studies: Mathematics and Foundations (online).

\bibitem[Singh and Pati (2014)]{super-discord-1}
Singh, U. and Pati, A. K. (2014). Quantum discord with weak measurements. Annals of Physics 343, 141-152.

\bibitem[Schrödinger (1935)]{schrodinger-1935}
Schrödinger, E. (1935). Discussion of probability relations between separated systems. Mathematical Proceedings of the Cambridge Philosophical Society 31 555.

\bibitem[Thirring (2011)]{thirring2011}
Thirring, W., et al. (2011). Entanglement or separability: the
  choice of how to factorize the algebra of a density matrix. The
  European Physical Journal D 64, 181–196.

\bibitem[Vedral (2006)]{vedral2006}
Vedral, V. (2006). Introduction to Quantum Information
Science. Oxford: Oxford University Press.

\bibitem[Verstraete et al. (2001)]{verstraete-2001}
Verstraete, F., Audenaert, K. and DeMoor, B. (2001). Maximally entangled mixed states of two qubits. Phys. Rev. A \textbf{64}, 012316.

\bibitem[von Neumann (1927)]{von-neumann-1927}
von Neumann, J. (1927). Thermodynamik quantummechanischer Gesamtheiten. Gott. Nach. 1, 273.

\bibitem[Werner (1989)]{werner1989}
Werner, R. F. (1989). 
Quantum states with Einstein–Podolsky–Rosen correlations admitting a
hidden-variable model. 
Physical Review A \textbf{40}, 4277.

\bibitem[Wiseman (2007)]{wiseman-2007}
Wiseman, H. M., Jones, S. J. and Doherty, A. C. (2007). Steering, entanglement, nonlocality, and the Einstein-Podolsky-Rosen paradox. Physical Review Letters \textbf{98}, 140402.

\bibitem[Yu et al. (2018)]{yu-2018}
Yu, B.-C. et al. (2018). Geometric steering criterion for two-qubit states. Physical Review A \textbf{97}, 012130.

\bibitem[Życzkowski (1998)]{zyczkowski-1998}
Życzkowski, K., et al. (1998). Volume of the set of separable states. Physical Review A \textbf{58}, 883.
\end{thebibliography}
 \end{document}